\theoremstyle{plain}
\newtheorem{theorem}{Theorem}[section]
\newtheorem{proposition}[theorem]{Proposition}
\newtheorem{lemma}[theorem]{Lemma}
\newtheorem{corollary}[theorem]{Corollary}
\newtheorem{observation}[theorem]{Observation}
\theoremstyle{remark}
\newtheorem{remark}[theorem]{Remark}
\newcommand{\R}{\mathbb{R}}
\newcommand{\Q}{\mathbb{Q}}
\newcommand{\Z}{\mathbb{Z}}
\newcommand{\plusgg}{+_{\group'}}
\newcommand{\zerogg}{0_{\group'}}
\newcommand{\leqgg}{\leq_{\group'}}
\newcommand{\lessg}{<_{\group}}
\newcommand{\greatg}{>_{\group}}
\newcommand{\lessa}{<_{\Omega}}
\newcommand{\leqlex}{\leq_{\text{lex}}}
\newcommand{\arrange}{\mathcal{A}}
\newcommand{\sign}{\mathsf{sign}}
\newcommand{\infer}{\mathsf{infer}}
\newcommand{\supp}{\mathsf{supp}}
\newcommand{\monoid}{\mathbb{M}}
\newcommand{\monoidalphabet}{\Sigma_\monoid}
\newcommand{\concatm}{\circ_{\monoid}}
\newcommand{\group}{\mathbb{G}}
\newcommand{\plusg}{+_{\group}}
\newcommand{\leqg}{\leq_{\group}}
\newcommand{\zerog}{0_{\group}}
\DeclareMathOperator{\sourceName}{\mathsf{source}}			
\DeclareMathOperator{\targetName}{\mathsf{target}}			
\newcommand{\sourcefunc}[1]{\sourceName(#1)}	
\newcommand{\targetfunc}[1]{\targetName(#1)}
\DeclareMathOperator{\weightName}{\mathsf{weight}}			
\newcommand{\weightfunc}[1]{\weightName(#1)}	
\newcommand{\weightnumber}[1]{\mathsf{\#weights}(#1)}	
\DeclareMathOperator{\labelName}{\mathsf{label}}			
\newcommand{\labelfunc}[1]{\labelName(#1)}
\DeclareMathOperator{\bigO}{O}
\DeclareMathOperator{\Oh}{O}
\DeclareMathOperator{\bigo}{O}
\DeclareMathOperator{\Markers}{\mathcal{M}}
\DeclareMathOperator{\emptyMarker}{\diamond}
\newcommand{\doc}{\mathsf{D}}
\newcommand{\outputfunc}[1]{\mathsf{out}(#1)}
\DeclareMathOperator{\enumProbTransducer}{\textsf{CT-Enum}}
\DeclareMathOperator{\enumProbGraph}{\textsf{SP-Enum}}
\DeclareMathOperator{\poly}{poly}
\DeclareMathOperator{\eword}{\varepsilon}
\newcommand{\ta}{\ensuremath{\mathtt{a}}}
\newcommand{\tb}{\ensuremath{\mathtt{b}}}
\newcommand{\tc}{\ensuremath{\mathtt{c}}}
\newcommand{\varset}{\ensuremath{\mathcal{X}}}
\newcommand{\open}[1]{\tensor[^{}]{\langle}{_{#1}}}
\newcommand{\close}[1]{\tensor[_{#1}]{\rangle}{^{}}}
\DeclareMathOperator{\powerset}{\mathcal{P}}
\newcommand{\varsx}{\ensuremath{\mathsf{x}}}
\newcommand{\varsy}{\ensuremath{\mathsf{y}}}
\newcommand{\varsz}{\ensuremath{\mathsf{z}}}
\begin{document}

\title{Revisiting Weighted Information Extraction: A Simpler and Faster Algorithm for Ranked Enumeration}
\date{}

\author[1]{Pawe\l{} Gawrychowski}
\author[2]{Florin Manea}
\author[3]{Markus L. Schmid}

\affil[1]{University of Wroc\l{}aw, Wroc\l{}aw, Poland, \texttt{gawry@cs.uni.wroc.pl}}
\affil[2]{Computer Science Department and CIDAS, Universität Göttingen, Göttingen, Germany, \texttt{florin.manea@informatik.uni-goettingen.de}}
\affil[3]{Humboldt-Universit\"at zu Berlin, Unter den Linden 6, D-10099, Berlin, Germany, \texttt{MLSchmid@MLSchmid.de}}

\maketitle

\begin{abstract}
Information extraction from textual data, where the query is represented by a finite transducer and the task is to enumerate all results without repetition, and its extension to the weighted case, where each output element has a weight and the output elements are to be enumerated sorted by their weights, are important and well studied problems in database theory. On the one hand, the first framework already covers the well-known case of regular document spanners, while the latter setting covers several practically relevant tasks that cannot be described in the unweighted setting. 

It is known that in the unweighted case this problem can be solved with linear time preprocessing $O(|D|)$ and output-linear delay $O(|s|)$ in data complexity, where $D$ is the input data and $s$ is the current output element. For the weighted case, Bourhis, Grez, Jachiet, and Riveros [ICDT 2021] recently designed an algorithm with linear time preprocessing, but the delay of $O(|s| \cdot \log|\doc|)$ depends on the size of the data.

We first show how to leverage the existing results on enumerating shortest paths to obtain a simple alternative algorithm with linear preprocessing and a delay of $O(|s_i| + \min\{ \log i, \log|\doc|\})$ for the $i^{\text{th}}$ output element $s_i$ (in data complexity); thus, substantially improving the previous algorithm. Next, we develop a technically involved rounding technique that allows us to devise an algorithm with linear time preprocessing and output-linear delay $O(|s|)$ with high probability. To this end, we combine tools from algebra, high-dimensional geometry, and linear programming.
\end{abstract}

\section{Introduction}\label{sec:intro}

The term \emph{information extraction} usually refers to the task of compiling structured information from text documents. Its most famous instance is the framework of so-called \emph{document spanners} introduced in the seminal paper~\cite{FaginEtAl2015} (see the surveys~\cite{SchmidSchweikardt2022,DBLP:journals/sigmod/AmarilliBMN20, Schmid2024} or~\cite{BourhisEtAl2021, DoleschalEtAl2023, FreydenbergerThompson2022, MunozRiveros2023} for some recent publications on document spanners). A document spanner (over a set $\varset$ of variables) is a function that maps a document $\doc$ over some alphabet $\Sigma$ to a finite table with a column for each variable from $\varset$ and the entries of the cells of the table are just pairs of positions of $\doc$. This can be illustrated as follows (where $\Sigma = \{\ta, \tb, \tc\}$ and $\varset = \{\varsx, \varsy, \varsz\}$):

\begin{center}
$\doc = \ta \tb \tb \ta \tb \tc \tc \ta \tb \tc$ \hspace{0.5cm} $\implies$ \hspace{0.5cm}
\begin{tabular}{|l|l|l|}\hline
$\varsx$ & $\varsy$ & $\varsz$\\\hline\hline
$(2, 5)$ & $(4, 7)$ & $(1, 10)$\\\hline
$(3, 5)$ & $(5, 8)$ & $(4, 7)$\\\hline
$(1, 3)$ & $(3, 10)$ & $(2, 4)$\\\hline
$\ldots$ & $\ldots$ & $\ldots$\\\hline
\end{tabular}
\end{center}

\smallskip

The pairs $(i, j)$ are called \emph{spans} and are interpreted as pointers to factors of $\doc$, e.\,g., $(4, 7)$ refers to $\ta \tb \tc \tc$, since this is the factor that starts at position $4$ and ends at position $7$. A row of the table is called a \emph{span tuple} and the whole table is called a \emph{span relation}. Spanners are a suitable formalisation of relevant information extraction tasks (see the introductions of the papers mentioned above). 

The most fundamental class of spanners is the class of regular spanners (see~\cite{FaginEtAl2015}), which can be described by finite automata or regular expressions that represent span tuples by using brackets $\open{\varsx} \ldots \close{\varsx}$ with $\varsx \in \varset$ for marking the start and end points of the spans in the input string. For example, 
\begin{equation*}
r = (\tc + \tb)^* \open{\varsx} (\ta + \tb)^* \close{\varsx} \ta^* \open{\varsy} \ta^* \close{\varsy} \tc^*  
\end{equation*}
is a regular expression that describes a regular spanner over $\varset = \{\varsx, \varsy\}$: From the input document $\tc \tb \tc \ta \tb \ta \ta \ta \tc$, it can extract the span tuple $((4, 6), (8, 8))$ since $\tc \tb \tc \open{\varsx} \ta \tb \ta \close{\varsx} \ta \open{\varsy} \ta \close{\varsy} \tc \in L(r)$, or the span tuple $((4, 5), (6, 8))$ since $\tc \tb \tc \open{\varsx} \ta \tb \close{\varsx} \open{\varsy} \ta \ta  \ta \close{\varsy} \tc \in L(r)$.

A regular spanner can also be described by a finite transducer that simply marks some of the input positions (i.\,e., it marks input positions with the special markers $\open{\varsx}$, $\close{\varsx}$, $\open{\varsy}$, $\close{\varsy}$, etc.). This suggests the more general information extraction framework of \emph{annotation transducers} that properly extends regular spanners and that has already been used in~\cite{MunozRiveros2023,BourhisEtAl2021}. In this work, we will also focus on annotation transducers; the case of regular spanners is therefore a special case of our results.

An annotation transducer is a classical finite automaton, but transitions are not only labelled by an input symbol from $\Sigma$, but also by a marker symbol from a set $\Markers$ of \emph{markers}, which includes the special \emph{empty marker} $\emptyMarker$. Consequently, a computation of an annotation transducer marks the letters of the input by elements from $\Markers$, so it outputs a string over the alphabet $\Markers \times \Sigma$ whose restriction to $\Sigma$ equals the input. The \emph{output set} consists then in all the outputs of accepting computations, but we represent them in a contracted way: marked input letters are represented by their positions and all letters marked with the empty marker are ignored. For example, an annotation transducer may mark an input document $\ta \tb \ta \ta \tb \tc$ as $(\emptyMarker, \ta) (\gamma, \tb) (\delta, \ta) (\emptyMarker, \ta) (\emptyMarker, \tb) (\delta, \tc)$, which means that the corresponding \emph{output tuple} is $((\gamma, 2), (\delta, 3), (\delta, 6))$.

The arguably most important computational problem for information extraction based on annotation transducers is that of producing the set of output tuples for a given annotation transducer $\mathcal{T}$ and a given input document $\doc$, i.\,e., the associated query evaluation problem. As usual for query evaluation problems, we consider this in the following enumeration perspective: After a preprocessing with a running time that is only linear in $|\doc|$, we wish to enumerate all the output tuples with an output-linear delay, i.\,e., the time needed to produce the next output tuple is linear in the size of this tuple, but independent from $|\doc|$. Measuring only in dependency on $|\doc|$ and neglecting the dependency on the query size $|\mathcal{T}|$ covers the plausible scenario where the data is rather large, while the query is human-readable and therefore comparatively small. This complexity measurement is standard in query evaluation problems and is also called \emph{data complexity}. In this setting, linear-time preprocessing and output-linear delay in data complexity is the best we can hope for. The fact that the size of the output tuples is not lower bounded in the size of the input data can make it challenging to enumerate them with output-linear delay.

It is well-known that this enumeration task can be solved with linear preprocessing and output-linear delay in data complexity. Phrased for regular spanners, this has been shown in~\cite{AmarilliEtAl2021, DBLP:journals/sigmod/AmarilliBMN20, FlorenzanoEtAl2020}, and, in the context of MSO-enumeration over trees (which properly extend strings), it has already been solved in~\cite{Bagan2006}. It also follows from the result of~\cite{MunozRiveros2023}, which extends the scenario to the SLP-compressed setting. 

We are interested in the extension to the weighted scenario, which we will discuss next.

\medskip
\noindent\textbf{Weighted Information Extraction and Ranked Enumeration.} 
One of the main motivations for the enumeration perspective when approaching query evaluation problems is that the result set is potentially very large (even exponentially), which means that a user has to wait a long time until the whole result set is materialised. Instead, the idea is to provide the user with the first elements of the result set as fast as possible (hence only linear dependency on $|\doc|$ of the preprocessing) and the possibility to receive a new element on demand as fast as possible (hence, a delay that is linear with respect to size of the next element, but independent from $|\doc|$). While this setting covers many practically relevant aspects, it ignores that most likely not all answers to a query are equally interesting, especially if there are many answers. Thus, in the worst case, the relevant answers may appear at the very end of the enumeration and the user again has to wait for a long time. The weighted case of information extraction is a natural way to deal with this issue: Every output tuple has a weight and the enumeration algorithm should produce all output tuples sorted by their weights.

The weighted case of information extraction has been investigated in~\cite{DoleschalEtAl2020, DoleschalEtAl2023, BourhisEtAl2021}. It can be easily formalised by simply adding weights to the model of annotation transducers. More precisely, a \emph{cost transducer} is an annotation transducer whose transitions are not only labelled by an input symbol and a marker, but also by a \emph{weight}, which is an element from an ordered abelian group\footnote{For simplicity, let us here assume that $\group = (\mathbb{N}, +, 0, \leq)$; formal definitions follow below.}~$\group$. In this setting, every accepting computation does not only produce an output tuple, but also allocates a weight to it (i.\,e., the sum of all the weights on the transitions of the computation). Now, every possible output tuple has a weight and, since these weights are ordered, the output tuples are ordered. 
As shown in~\cite{BourhisEtAl2021}, cost transducers cover the expressive power of cost functions expressed in monadic second order logic.

Consequently, we can talk about \emph{ranked enumeration}, which is the same task as described above, just that the output tuples must be produced in increasing order with respect to their weights (or decreasing order, which amounts to the same problem). The obvious question is of course, whether we can still achieve linear preprocessing and output-linear delay in this setting. 
Before we explain what is known about the problem of ranked information extraction and outline our respective contributions, let us discuss some relevant use-cases.

\medskip
\noindent\textbf{Use-Cases for Ranked Information Extraction.} We first discuss some application scenarios of ranked information extraction in the context of regular spanners.

Assume that we want to compile a table with columns ``name'', ``address'', ``country'', ``email'', ``age'', etc. from a text document that contains customer information, i.\,e., each row of the extracted table then represents a customer. A respective spanner would therefore have a variable per field ``name'', ``address'', etc. (we neglect the actual structure of the spanner, which anyway depends on the format of the data). However, our data might be incomplete in the sense that a customer might be missing the field ``country'' or ``age''. In this case, rather than ignoring this customer altogether, it seems more useful to still extract a span tuple for it, but with "country" or "age" being undefined. Hence, spanners with undefined variables is a practically relevant extension first considered in~\cite{MaturanaEtAl2018} and since then generally adopted to the model. 

Nevertheless, we might have a strong priority of complete span tuples over incomplete ones and we wish this to be reflected in our enumeration algorithm, i.\,e., it seems natural that we want to first get all complete tuples, then those with one undefined variable (possibly ordered with respect to some preferences with respect to the individual variables, e.\,g., a missing ``age'' field is less problematic than a missing ``email'' field), then those with two missing variables and so on. It is easy to see that this is covered by ranked enumeration based on cost transducers (i.\,e., by adding weights to transitions that start allocating a span to a variable). 

But also without undefined variables, there are interesting applications of document spanners in the weighted case. For example, we can use weights to force the enumeration of the customer entries to be alphabetically ordered by the "name" field (by simply adding weights to the transitions that read the first letter of the ``name'' field). We can also prioritise span tuples with long spans over span tuples with shorter ones by adding a negative weight to the transitions that read those parts of the input that are extracted by variables. Or we could prioritise span-tuples whose spans are not too far apart from each other by giving non-negative weights to the letters read between spans.

A particularly interesting application of weighted information extraction seems to be that it provides a way of handling uncertain data. Assume that our textual data is faulty in the sense that some letters might have been replaced by other letters. This might be due to typos caused by humans (e.\,g., if we are dealing with data entered by hand), or errors introduced by technical problems when storing or communicating the data, or due to the fact that the data results from scientific experiments and exhibits the typical margin of measurement errors. A user might have knowledge of the format of the data, but not where the actual errors occur. This would not be too bad, if it would only mean that we extract the errors as well, but, in fact, a single erroneous letter can also mean that our information extraction query does not extract anything anymore. For example, if some conversion of the data has changed all occurrences of the "@"-sign to the letter "a", then an annotation transducer that marks the factors of a text document that satisfy the format of email addresses would be quite useless, since it does not discover any email addresses anymore and therefore does not produce any result. 

In general, we can deal with this scenario by allowing the annotation transducer to make mistakes as follows. For every original transition for some letter $x \in \Sigma$, we simply add transitions for every other letter from $\Sigma$ as well (with the same source and origin state). The idea is of course that this modified annotation transducer can also extract email addresses with the "@"-sign swapped into an "a"-sign, by using one of the new alternative transitions for this faulty symbol. But the modified transducer is way too general and will extract a lot of rubbish. So we could just punish the newly added transitions with a weight of $1$, while all other original transitions have a weight of $0$. If we now enumerate in increasing order, then we first get all the original results (i.\,e., those extracted by the original transducer), then all results that the transducer can extract by correcting exactly one symbol (e.\,g., email addresses with "a" instead of "@"), then those with two corrections and so on. The same rubbish as before will be produced, but only at the end of the enumeration, which can be interrupted as soon as we get results with a number of corrections that is over some threshold. 

Obviously, in actual applications, we would fine-tune this setting by adding the mistake-transitions only for certain symbols that are known to be crucial, like the ``$@$''-sign for extracting email addresses, or by using more complicated weighing schemes, reflecting, e.g., some sort of similarity between the newly added transitions leaving a state and the already existing ones. 

\medskip
\noindent\textbf{Our Contribution.} The central computational task investigated in this work is as follows (formal definitions follow soon): Given a text document $\doc$ and an \emph{unambiguous}\footnote{This means that every output is produced by exactly one possible computation.} cost transducer $\mathcal{T}$, we wish to enumerate all output tuples in increasing order by their weights.

It is shown in~\cite{BourhisEtAl2021} that this can be achieved with a preprocessing of $\bigO(|\mathcal{T}| |\doc|)$ and every output tuple $s$ has delay $\bigO(|s| \cdot \log(|\mathcal{T}| |\doc|))$.

\begin{remark}\label{DCRemark}
Note that for the rest of the paper, we state all running times in dependency on $|\doc|$ and $|\mathcal{T}|$, i.\,e., we measure the running times of our algorithms in \emph{combined complexity}. In this way, it will be clear that the running times of our algorithms have only a polynomial dependency on the query size. However, we keep in mind that our objective is to find algorithms optimised towards the data complexity perspective, i.\,e., the paramount goal is a preprocessing time that is linear in $|\doc|$ and a delay that is linear in the size of the next enumerated element and independent from $|\doc|$. In particular, we do not optimise our algorithms regarding the dependency on the query size (other than making sure that this dependency is polynomial).
\end{remark}

So in~\cite{BourhisEtAl2021}, the dependency of the preprocessing on the data size is only linear, but, unfortunately, in the delay there is a logarithmic factor that depends on $|\doc|$. In big data scenarios this would significantly slow down the enumeration compared to the unweighted case. Given the high relevance of ranked information extraction, this is unsatisfactory, and brings the challenge of determining whether output-linear delay can be achieved, i.\,e., can we solve ranked information extraction in the same running times as in the unweighted case?

We approach the problem by first phrasing it in a simpler way as a problem of computing shortest paths of an edge-weighted DAG. This allows us to apply known algorithmic techniques for weighted DAGs, in particular Eppstein's DAG of shortest paths (see~\cite{Eppstein1998}). As a result, we obtain a simply structured enumeration algorithm with preprocessing of $\bigO(|\mathcal{T}| |\doc|)$ and, if $s_1, s_2, \ldots, s_k$ is the sorted enumeration of the output tuples, then the delay of $s_i$ is $\bigO(|s_i| + \min\{ \log i, \log(|\mathcal{T}| |\doc|)\})$ for every $i \in [k]$. On the one hand, this improves the enumeration algorithm of~\cite{BourhisEtAl2021} since the logarithmic term in the delay $\bigO(|s_i| + \min\{ \log i, \log(|\mathcal{T}| |\doc|)\})$ is only additive, while it is a factor in the delay $\bigO(|s_i| \cdot \log(|\mathcal{T}| |\doc|))$ of the enumeration algorithm of~\cite{BourhisEtAl2021}. Moreover, since the additive logarithmic term of the delay for the $i^{\text{th}}$ element is actually the minimum of $\log(|\mathcal{T}| |\doc|)$ and $\log i$, we know that this additive term is very small for the first elements of the enumeration and only then slowly increases to $\log(|\mathcal{T}| |\doc|)$ after we have seen $\Omega(|\doc||\mathcal{T}|)$ elements. This is particularly desirable for ranked enumeration, where we have a guarantee that the most important elements (according to the weights) are produced at the beginning of the enumeration. In particular, for application scenarios where we are only interested in, say, the first $25$ elements, the additive term is indeed constant and therefore negligible.

As our main result, we extend this algorithm in a non-trivial way in order to address the question of whether we can achieve output-linear delay. At first, it might seem that the logarithmic factor is unavoidable, as outputting the results in the sorted order is as hard as sorting, which does need a linear-logarithmic time in the comparison model.
However, we make the following observation. For fixed $\doc$ and $\mathcal{T}$, all the weights of output tuples have the form $\sum^t_{i = 1} \alpha_{i} g_{i}$, where $\alpha_{1},\alpha_{2},\ldots,\alpha_{t} \in \mathbb{N} \cup \{0\}$ and $g_{1},g_{2},\ldots,g_{t}$ are the group elements from $\group$ that occur on transition of the cost transducer, so in particular $t$ is bounded by the query size $|\mathcal{T}|$. Thus, while we may need to sort many weights, they cannot be all arbitrarily complicated.

At the heart of our output-linear algorithm lies a non-trivial rounding procedure, which allows us to show that we can sort any $N \geq |\doc|$ such sums as defined above in time $\Oh(\poly(|\mathcal{T}|) N)$, with high probability (see Lemma~\ref{mainSortingLemma}).
Formally, the procedure always terminates and returns the correct answer, and for any constant $c\in \mathbb{N}$ works in time $\Oh(\poly(|\mathcal{T}|) N)$ with
probability at least $1-1/N^{c}$.
Finally, our enumeration algorithm works as follows. In the preprocessing, we pre-compute the output tuples with ranks $1, 2, \ldots, |\doc|$ and we sort them according to their weights in linear time with high probability in the size of the input. Then, in the first stage of the enumeration phase, we output these pre-computed elements, which buys us enough time to pre-compute and sort the output tuples with ranks $|\doc| + 1, |\doc| + 2, \ldots, 2|\doc|$. In the next stage, we therefore output the tuples with ranks $|\doc| + 1, |\doc| + 2, \ldots, 2|\doc|$ and pre-compute and sort the tuples with ranks $2|\doc| + 1, 2|\doc| + 2, \ldots, 4|\doc|$, and so on. For the pre-computation of the elements, we use again Eppstein's DAG of shortest paths~\cite{Eppstein1998} already mentioned above and a result by Frederickson~\cite{Frederickson93}. This yields an algorithm with linear preprocessing and output-linear delay with high probability (see Theorem~\ref{advancedEnumTheorem}), i.\,e., an algorithm with the same running time guarantees as in the unweighted case with high probability.

\medskip
\noindent\textbf{Related Work.}
The connection between ranked enumeration problems in database theory and computing the shortest paths of a graph (and Eppstein's approach) has also successfully been explored in~\cite{TziavelisEtAl2022}. However,~\cite{TziavelisEtAl2022} is concerned with ranked enumeration of conjunctive queries and, moreover, the output paths are represented as explicit listings (i.e., the path size is proportional to the number of its edges) instead of the potentially much smaller labels, as in~\cite{BourhisEtAl2021} and as we also require in our setting.\par
In~\cite{AmarilliEtAl2024}, a work that is also related to~\cite{BourhisEtAl2021}, the authors consider ranked enumeration of MSO-queries over trees. However, the MSO-queries of~\cite{AmarilliEtAl2024} have only free FO-variables, but no free set-variables, which means that this setting does not extend MSO on strings as considered in our paper (or in~\cite{BourhisEtAl2021}). Moreover, the ranking function used in~\cite{AmarilliEtAl2024} requires some additional properties.

\section{Preliminaries}

\noindent\textbf{Algebra.} A \emph{free monoid} is a tuple $(\monoid, \concatm, \eword)$, where $\monoid = \monoidalphabet^*$ for a finite alphabet $\monoidalphabet$. For free monoids, we assume that each element from $\monoidalphabet$ can be stored in a single machine word, and that, for every $x, y \in \monoid$, calculating $x \concatm y$ takes constant time. 

A \emph{group} is a tuple $(\group, \plusg, \zerog)$, where $\group$ is a set of elements, $\plusg$ is an associative binary operation on $\group$, $\zerog \in \group$ is a neutral element for $\plusg$ (i.\,e., $\zerog \plusg x = x \plusg \zerog = x$ for every $x \in \group$) and every $x \in \group$ has an inverse $-x \in \group$ (i.\,e., $x \plusg (-x) = 0$). A group is \emph{abelian}, if $\plusg$ is also commutative. An \emph{ordered} group is a tuple $(\group, \plusg, \zerog, \leqg)$, where $(\group, \plusg, \zerog)$ is a group, and $\leqg$ is a total order over $\group$ that respects $\plusg$ (i.\,e., $x_1 \leqg x_2$ implies $x_1 \plusg y \leqg x_2 \plusg y$ for every $y \in \group$); for an ordered group, we denote by $|x|$ the greater element from $x$ and $-x$ with respect to the order $\leqg$. We will assume that $\group$ has an effective representation, meaning that every element of $\group$ is stored in a single machine word, calculating $x \plusg y$ takes constant time, and so does checking if $x \leqg y$.

\medskip
\noindent\textbf{Graphs.} 
Let $(\group, \plusg, \zerog, \leqg)$ be a fixed ordered abelian group, called the \emph{weight group}, and let $(\monoid, \concatm, \eword)$ be a fixed free monoid, called the \emph{label monoid}. 

We now define directed graphs with edges that have a weight from $\group$ and a label from $\monoid$. More formally, a \emph{directed graph with edge weights from $\group$ and edge labels from $\monoid$} is a tuple $G = (V, E, \weightName, \labelName)$, where $V$ is a set of nodes, $E$ is a set of edges, where, for every $e \in E$, $\sourcefunc{e} \in V$ is the \emph{source} of $w$ and $\targetfunc{e} \in V$ is the \emph{target} of $w$, the function $\weightName : E \to \group$ is the \emph{weight function} and $\labelName : E \to \monoid$ is the \emph{label function}.
For simplicity, we also represent an edge $e$ in the form $(\sourcefunc{e}, \weightfunc{e}, \labelfunc{e}, \targetfunc{e})$. A path in $G$ is a sequence of edges
\begin{equation*}
P = ((v_0, w_1, \beta_1, v_1), (v_1, w_2, \beta_2, v_2), \ldots, (v_{k-1}, w_k, \beta_k, v_k))\,.
\end{equation*}
The \emph{weight} of the path $P$ is defined by $\weightfunc{P} = \sum^k_{i = 1} w_i$ and the \emph{label} of the path $P$ is defined by $\labelfunc{P} = \prod^{k}_{i = 1} \beta_i$. We also denote the weight of an edge or a path as its \emph{length}, and by \emph{shortest} path, we mean a path of minimal length. For $u, v \in V$, a $u$-to-$v$-path is any path from $u$ to $v$.

We measure the size of a graph $G = (V, E, \weightName, \labelName)$ with edge weights from $\group$ and edge labels from $\monoid$ as $|G| = |V| + |E|$. This is justified, since every edge $e$ has constant size (due to the fact that $\weightfunc{e}$ and $\labelfunc{e}$ can be stored in single machine words).

\medskip
\noindent\textbf{Nondeterministic Finite Automata.} A \emph{nondeterministic finite automaton} (\emph{NFA} for short) is a tuple $M = (Q, \Sigma, \delta, q_0, F)$, where $Q$ is the finite set of \emph{states}, $\Sigma$ is the finite \emph{input alphabet}, $\delta : Q \times \Sigma \to \powerset(Q)$ is the \emph{transition function}, $q_0 \in Q$ is the \emph{initial state} and $F \subseteq Q$ is the set of \emph{final states}. The semantics are defined as usual, but we shall state them in detail for the transducer models defined later.

\medskip
\noindent\textbf{Enumeration Algorithms.} An \emph{enumeration} of a finite set $A$ is any sequence $(s_1, s_2, \ldots, s_m)$ such that $|A| = m$ and $A = \{s_1, s_2, \ldots, s_m\}$. A finite set $A$ is called \emph{$\group$-ranked}, if every $s \in A$ has a weight $\weightfunc{s} \in \group$. A \emph{ranked enumeration} of a $\group$-ranked set $A$ is any enumeration $(s_1, s_2, \ldots, s_m)$ of $A$ that satisfies $\weightfunc{s_1} \leqg \weightfunc{s_2} \leqg \ldots \leqg \weightfunc{s_m}$.

An \emph{enumeration problem} is a function $P$ that maps an input $I$ to a finite \emph{output set} $P(I)$, the elements of which are called \emph{output elements}. An enumeration problem $P$ is called \emph{$\group$-ranked}, if, for every input $I$, the output set $P(I)$ is a $\group$-ranked set.

An \emph{enumeration algorithm} for an enumeration problem $P$ is an algorithm $A$ that, on input $I$, produces an enumeration of $P(I)$. A \emph{ranked enumeration algorithm} for a $\group$-ranked enumeration problem $P$ is an algorithm $A$ that, on input $I$, produces a $\group$-ranked enumeration of $P(I)$. We assume that any enumeration algorithm first performs a \emph{preprocessing phase}, which is then followed by an \emph{enumeration phase}, in which the enumeration is produced.

Let $A$ be an enumeration algorithm for some enumeration problem $P$. The \emph{preprocessing time} of $A$ on input $I$ is the running-time of the preprocessing phase of $A$ on input $I$. If $(s_1, s_2, \ldots, s_m)$ is the output of $A$ on input $I$, then, for every $i \in [m]$, the \emph{delay of $s_i$} is the time that elapses between the end of completely producing the element $s_{i-1}$ (or the end of the preprocessing phase if $i=1$) and the end of completely producing the next output element $s_i$. The delay of $A$ on some input $I$ is the maximum over all delays of any output element. The preprocessing time and the delay of the algorithm $A$ is the maximum preprocessing time and maximum delay, respectively, over all possible inputs of length at most $n$ (viewed as a function of $n$).

Enumeration algorithms model the requirement that in practice the user cannot afford to wait until the set $P(I)$ is completely computed before receiving an answer. Instead, the user should be quickly provided with some answer and can then stop the enumeration as soon as enough answers have been seen. The ranked perspective even provides a way to postulate that the most important answers are enumerated first, which is, in most practical scenarios, a highly desirable feature.

However, in the case that the full output set is extremely large (i.\,e., exponential in the size of the queried data), an enumeration algorithm may have to deal with an exponential number of elements. This is not a problem if these output elements are just written on some output storage, but if the algorithm needs to perform computations on an exponential number of output elements, one runs into problems with respect to addressing and working with these elements. Clearly, this only happens if the output is indeed of exponential size, and in the very unlikely case that the user is really interested in seeing all of these exponentially many output elements and wants the algorithm to run all the way until the end. Nevertheless, on a formal level, we have to deal with this issue, which is why we also define polynomially-bounded versions of ranked enumeration problems, where the ranked enumeration stops after a polynomially number of elements have been produced.

Let $P$ be a $\group$-ranked enumeration problem and let $f : \mathbb{N} \to \mathbb{N}$ be some polynomial function. The \emph{$f$-bounded} version of $P$, denoted by $P^f$, is the restriction of $P$ to the case where, on input $I$, we only enumerate the $f(|I|)$ smallest  elements of $P(I)$ (with respect to their weights). Let us define this more formally. If $I$ is an input for $P$, where $P(I) = \{s_1, s_2, \ldots, s_m\}$ with $\weightfunc{s_i} \leqg \weightfunc{s_{i+1}}$ for every $i \in [m-1]$, then $P^f(I) = \{s_1, s_2, \ldots, s_{m'}\}$, where $m' = \min\{m, f(|I|)\}$. Note that $P^f$ is a $\group$-ranked enumeration problem in the sense defined at the beginning of this subsection.

\begin{remark}
In the following, we let $f$ be some fixed polynomial that we will use for polynomially-bounded versions of ranked enumeration problems. Our algorithms do not depend on $f$ and they will work for every choice of $f$. For example, taking $f$ to be linear covers those practical scenarios where we can assume that users never materialise a number of query answers that exceed the total size of the queried data itself; in the case of enumeration algorithms with linear time preprocessing, this models the case when the user allows roughly the same time for seeing the answers as for the preprocessing.
\end{remark}

\medskip
\noindent\textbf{Computational Model.}
The computational model we use is the standard unit-cost word RAM with word-size $\omega$ (meaning that each memory word can hold $\omega$ bits). It is assumed that this model allows processing inputs of size $n$, where $\omega \geq \log n$; in other words, the size $n$ of the data never exceeds (but, in the worst case, is equal to) $2^\omega$. Intuitively, the size of the memory word is determined by the processor, and larger inputs require a stronger processor (which can, of course, deal with much smaller inputs as well). Indirect addressing and basic arithmetical operations on such memory words are assumed to work in constant time. Note that numbers with $\ell$ bits are represented in $\bigO(\ell/\omega )$ memory words, and working with them takes time proportional to the number of memory words on which they are represented. This is a standard computational model for the analysis of algorithms, as defined in \cite{FredmanW90}, and used in classical works related to basic problems such as integer-sorting, see, e.g.,~\cite{AnderssonHNR98,HanT02,AnderssonT07}.

In the problem considered here, we assume that we are given a text document $\doc$, of length $n$, over some input alphabet $\Sigma$, and an unambiguous cost transducer $\mathcal{T}$ (to be formally defined in the next section). As explained in Section~\ref{sec:intro}, the runs of $\mathcal{T}$ on $\doc$ determine a set of weighted output tuples, which therefore defines a ranked enumeration problem. We wish to find ranked enumeration algorithms for this problem. 

As common in query evaluation tasks (and as also explained in Remark~\ref{DCRemark}), the dependency on the query size $|\mathcal{T}|$ is considered small in comparison to the data size $|\doc|$. This is why we are mainly concerned with running times that have a low dependency on the data size. More precisely, for the preprocessing phase, we aim for a running time with only linear dependency on $|\doc|$, while the delay should ideally be independent from $|\doc|$ (and only linear in the size of the output element). Assuming that we have to read the input data at least once, this is the best we can hope for with respect to the dependency on the data size. As far as the dependency on the query size is concerned, it is desired (although this is not a main concern) to still be  polynomial.

For dealing with the ranked setting, our first algorithm (Theorem~\ref{simpleEnumTransducerTheorem}) has to maintain a heap data structure whose size grows with the size of the set of output elements, while our second algorithm (Theorem~\ref{PathEnumWithSortingTheorem}) has to repeatedly sort a large number of output elements. In both cases, the space complexity can become exponential if we are dealing with instances that lead to an exponential number of output elements and if the enumeration phase is carried out completely. Since such pathological cases cannot be easily handled by the RAM model with logarithmic word size (e.g., in the case when the input size is roughly $2^\omega$, with $\omega$ being the word size), we only consider the $f$-bounded variant of our ranked enumeration problem (see the explanations from above). This is a mere technicality that is required to define our algorithms and respective running time guarantees in a sound way, using the word RAM model. In all reasonable application scenarios, it is to be assumed that ranked enumeration algorithms are not run until the end in the case that the output is exponentially large.

\section{Ranked Information Extraction}\label{sec:rankedInformationExtraction}

We next thoroughly formalise the enumeration problem investigated in this work, which has been discussed in the introduction on an intuitive level.

In the following, let $(\mathbb{G}, \plusg, \zerog, \leqg)$ be a fixed ordered abelian group. Let $\Sigma$ be a finite \emph{input alphabet} and let $\Markers$ be a finite set of \emph{markers}, including a designated \emph{empty marker} $\emptyMarker \in \Markers$. We assume that every element of $\Sigma$ and every element of $\Markers$ can be stored in a single machine word.\par 
Syntactically, a \emph{cost transducer over $\mathbb{G}$} is an NFA whose transitions are labelled by elements $(b, w, \gamma)$, where $b \in \Sigma$, $w \in \mathbb{G}$ and $\gamma \in \Markers$, i.\,e., an NFA $\mathcal{T} = (Q, (\Sigma \times \mathbb{G} \times \Markers), \delta, q_0, F)$. For convenience, we treat $\delta$ as a subset of $Q \times \Sigma \times \mathbb{G} \times \Markers \times Q$. The semantics are defined as follows.

A \emph{run} of $\mathcal{T}$ on some input $\doc \in \Sigma^*$ is a sequence $\rho = p_0 \to p_1 \to \ldots \to p_n$ of states $p_i \in Q$, $1 \leq i \leq n$, such that $|\doc| = n$ and, for every $i \in [n]$, there is a transition $(p_{i-1}, \doc[i], w_i, \gamma_i, p_i)$. We also write runs in the form
\begin{equation*}
\rho = p_0 \xrightarrow{\doc[1], w_1, \gamma_1} p_1 \xrightarrow{\doc[2], w_2, \gamma_2} \ldots \xrightarrow{\doc[n], w_n, \gamma_n} p_n\,.
\end{equation*}
The \emph{weight} of $\rho$ is $\weightfunc{\rho} = \sum^n_{i = 1} w_i$, while its \emph{output} is $\outputfunc{\rho} = (\gamma_{j_1}, j_1) (\gamma_{j_2}, j_2) \ldots (\gamma_{j_m}, j_m)$, where $1 \leq j_1 < j_2 < \ldots < j_m \leq n$ are exactly the positions of $\rho$ with $\gamma_{j_\ell} \neq \emptyMarker$ for every $\ell \in [m]$. If $p_0 = q_0$ and $p_n \in F$, then the run $\rho$ is \emph{accepting}. 

The cost transducer $\mathcal{T}$ is \emph{unambiguous}, if there are no two distinct accepting runs $\rho_1$ and $\rho_2$ on any input $\doc \in \Sigma^*$ such that $\outputfunc{\rho_1} = \outputfunc{\rho_2}$.

\begin{remark}
In the following, we assume that cost-transducers are unambiguous (an assumption also made in~\cite{BourhisEtAl2021}). This is a proper restriction of the model of cost-transducers, since general cost-transducer may have two runs on the same input that produce the same output, but with different weights. However, in our setting of information extraction, weights are used to allocate a priority to each output element, which in turn induces the desired order of the output elements. Consequently, the possibility to allocate several different weights to the same output element is not a useful feature. Note, however, that different output elements may have the same weights.
\end{remark}

Finally, the function $\llbracket \mathcal{T} \rrbracket : \Sigma^* \to \powerset((\Markers \times \mathbb{N})^*)$ is defined by
\begin{equation*}
\llbracket \mathcal{T} \rrbracket(\doc) = \{\outputfunc{\rho} \mid \text{$\rho$ is an accepting run of $\mathcal{T}$ on $\doc$}\}\,.
\end{equation*}

\begin{remark}
Note that for the sake of simplicity, we define cost transducers in a sightly different yet equivalent way compared to~\cite{BourhisEtAl2021}. The main purpose of cost transducers in~\cite{BourhisEtAl2021} is to express MSO-cost functions, which map set variables to sets of positions of the input string. Consequently, the cost transducers of~\cite{BourhisEtAl2021} mark every position of the input string by a whole set of output symbols (which encode the set variables that contain this position). For simplicity, we just use single markers and a designated empty marker that represents the empty set of output symbols (note that positions marked with the empty marker do not appear in the output, just like in the setting of~\cite{BourhisEtAl2021}, where the positions marked with the empty set of output symbols do not appear in the output). Another difference is that the model from~\cite{BourhisEtAl2021} can also allocate weights to the initial and final states. This is mere syntactical sugar, since by a simple modification of the underlying automaton, such initial and final weights can be moved onto the transitions leaving or entering the initial or final states, respectively.
\end{remark}

The problem $\enumProbTransducer$ is the ranked enumeration problem defined as follows. For a given unambiguous cost transducer $\mathcal{T}$ and a document $\doc \in \Sigma^*$, let $\enumProbTransducer(\mathcal{T}, \doc) = \llbracket \mathcal{T} \rrbracket(\doc)$. 

Bourhis et al.~\cite{BourhisEtAl2021} give a ranked enumeration algorithm for $\enumProbTransducer^f$. 
Worth noting, Bourhis et al.~\cite{BourhisEtAl2021} do not introduce their result in the context of bounded ranked enumeration, but their algorithm also seems to be using exponential space, in the worst case, when all the solutions to $\enumProbTransducer$ are output. Therefore, we state their result in our setting, for the sake of soundness and uniformity.

\begin{theorem}[Bourhis et al.~\cite{BourhisEtAl2021}]\label{BorhisResultTheorem}
There is a ranked enumeration algorithm for $\enumProbTransducer^f$ such that, on input $\mathcal{T}$ and $\doc$, the preprocessing time is $\bigO(|\mathcal{T}| |\doc|)$ and every output element $s$ has delay $\bigO(|s| \cdot \log(|\mathcal{T}| |\doc|))$.
\end{theorem}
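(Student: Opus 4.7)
The plan is to reduce the problem to ranked enumeration of source-to-sink paths in an edge-weighted DAG and then enumerate using a heap-based best-first search.

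I would first build the \emph{product DAG} $G$ whose vertices are pairs $(q,i)$ with $q\in Q$ and $0\leq i\leq n=|\doc|$, and whose edges $(q,i-1)\to(q',i)$ of weight $w$ and label $(\gamma,i)$ correspond one-to-one to transitions $(q,\doc[i],w,\gamma,q')\in\delta$. Accepting runs of $\mathcal{T}$ on $\doc$ are then in bijection with source-to-sink paths in $G$, where the source is $(q_0,0)$ and the sinks are the $(q_f,n)$ with $q_f\in F$; the weight of a run equals the weight of the associated path, and its output is the concatenation of the labels distinct from $\emptyMarker$ along the path. The construction is $\bigO(|\mathcal{T}||\doc|)$, and a single reverse topological pass computes for every vertex $v$ the minimum weight $d(v)$ of a $v$-to-sink path, together with one fixed optimal completion. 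Vertices with no sink reachable are discarded.

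For the enumeration, I would keep a binary heap $H$ of \emph{candidate partial paths}; each entry stores its current endpoint $v$ together with a pointer to the parent entry that produced it, and is keyed by its committed weight plus $d(v)$. The initial entry is the empty path at $(q_0,0)$. Popping the minimum entry $P$: if the endpoint is a sink, we reconstruct the run by walking parent pointers and emit the non-$\emptyMarker$ labels as the next output; otherwise we insert the child that extends $P$ along the optimal-completion pointer, and insert one branching alternative for each remaining outgoing edge at the endpoint, ranked by $w+d(\cdot)$ using the standard Eppstein-style sidetrack organisation so that each pop and each push costs $\bigO(\log(|\mathcal{T}||\doc|))$. The invariant ``the current minimum of $H$ is the weight of the next output'' follows from a standard best-first-search argument, and unambiguity of $\mathcal{T}$ ensures that distinct enqueued paths yield distinct outputs, so no de-duplication is required.

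The main obstacle is that, naively, each output of size $|s|$ requires as many pops as the length of the underlying run, which can be as large as $|\doc|\gg|s|$ due to runs of $\emptyMarker$-transitions. To achieve delay $\bigO(|s|\cdot\log(|\mathcal{T}||\doc|))$, I would coalesce each maximal chain of $\emptyMarker$-edges with the following non-$\emptyMarker$ edge into a single implicit \emph{marker-event arc}, of weight equal to the sum of weights along that chain plus the terminating edge. These arcs are not materialised explicitly, but for every vertex $v$ a lazy per-vertex priority queue returns the outgoing marker-event arcs in sorted order at amortised $\bigO(\log(|\mathcal{T}||\doc|))$ per arc, by running the same best-first strategy recursively inside the $\emptyMarker$-subgraph emanating from $v$ and using the precomputed $d$-values as a tight lower bound for admissible pruning. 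Running the outer best-first search on this implicit marker-event graph then produces each output of size $|s|$ after exactly $|s|$ arc extractions (plus a final walk-back over the same $|s|$ labels), giving the claimed delay.
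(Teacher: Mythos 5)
This statement is imported from Bourhis et al.\ \cite{BourhisEtAl2021} and is not proved in the paper; the paper only constructs $G_{\doc,\mathcal{T}}$ in Section~\ref{sec:rankedInformationExtraction} (which matches your product DAG) and then, in Theorem~\ref{simpleEnumTransducerTheorem}, proves a \emph{strictly stronger} bound of $\bigO(|s_i| + \min\{\log i, \log(|\mathcal{T}||\doc|)\})$ using Eppstein's heap $\mathcal{H}(G)$ of \emph{complete} $s$-to-$t$-paths (Theorem~\ref{EppsteinDAGTheorem} and Corollary~\ref{simpleEnumGraphCorollary}). So the route the paper actually follows is different from yours in a crucial way: rather than doing best-first expansion of path \emph{prefixes} and coalescing $\emptyMarker$-runs so that the pop count tracks $|s|$, the paper represents every full path by its $\bigO(1)$-size ``sidetrack'' difference from its parent in $\mathcal{H}(G)$, stores label pointers along the shortest-path tree, and thereby retrieves $\labelfunc{P_q}$ in $\bigO(|\labelfunc{P_q}|)$ time without any coalescing at all. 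This turns the $\log$ factor from a multiplicative into an additive one, which is precisely the improvement over the theorem you are trying to reprove.

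Evaluated on its own terms, your proposal has one genuine gap: the central sentence ``running the outer best-first search on this implicit marker-event graph then produces each output of size $|s|$ after exactly $|s|$ arc extractions'' is asserted but not argued, and it is not true without further care. In prefix-expansion best-first search the pops performed between emitting $s_{i-1}$ and $s_i$ are all prefixes $P$ with key $g(P)+d(\mathrm{endpoint})$ in the interval $(\weightfunc{s_{i-1}}, \weightfunc{s_i}]$, and such a $P$ need not be a prefix of $s_i$: with ties in weight it can be a prefix of some $s_j$ with $j>i$ and $\weightfunc{s_j}=\weightfunc{s_i}$, and you also never bound how many ``branching alternatives'' you push per pop, since the out-degree in the marker-event graph can be as large as $\Theta(|\mathcal{T}||\doc|)$ (or worse), and the lazy per-vertex generator must itself be charged pops in the outer heap. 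To make this rigorous you would need a consistent tie-breaking scheme, the standard ``push only the cheapest child plus one sibling'' discipline to keep pushes at $\bigO(1)$ per pop, and a separate argument that the inner Eppstein-style generation of marker-event arcs costs $\bigO(\log(|\mathcal{T}||\doc|))$ per \emph{outer} pop (not merely per arc). None of these is spelled out. Additionally, the final coalesced arc into the sink must be treated as a special $\emptyMarker$-only arc, which your description omits. These are fixable but non-trivial; as written the claimed delay does not follow. I would encourage you to instead look at the proof of Corollary~\ref{simpleEnumGraphCorollary}, which both sidesteps the coalescing machinery entirely and yields a better bound.
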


Before we present our algorithmic approaches to the ranked enumeration problem $\enumProbTransducer^f$, we formulate it as a problem of enumerating shortest paths in an edge-labelled weighted graph. 

\medskip
\noindent\textbf{Reducing $\enumProbTransducer$ to Shortest Path Enumeration.} 
Let $\doc \in \Sigma^*$ with $|\doc| = n$ and let $\mathcal{T} = (Q, (\Sigma \times \mathbb{G} \times \Markers), \delta, q_0, F)$ be an unambiguous cost transducer. Without loss of generality, we can assume that $F = \{q_f\}$. Indeed, this can be achieved by first adding a new final state $q_f$ and making all former final states non-final. Then, for every transition $(p, b, w, \gamma, q)$ where $q$ is a former final state, we add a transition $(p, b, w, \gamma, q_f)$. Now we have a single final state $q_f$ and, for any input string, the cost transducer has the same set of outputs. Moreover, unambiguity is maintained: If there are two different accepting runs on the same input with the same output (i.\,e., unambiguity is violated), then we can replace the last transitions $(p_1, b, w_1, \gamma, q_f)$ and $(p_2, b, w_2, \gamma, q_f)$ of these runs by the transitions $(p_1, b, w_1, \gamma, q_1)$ and $(p_2, b, w_2, \gamma, q_2)$, where $q_1$ and $q_2$ are former final states; thus, unambiguity was already violated before the construction, which is a contradiction.

We define a monoid $(\monoid, \concatm, \eword)$, where $\monoid$ is the set of strings over the alphabet $(\Markers \setminus \{\emptyMarker\}) \times \{1, \ldots, n\}$ with the empty word being the neutral element, i.\,e., $\monoid = ((\Markers \setminus \{\emptyMarker\}) \times \{1, \ldots, n\})^*$, and the operation $\concatm$ is string-concatenation (i.\,e., $\eword$ plays the role of the empty string). 

Next, we define a directed graph $G_{\doc, \mathcal{T}} = (V_{\doc, \mathcal{T}},  E_{\doc, \mathcal{T}}, \weightName, \labelName)$ with edge weights from $\group$ and edge labels from $\monoid$ as follows. The set of nodes is $V_{\doc, \mathcal{T}} = \{(q, i) \mid q \in Q, 0 \leq i \leq n\}$. For every $i \in [n]$ and every transition $(p, \doc[i], w, \gamma, q) \in \delta$ with $\gamma \neq \emptyMarker$, there is an edge $((p, i-1), w, (\gamma, i), (q, i)) \in E_{\doc, \mathcal{T}}$, and for every transition $(p, \doc[i], w, \emptyMarker, q) \in \delta$, there is an edge $((p, i-1), w, \eword, (q, i)) \in E_{\doc, \mathcal{T}}$. 
It can be easily seen that there is a weight preserving one-to-one-correspondence between accepting runs of $\mathcal{T}$ on $\doc$ and $(q_0, 0)$-to-$(q_f, n)$-paths of $G_{\doc, \mathcal{T}}$. More precisely, there is a run 
\begin{equation*}
\rho = q_0 \xrightarrow{\doc[1], w_1, \gamma_1} q_1 \xrightarrow{\doc[2], w_2, \gamma_2} \ldots \xrightarrow{\doc[n], w_n, \gamma_n} q_n\,.
\end{equation*}
of $\mathcal{T}$ on $\doc$ with $q_n = q_f$ if and only if 
\begin{equation*}
P =\:((q_0, 0), w_1, \psi_1, (q_1, 1)), ((q_1, 1), w_2, \psi_2, (q_2, 2)), \ldots, ((q_{n-1}, n-1), w_n, \psi_n, (q_n, n)) 
\end{equation*}
is a path in $G_{\doc, \mathcal{T}}$ with $q_n = q_f$ and, for every $i \in \{1, 2, \ldots, n\}$, $\psi_i = (\gamma_i, i)$ if $\gamma_i \neq \emptyMarker$, and $\psi_i = \eword$ if $\gamma_i = \emptyMarker$. In particular, $\weightfunc{\rho} = \weightfunc{P}$ and $\outputfunc{\rho} = \labelfunc{P}$.\par

Since all edges $e \in E_{\doc, \mathcal{T}}$ satisfy $\sourcefunc{e} = (p, i-1)$ and $\targetfunc{e} = (q, i)$ for some $i \in [n]$, the graph $G_{\doc, \mathcal{T}}$ is a DAG. We call $G_{\doc, \mathcal{T}}$ the \emph{DAG-representation} of $\mathcal{T}$ and $\doc$.

\begin{observation}\label{sizeOfDAGObservation}
$|V_{\doc, \mathcal{T}}| = \bigo(|\doc||Q|)$, and every transition $(p, b, w, \gamma, q) \in \delta$ is responsible for at most $|\doc|$ edges of the form $((p, i-1), w, \psi, (q, i)) \in E_{\doc, \mathcal{T}}$ with $\doc[i] = b$, so, $|E_{\doc, \mathcal{T}}| = \bigo(|\doc||\mathcal{T}|)$. We conclude that $|G_{\doc, \mathcal{T}}| = \bigo(|\doc||\mathcal{T}|)$, and we can construct $G_{\doc, \mathcal{T}}$ in time $\bigo(|\doc||\mathcal{T}|)$.
\end{observation}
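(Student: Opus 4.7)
The observation consists of three bounds plus a construction-time claim, all of which follow from direct counting based on the definitions of $V_{\doc, \mathcal{T}}$ and $E_{\doc, \mathcal{T}}$ given right before the statement. The plan is simply to make each count explicit, taking care that the single constant $|Q|$ is absorbed into $|\mathcal{T}|$ in the final bounds.

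For the node bound, I would observe that by construction $V_{\doc, \mathcal{T}} = Q \times \{0, 1, \ldots, n\}$, so $|V_{\doc, \mathcal{T}}| = |Q|(n+1) = O(|Q| \cdot |\doc|)$. For the per-transition edge bound, I would fix an arbitrary transition $(p, b, w, \gamma, q) \in \delta$ and note that, according to the definition of $E_{\doc,\mathcal{T}}$, this transition contributes an edge $((p,i-1), w, \psi, (q,i))$ only for those positions $i \in [n]$ such that $\doc[i] = b$; since there are at most $n = |\doc|$ such positions, each transition produces at most $|\doc|$ edges. Summing over $\delta$ and using $|\delta| \leq |\mathcal{T}|$ then gives $|E_{\doc, \mathcal{T}}| \leq |\doc| \cdot |\delta| = O(|\doc||\mathcal{T}|)$, and combining with $|Q| \leq |\mathcal{T}|$ yields $|G_{\doc, \mathcal{T}}| = |V_{\doc, \mathcal{T}}| + |E_{\doc, \mathcal{T}}| = O(|\doc||\mathcal{T}|)$.

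For the construction-time bound, the plan is to process positions one by one while looking up transitions by input symbol. In a preparatory step (taking $O(|\mathcal{T}|)$ time using bucket-sort on $\Sigma$, since each letter fits in a machine word), I would group the transition set $\delta$ into buckets $\delta_b = \{(p,b,w,\gamma,q) \in \delta\}$ for every $b \in \Sigma$. The node set $V_{\doc,\mathcal{T}}$ is then materialized in time $O(|\doc||Q|)$. To build the edges, I iterate over $i = 1, \ldots, n$, and for each transition in $\delta_{\doc[i]}$ I emit the corresponding edge in constant time using direct array indexing into the node table. The total work for edge creation is $\sum_{i=1}^n |\delta_{\doc[i]}| \leq n \cdot |\delta| = O(|\doc||\mathcal{T}|)$, matching the edge-count bound.

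There is no real obstacle here: the argument is a direct unpacking of the definitions, and the only mild care required is (i) absorbing $|Q|$ into $|\mathcal{T}|$ in the stated $O(|\doc||\mathcal{T}|)$ bounds, and (ii) ensuring the grouping of transitions by input symbol is done once at the start so that the per-position work is proportional to the number of edges actually produced, rather than to $|\delta|$.
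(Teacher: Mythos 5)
Your proposal is correct and follows essentially the same approach the paper uses (indeed, the statement of the observation already embeds the key per-transition counting argument, and your proof simply unpacks it). Your additional detail on bucketing transitions by input symbol to obtain $\bigo(|\doc||\mathcal{T}|)$ construction time is a reasonable implementation of what the paper leaves implicit.
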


This means that the problem $\enumProbTransducer$ reduces to the following ranked enumeration problem $\enumProbGraph$: For a given DAG $G = (V,  E, \weightName, \labelName)$ with edge weights from $\group$ and edge labels from $\monoid$, and designated $s, t \in V$, let $\enumProbGraph(G, s, t) = \{\labelfunc{P} \mid P \text{ is an $s$-to-$t$-path of $G$}\}$. Note that the weight of an element $\labelfunc{P}$ of the output set is given by $\weightfunc{P}$.

Consequently, we now want to solve the ranked enumeration problem $\enumProbGraph^f$. We will next see (Theorem~\ref{simpleEnumTransducerTheorem}) that an algorithm for this problem can be obtained with moderate effort by using a data structure for the shortest paths of a DAG by Eppstein~\cite{Eppstein1998}, which we will discuss next.

\section{Eppstein's DAG of Shortest Paths}

Let $G = (V, E, \weightName, \labelName)$ be a graph with edge weights from $\group$ and edge labels from $\monoid$. Let $s, t \in V$ be distinguished nodes of $G$.\par
A \emph{heap representation of $G$'s $s$-to-$t$-paths} is a tree $\mathcal{H}(G)$. There is a one-to-one correspondence between the $s$-to-$t$-paths of $G$ and the nodes of $\mathcal{H}(G)$, and $\mathcal{H}(G)$ is a min-heap with respect to the weights of the $s$-to-$t$-paths, i.\,e., if a node $q$ of $\mathcal{H}(G)$ corresponds to an $s$-to-$t$-path $P_q$ and $q$ has a child $r$, then $r$ corresponds to an $s$-to-$t$-path $P_r$ with $\weightfunc{P_q} \leqg \weightfunc{P_r}$. 

The following result is shown in~\cite{Eppstein1998}, although here we phrase it in a way suitable for our applications. More details about this result can be found in Appendix~\ref{sec:AppendixEppstein}.

\begin{theorem}[Eppstein~\cite{Eppstein1998}]\label{EppsteinDAGTheorem}
Let $G = (V, E)$ be a DAG with edge weights from $\group$ and edge labels from $\monoid$, and let $s, t \in V$. We can compute in time and space $\bigO(|G|)$ a data structure that represents a heap representation $\mathcal{H}(G)$ of $G$'s $s$-to-$t$-paths. The degree of $\mathcal{H}(G)$ is $4$, we can move from a node $q$ of $\mathcal{H}(G)$ to any of its children in constant time, and, for every node $q$ of $\mathcal{H}(G)$ that represents an $s$-to-$t$-path $P_q$, we can retrieve $\weightfunc{P_q}$ in constant time and $\labelfunc{P_q}$ in time $\bigo(|\labelfunc{P_q}|)$.
\end{theorem}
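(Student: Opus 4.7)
The plan is to follow Eppstein's original construction~\cite{Eppstein1998} of the so-called ``heap of paths'', adapted to our slightly different setting of DAGs whose edges carry weights in an ordered abelian group $\group$ and labels in a free monoid $\monoid$, and then to augment it with a skip-structure that supports efficient label retrieval. Because we are working with a DAG, several steps of Eppstein's construction simplify compared to the general directed graph case.

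First, I would compute, for every vertex $v \in V$, the shortest-path weight $d(v)$ from $v$ to $t$; on a DAG this can be done in $\bigO(|G|)$ time by processing vertices in reverse topological order. This yields a shortest-path out-tree $T$ rooted at $t$. For every edge $e = (u, w, \beta, v) \in E$ that is not in $T$, define its \emph{detour cost} $\delta(e) = w \plusg d(v) \plusg (-d(u)) \geqg \zerog$. A standard argument then shows that $s$-to-$t$-paths in $G$ are in bijection with finite (possibly empty) sequences of detour edges $(e_1, \ldots, e_k)$ that are compatible with the tree structure, and the weight of the corresponding $s$-to-$t$-path $P$ is $d(s) \plusg \sum_{i=1}^k \delta(e_i)$.

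Second, I would organize the detour edges into Eppstein's nested persistent binary heaps: for each vertex $v$ a heap $H_{\mathrm{out}}(v)$ of detour edges leaving $v$, keyed by $\delta$, and a heap $H_T(v)$ that combines $H_{\mathrm{out}}(w)$ for all $w$ on the tree path from $v$ to $t$, obtained from $H_T(\mathrm{parent}(v))$ by a single persistent insertion. Using Eppstein's ``swap-at-the-root'' trick for persistent meldable heaps, each insertion costs $\bigO(1)$ amortized new nodes, so the total size of this structure is $\bigO(|G|)$. The heap $\mathcal{H}(G)$ then has one node per admissible detour sequence, with an additional root representing the empty detour sequence (i.e.\ the shortest path itself). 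Each node $q$ of $\mathcal{H}(G)$ stores a pointer to (i) the current detour edge, and (ii) its location in the relevant heap; the four children of $q$ correspond to the two heap-children inside the current $H_T$, the ``cross'' transition into the heap at the target of the current detour, and one further transition that extends the detour sequence. Storing at each node the cumulative detour cost yields $\weightfunc{P_q}$ in constant time as $d(s)$ plus that value.

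Third, I would handle label retrieval. The label of $P_q$ is the concatenation of the non-empty labels read along the alternating pattern of tree segments and detour edges prescribed by the detour sequence. To emit the non-$\eword$ labels in time $\bigO(|\labelfunc{P_q}|)$, I would precompute, for every vertex $v$, a pointer $\mathrm{next}(v)$ to the first vertex on the tree path $v \to t$ whose outgoing tree edge has a non-empty label, together with that label itself; this precomputation takes $\bigO(|G|)$ time by another reverse-topological sweep. Then, reconstructing $\labelfunc{P_q}$ amounts to walking the detour sequence and, between consecutive detours, following $\mathrm{next}$-pointers to skip the $\eword$-labeled portion of each tree segment, paying only $\bigO(1)$ per emitted non-empty label or per non-$\eword$-labeled detour edge. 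The main obstacle I anticipate is verifying in detail that Eppstein's persistence construction does yield linear total space and constant-degree navigation in our slightly generalized weighted/labeled setting; this is essentially bookkeeping, but it is where one must be most careful, since a naive persistent heap would blow up the space to $\bigO(|G|\log|G|)$.
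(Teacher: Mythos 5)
Your first two steps (shortest-path tree $T$ on the DAG, detour costs $\delta$, Eppstein's nested persistent heaps giving a degree-$4$ structure with $\bigO(1)$ navigation, and incremental maintenance of the cumulative detour cost so that $\weightfunc{P_q}$ is available in $\bigO(1)$) follow Eppstein~\cite{Eppstein1998} and match the paper's Appendix~\ref{sec:AppendixEppstein}; the $\bigO(|G|)$ space bound via persistence is also the intended argument.

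The gap is in your label-retrieval step. You propose to reconstruct $\labelfunc{P_q}$ by \emph{walking the detour sequence} of $P_q$ and using $\mathrm{next}$-pointers to skip the $\eword$-labeled portions of each intervening tree segment. This indeed charges $\bigO(1)$ per emitted symbol coming from a tree segment, but you still must visit \emph{every} sidetrack edge in the sequence, including those labeled $\eword$; nothing in the $\mathrm{next}$-pointer structure lets you skip a sidetrack edge, since it determines which tree segment comes next. Hence the cost is $\bigO(k + |\labelfunc{P_q}|)$, where $k$ is the number of sidetrack edges of $P_q$, not $\bigO(|\labelfunc{P_q}|)$. In the general statement (and already in the transducer application, where $\emptyMarker$-transitions give $\eword$-labeled sidetrack edges) $k$ can far exceed $|\labelfunc{P_q}|$, and this overcharge would break the output-linear delay bounds in Corollary~\ref{simpleEnumGraphCorollary} and Theorem~\ref{PathEnumWithSortingTheorem}. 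The paper avoids the walk entirely: using the model assumption that concatenation in the free monoid $\monoid$ is $\bigO(1)$, it maintains during navigation the partial label $L_i = \labelfunc{x_1}\concatm \labelfunc{e(x_1)}\concatm \cdots \concatm\labelfunc{x_i}\concatm \labelfunc{e(x_i)}$ as a single monoid element (one concatenation per heap- or cross-edge step), and precomputes $\beta_w$, the label of the shortest $w$-to-$t$-path, for all $w$ in $\bigO(|G|)$. Each node $q$ then just stores the pointer $L_k$, and $\labelfunc{P_q} = L_k \concatm \beta_w$ is materialized in $\bigO(|\labelfunc{P_q}|)$ time with no traversal of the detour sequence. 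So you should replace the $\mathrm{next}$-pointer reconstruction by incremental label maintenance; the $\mathrm{next}$-pointer precomputation then becomes unnecessary.
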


Note that we can only traverse the heap $\mathcal{H}(G)$ starting from its root; in particular, $\mathcal{H}(G)$ does not support a remove-min operation. Nevertheless, we can use $\mathcal{H}(G)$ for a ranked enumeration algorithm for $\enumProbGraph^f$. The general idea is the same as how to obtain the $k$ smallest elements from a min-heap (that we can only traverse) in time $\bigo(k \log k)$ as also sketched in the introduction of~\cite{Frederickson93}: We always produce the smallest element of an auxiliary min-heap (that supports a remove-min operation) as output, where the auxiliary min-heap stores all nodes of $\mathcal{H}(G)$ that are children of nodes that have already been produced as output.
This allows us to obtain an enumeration algorithm with delay $\bigO(|s_{i}| + \log i)$. In fact, a simple tweak keeps the size of the auxiliary min-heap
upper bounded by $|G|$, making the delay $\bigO(|s_{i}| + \min\{\log i,\log |G|\})$.

\begin{corollary}\label{simpleEnumGraphCorollary}
There is a ranked enumeration algorithm for $\enumProbGraph^f$ such that, on input $(G, s, t)$, the preprocessing time is $\bigO(|G|)$ and, if $(s_1, s_2, \ldots, s_k)$ is the output of the algorithm, then the delay of $s_i$ is $\bigO(|s_i| + \min\{\log i, \log |G| \})$ for every $i \in [k]$. 
\end{corollary}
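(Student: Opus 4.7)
The preprocessing is just the construction of Theorem~\ref{EppsteinDAGTheorem}: in time and space $\bigO(|G|)$ I obtain the implicit heap $\mathcal{H}(G)$ of all $s$-to-$t$-paths, with constant-time child navigation and constant-time access to $\weightfunc{P_q}$ at each visited node $q$. The enumeration phase is the standard technique for producing the elements of a min-heap in sorted order when only downward navigation is supported: maintain an auxiliary binary min-heap $H$ keyed by $\weightfunc{P_q}$, initialised with the root of $\mathcal{H}(G)$.

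To emit $s_i$, I extract the minimum $q$ from $H$, output $s_i := \labelfunc{P_q}$ in time $\bigO(|s_i|)$ using Theorem~\ref{EppsteinDAGTheorem}, and insert into $H$ the at most four children of $q$ in $\mathcal{H}(G)$. Correctness follows from the invariant that after $i-1$ emissions $H$ contains exactly the children (in $\mathcal{H}(G)$) of the previously extracted nodes: by the min-heap property of $\mathcal{H}(G)$, every not-yet-emitted path descends from some element of $H$ and therefore has weight at least that of the minimum of $H$, so the extraction delivers the $i$-th lightest path.

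For the delay, the extract-min and the up to four insertions each cost $\bigO(\log|H|)$. Since in each earlier iteration we added at most four items and removed one, $|H|=\bigO(i)$ after $i-1$ emissions, hence $\log|H|=\bigO(\log i)$; together with the $\bigO(|s_i|)$ label-retrieval this gives delay $\bigO(|s_i|+\log i)$.

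The only non-routine point — and where I would expect to have to be careful — is obtaining the alternative $\log|G|$ bound, i.e.\ capping $|H|$ at $\bigO(|G|)$ so that $\log|H|$ never exceeds $\log|G|$. The tweak I envisage is to spill the heaviest elements of $H$ into a secondary per-parent overflow structure whenever $|H|$ would exceed the cap, and to reinsert spilled children lazily once $H$ has room and their weight becomes competitive with the current min. Since each child migrates between $H$ and the overflow only a constant number of times, the per-step cost remains $\bigO(\log|G|)$, and combining the two analyses yields delay $\bigO(|s_i|+\min\{\log i,\log|G|\})$ as claimed.
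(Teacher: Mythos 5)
Your preprocessing, extraction loop, and the $\bigO(|s_i|+\log i)$ analysis match the paper's proof exactly. The problem is precisely the step you flag as ``non-routine'': your mechanism for capping $|H|$ at $\bigO(|G|)$ is not a proof. The proposal to ``spill the heaviest elements of $H$ into a secondary per-parent overflow structure'' and ``reinsert spilled children lazily once their weight becomes competitive with the current min'' has three gaps. First, per-parent is the wrong grouping: every node $q$ of $\mathcal{H}(G)$ has at most four children, so per-parent buckets are trivially small and give no way to keep only $O(|G|)$ items live. Second, the reinsertion trigger is circular: to decide whether a spilled element has become ``competitive with the current min'' you would have to inspect the overflow, and you have not bounded the cost of doing so. Third, the claim that ``each child migrates between $H$ and the overflow only a constant number of times'' is asserted, not argued; there is no reason to expect a lazy eviction policy to have that property without a sorted-arrival invariant to lean on.

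The paper's actual tweak is a different and genuinely load-bearing idea. It maintains one linked list $L(e)$ per edge $e$ of the Eppstein DAG $D_G$ (so $\bigO(|G|)$ lists, not one per parent), and only the head of each non-empty $L(e)$ is ever present in the auxiliary heap, bounding $|H|\le|G|$. The invariant making this work is that each $L(e)$ receives its elements already sorted: a child $p$ appended to $L(e)$ has weight $\weightfunc{P_q}+\weightfunc{e}$ where $q$ is its parent, parents are extracted from $H$ in non-decreasing weight order, and $\weightfunc{e}$ is a fixed offset for a fixed list. When a head is extracted, the next list element (if any) is promoted into $H$ with one insertion; there is no spilling, no re-scanning, and no migration count to bound. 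Your write-up is missing this per-$D_G$-edge decomposition and the one-line sorted-arrival argument that makes it sound.
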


\begin{proof}
A ranked enumeration algorithm for $\enumProbGraph^f$ with the claimed complexity bounds can be obtained as follows. In the preprocessing phase, we construct a data structure that represents a heap representation $\mathcal{H}(G)$ of $G$'s $s$-to-$t$-paths with degree $4$, such that we can move from a node $q$ of $\mathcal{H}(G)$ to any of its children in constant time, and every node $q$ of $\mathcal{H}(G)$ that represents an $s$-to-$t$-path $P_q$ stores a pointer to $\labelfunc{P_q}$. According to Theorem~\ref{EppsteinDAGTheorem}, this can be done in time $\bigO(|G|)$. In the following, for every node $q$ of $\mathcal{H}(G)$, we denote by $P_q$ the $s$-to-$t$-path of $G$ that is represented by $q$. We use an \emph{auxiliary heap}, which can store nodes of $\mathcal{H}(G)$ and which is a min-heap with respect to the weights of the $s$-to-$t$-paths represented by the nodes of $\mathcal{H}(G)$ it stores. Initially, we store the root of $\mathcal{H}(G)$ in the auxiliary heap. This concludes the preprocessing phase.\par
In the enumeration phase, we repeat the following step until the auxiliary heap is empty. We remove the minimum element of the auxiliary heap, which represents some node $q$ of $\mathcal{H}(G)$, and then we insert all of $q$'s children of $\mathcal{H}(G)$ into the auxiliary heap. After that, we produce the output element $\labelfunc{P_q}$.\par
This clearly produces an enumeration of the labels of the $s$-to-$t$-paths of $G$. Moreover, this enumeration is a ranked enumeration, due to the fact that $\mathcal{H}(G)$ is a min-heap with respect to the weights of the stored $s$-to-$t$-paths.\par
The preprocessing time is $\bigO(|G|)$. Every step of the enumeration phase requires one remove-min-operation with respect to the auxiliary heap, the traversals of at most $4$ edges in $\mathcal{H}(G)$ and the output of a label. Thus, the time needed for one step is $\bigO(|s| + \log h)$, where $s$ is the output label and $h$ is the current size of the auxiliary heap, which is bounded by the number of elements that have been produced as output (including the element produced in this step). Hence, if the enumeration phase produces the elements $s_1, s_2, \ldots, s_k$, then each element $s_i$ with $i \in [k]$ has a delay of $\bigO(|s_i| + \log i)$. Next, we explain how to tweak this approach to ensure that $h \leq|G|$, making the delay $\bigO(|s_{i}| + \min\{\log i, \log |G|\})$.

Instead of storing all the elements on the auxiliary heap, we will maintain them on multiple linked list. Let $D_{G}$ be the Eppstein-DAG that represents the heap $\mathcal{H}(G)$ of $G$'s $s$-to-$t$-paths, i.\,e., the nodes of $\mathcal{H}(G)$ correspond to paths in $D_{G}$ (see Appendix~\ref{sec:AppendixEppstein}).
For every edge $e$ of $D_{G}$, we maintain a separate linked list $L(e)$. Each element of $L(e)$ is a node of $\mathcal{H}(G)$ corresponding to a path of $D_{G}$ ending
with $e$. We will ensure that the elements are arranged in the non-decreasing order of their weights.
This allows us to store only the first element of each $L(e)$ (for non-empty $L(e)$) on the auxiliary heap, and make sure that the size of the auxiliary
heap is at most $|G|$.
Consider a node $q$ of $\mathcal{H}_{G}$ that has been output by the procedure. Then, the weight of every child $p$ of $q$ is equal to that of $q$
increased by the weight of the corresponding edge $e$ of $D_{G}$. We append $p$ to $L(e)$, and observe that this maintains the invariant: the previous element $p'$ of
$L(e)$ has been obtained as a child of some $q'$ that has been output earlier by the procedure, thus the weight of $q'$ is at most that of $q$, and
and the weight of $p'$ is the weight of $q'$ increased by the weight of $e$.
\end{proof}

\section{Improved Enumeration Algorithms}

We now obtain a ranked enumeration algorithm for $\enumProbTransducer^f$ with the reduction described in Section~\ref{sec:rankedInformationExtraction} and Corollary~\ref{simpleEnumGraphCorollary} (the preprocessing time follows by Observation~\ref{sizeOfDAGObservation}). 
As explained in the introduction, this constitutes a substantial improvement over Theorem~\ref{BorhisResultTheorem}.

\begin{theorem}\label{simpleEnumTransducerTheorem}
There is a ranked enumeration algorithm for $\enumProbTransducer^f$ such that, on input $\mathcal{T}$ and $\doc$, the preprocessing time is $\bigO(|\mathcal{T}| |\doc|)$ and, if $(s_1, s_2, \ldots, s_k)$ is the output of the algorithm, then the delay of $s_i$ is $\bigO(|s_i| + \min\{ \log i, \log(|\mathcal{T}| |\doc|)\})$ for every $i \in [k]$. 
\end{theorem}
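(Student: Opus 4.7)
The proof is a direct composition of the reduction in Section~\ref{sec:rankedInformationExtraction} with Corollary~\ref{simpleEnumGraphCorollary}, so my plan is to verify that the pieces fit together with the claimed running times.

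The plan is as follows. On input $\mathcal{T}$ and $\doc$, first perform the preprocessing step of turning $\mathcal{T}$ into an equivalent unambiguous cost transducer with a unique final state $q_f$, as described in Section~\ref{sec:rankedInformationExtraction}; this blows up $|\mathcal{T}|$ by at most a constant factor and preserves both the set of outputs and their weights. Next, construct the DAG-representation $G_{\doc, \mathcal{T}} = (V_{\doc, \mathcal{T}}, E_{\doc, \mathcal{T}}, \weightName, \labelName)$; by Observation~\ref{sizeOfDAGObservation}, this costs $\bigO(|\mathcal{T}||\doc|)$ time and produces a graph of size $\bigO(|\mathcal{T}||\doc|)$. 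Now invoke the algorithm of Corollary~\ref{simpleEnumGraphCorollary} on $(G_{\doc, \mathcal{T}}, (q_0, 0), (q_f, n))$, whose preprocessing completes in $\bigO(|G_{\doc, \mathcal{T}}|) = \bigO(|\mathcal{T}||\doc|)$, matching our target budget.

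For the enumeration phase, I would appeal to the weight- and label-preserving bijection between accepting runs of $\mathcal{T}$ on $\doc$ and $(q_0, 0)$-to-$(q_f, n)$-paths in $G_{\doc, \mathcal{T}}$ that is spelled out in Section~\ref{sec:rankedInformationExtraction}. Specifically, if $\rho$ corresponds to $P$, then $\weightfunc{\rho} = \weightfunc{P}$ and $\outputfunc{\rho} = \labelfunc{P}$. Combined with the unambiguity of $\mathcal{T}$, this shows $\enumProbTransducer(\mathcal{T}, \doc) = \enumProbGraph(G_{\doc, \mathcal{T}}, (q_0,0), (q_f,n))$ as $\group$-ranked sets, hence the same identification holds for the $f$-bounded versions. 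Therefore a ranked enumeration of the latter is a ranked enumeration of the former, and the delay guarantee of Corollary~\ref{simpleEnumGraphCorollary} carries over: the $i$-th output $s_i$ has delay $\bigO(|s_i| + \min\{\log i, \log |G_{\doc, \mathcal{T}}|\}) = \bigO(|s_i| + \min\{\log i, \log(|\mathcal{T}||\doc|)\})$, as claimed.

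Since every ingredient is already available, there is no real obstacle to overcome; the only point that deserves a brief sanity check is that the bijection between accepting runs and source-target paths genuinely matches up \emph{outputs} with \emph{labels} (not just weights), so that the two problems have the same output sets and the enumerations coincide element by element. This is exactly what the correspondence established in Section~\ref{sec:rankedInformationExtraction} asserts, and unambiguity of $\mathcal{T}$ guarantees that no output element of $\enumProbTransducer$ is duplicated by multiple paths in $G_{\doc, \mathcal{T}}$.
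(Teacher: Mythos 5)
Your proposal is correct and follows exactly the same route as the paper, which obtains the theorem by composing the reduction from Section~\ref{sec:rankedInformationExtraction} with Corollary~\ref{simpleEnumGraphCorollary} and Observation~\ref{sizeOfDAGObservation}; you merely spell out the bookkeeping (unique final state, size of $G_{\doc,\mathcal{T}}$, the weight/label-preserving bijection, and unambiguity ensuring no duplicated outputs) in more detail than the paper's one-line justification.
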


We will now further improve the delay. We first observe that the weights of all paths in the considered DAG admit a special structure.
Let $g_{1},g_{2},\ldots,g_{t} \in \group$ and let $n \in \mathbb{N}$. Any expression $\sum^t_{i = 1} \alpha_{i} g_{i}$, where $\alpha_{1},\alpha_{2},\ldots,\alpha_{t}\in [0,n]$ and $\sum^t_{i = 1} \alpha_{i} \leq n$ is called an \emph{$n$-sum} (\emph{over $g_1, g_2, \ldots, g_t$}). We will assume that every $n$-sum $\sum^t_{i = 1} \alpha_{i} g_{i}$ is represented by the tuple $(\alpha_{1},\alpha_{2},\ldots,\alpha_{t})$.
After a simple preprocessing, we are able to evaluate such a sum in $\bigo(t)$ time on demand.

\begin{observation}
\label{obs:compare}
In $\bigo(n t)$ time, we can compute all values $k g_i$ for every $k \in [n]$ and $i \in [t]$. This means that within an $\bigo(n t)$ preprocessing, we can compute a data structure that allows us to compute for a given tuple $(\alpha_{1},\alpha_{2},\ldots,\alpha_{t})$ the group-element $\sum^t_{i = 1} \alpha_{i} g_{i}$ in time $\bigo(t)$. 
\end{observation}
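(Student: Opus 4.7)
The plan is to build a two-dimensional lookup table $T$ of size $t \times n$, where $T[i][k] = k g_i$, and then to evaluate any $n$-sum by $t$ table lookups plus at most $t-1$ group additions.

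First I would argue the precomputation bound. For each fixed $i \in [t]$, I would set $T[i][1] := g_i$ and then compute iteratively $T[i][k] := T[i][k-1] \plusg g_i$ for $k = 2, 3, \ldots, n$. By the standing assumption that group addition takes constant time on the word RAM, each such step costs $\bigo(1)$, so filling row $i$ takes $\bigo(n)$ time; running this over all $i \in [t]$ gives the claimed $\bigo(nt)$ preprocessing. Since each entry of $T$ fits in a single machine word, the table also occupies $\bigo(nt)$ space and supports constant-time random access.

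Next I would describe the query: given a tuple $(\alpha_1, \ldots, \alpha_t)$ representing an $n$-sum, I initialize an accumulator $S := \zerog$ and, for each $i \in [t]$, set $S := S \plusg T[i][\alpha_i]$ (using the convention that $T[i][0] := \zerog$, which can be hard-wired in the table or handled by a simple conditional skip). Each iteration is a single table lookup followed by a constant-time group addition, so the total time for the whole query is $\bigo(t)$, independent of $n$. Correctness is immediate from the definition of $T$ and the associativity of $\plusg$.

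There is no real obstacle here — the statement is essentially a bookkeeping consequence of the constant-time arithmetic assumed for $\group$ together with the fact that $n$-sums have bounded integer coefficients, so no further ideas are needed beyond making sure the corner case $\alpha_i = 0$ is handled and that the coefficients $\alpha_i$ never exceed $n$ (which is guaranteed by the definition of an $n$-sum).
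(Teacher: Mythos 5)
Your proposal is correct and matches the (implicit) argument the paper has in mind: iteratively build the $t \times n$ table of prefix-multiples $k g_i$ in $\bigo(nt)$ time using constant-time group additions, then answer a query with $t$ lookups and $\bigo(t)$ additions. The paper states this as an observation without a written proof, and your bookkeeping—including the $\alpha_i = 0$ corner case—is exactly what is needed.
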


Our main technical result for improving the delay concerns the problem of sorting given $n$-sums.
The proof of this lemma is deferred to Section~\ref{sec:rounding}. 

\begin{lemma}\label{mainSortingLemma}
There is an algorithm that sorts any set $X$ of $n$-sums over some $g_{1},g_{2},\ldots,g_{t} \in \group$ with $|X| = N\geq n$
in time $\Oh(\poly(t)N)$ with high probability. Formally, for any $c \in \mathbb{N}$, the output of the algorithm is always correct,
and the bound on the time holds with probability at least $1-1/N^{c}$.
\end{lemma}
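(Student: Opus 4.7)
The plan is to reduce the task to an integer radix sort via a randomized rounding scheme, and then to verify correctness and resolve residual ambiguities with exact group comparisons. First, by Observation~\ref{obs:compare} we spend $\bigO(Nt)$ to tabulate all multiples $kg_i$ for $k\in[0,n]$ and $i\in[t]$, and then evaluate every $n$-sum $v=(\alpha_1,\ldots,\alpha_t)\in X$ to $L(v)=\sum_i\alpha_i g_i\in\group$ in $\bigO(t)$ time each. The task thus reduces to sorting $N$ group elements supporting constant-time $\leqg$-comparisons; a plain comparison sort costs $\bigO(Nt\log N)$ and the goal is to shave the $\log N$ factor to obtain $\bigO(\poly(t)\cdot N)$.

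To this end, I would build a randomized \emph{signature} $\sigma\colon X\to\mathbb{Z}$ of $\bigO(\log N+\log t)$ bits by choosing a random scaling $Q$ of magnitude $\poly(t,N)$, setting $p_i=\lfloor Qg_i\rfloor$, and defining $\sigma(v)=\sum_i\alpha_i p_i$. The signatures lie in a polynomial range, can be computed in $\bigO(Nt)$ total time, and can be radix-sorted in $\bigO(N)$. The rounding-error identity
\begin{equation*}
\sigma(v)-\sigma(w)=Q\bigl(L(v)-L(w)\bigr)-\sum_i(\alpha_i-\beta_i)(Qg_i-p_i)
\end{equation*}
shows that a nonzero integer gap in $\sigma$ certifies the sign of $L(v)-L(w)$ unless these values are abnormally close. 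After the radix sort I would then \emph{verify} the produced order by exact $\leqg$-comparisons of consecutive elements in $\bigO(Nt)$ time; any violation triggers a restart with a fresh $Q$, preserving deterministic correctness regardless of the random draw. Equal-signature buckets are finish-sorted by exact comparisons, contributing $\bigO(\poly(t)\sum_B|B|\log|B|)$. The case of a general ordered abelian group is handled via a Hahn-type embedding $\group\hookrightarrow\R^J$, iterating the rounding coordinate by coordinate and paying an extra $\poly(t)$ per level.

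The main obstacle and the technical heart of the argument is to show that with probability at least $1-1/N^{c}$ over the random $Q$, both (i) no signature inversion occurs between elements with distinct $L$-values and (ii) every signature bucket has size $\poly(t)$, so that the verification succeeds and the within-bucket sort totals $\bigO(\poly(t)N)$. Geometrically, each pair $(v,w)\in X^2$ defines a hyperplane $\bigl\{x\in\R^{t}\colon\sum_i(\alpha_i-\beta_i)x_i=0\bigr\}$, and a bad $Q$ is one whose scaled vector $Qg$ lies in a thin slab around one of these hyperplanes. Bounding the measure of bad $Q$ reduces to a linear-programming/volume estimate on the simplex $\{\alpha\ge 0\colon \|\alpha\|_1\le n\}$ that counts how many lattice differences $v-w$ can yield small values of $L$, combined with a union bound over the $\bigO(N^2)$ relevant hyperplanes. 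This is where the algebraic, geometric, and LP-based ingredients advertised in the introduction must be combined, and it is the step I expect to require the most care.
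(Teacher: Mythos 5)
There is a genuine gap, and it is at the very first step of the construction. Your signature scheme sets $p_i = \lfloor Q g_i\rfloor$ for a random scale $Q$, which presupposes that the algorithm can access $g_i$ as a real (or at least compute numerical products and floors of group elements). In the computational model of the paper, elements of $\group$ are abstract: the algorithm can only add them and compare them via $\leqg$, each in constant time. The Hahn embedding (Lemma~\ref{lem:real}) is used purely as an \emph{analytic} device to argue that, for the finitely many comparisons the algorithm will actually make, one may pretend $g_1,\ldots,g_t\in\R$; as the paper stresses, ``these real numbers are never computed by the algorithm.'' So $\lfloor Q g_i\rfloor$ is simply not a quantity your algorithm can produce, and the signature construction has no implementation. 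The paper circumvents exactly this obstacle via the Kane--Lovett--Moran inference-dimension machinery: it samples $s=\Theta(tN/\log^{c}N)$ of the $N^2$ pairwise hyperplanes, resolves those with \emph{exact} $\leqg$ comparisons (this is affordable because $s$ is subquadratic), and then solves a small linear program whose variables are the rational approximations $\tilde g_1,\ldots,\tilde g_t$ and whose constraints encode the observed comparison outcomes. Feasibility of that LP is where the Hahn embedding is used (the real embedding, suitably scaled, is a feasible point), but the algorithm never touches the reals themselves.

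A second issue, independent of the model: you need a single random $Q$ to preserve \emph{all} $\Theta(N^2)$ pairwise signs with high probability, via a union bound. Even in the embedded reals this cannot be guaranteed at any fixed polynomial precision, because the gaps $L(v)-L(w)$ can be arbitrarily small; for a non-Archimedean $\group$ (say $\Z\times\Z$ lexicographic) no single real projection even exists. The paper deliberately does \emph{not} aim to preserve all comparisons. Lemma~\ref{lem:whp} only guarantees that the rational proxy $\tilde g$ preserves \emph{most} of the unsampled comparisons, and the algorithm has explicit cleanup steps: in the first phase, the $\Oh(N/\log N)$ misplaced elements are reassigned by binary search; in the second phase, the $\Oh(\log^{3c+5}N)$ remaining inversions within blocks are removed by insertion sort. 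Your proposal has no analogue of this ``approximate then repair'' structure, and relying on verification-plus-restart alone would give you the wrong expected running time, since restarts would be frequent whenever close pairs are present. The verify-and-restart safety net you propose does appear in the paper, but only as a final guard against the low-probability events in Lemma~\ref{lem:whp}, not as a substitute for the sampling/LP/cleanup core.
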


In the main results of this paper, stated as Theorem \ref{PathEnumWithSortingTheorem} and its consequence Theorem \ref{advancedEnumTheorem}, we will use the result stated in Lemma~\ref{mainSortingLemma}, namely an efficient method of sorting sets of $n$-sums over the elements of $\group$, together with a further well-known data structures result, stated below as Lemma \ref{fredericksonResult}, for improving the delay of our enumeration algorithm for $\enumProbGraph^f$ (and therefore $\enumProbTransducer^f$). However, let us briefly mention that some natural questions arising in this context are why is such a novel sorting algorithm needed and why are standard approaches to sorting not sufficient in this context? To not break the flow of this section, we will address these question at its end, in Remark \ref{sortingComparison}, after showing how our sorting method is actually used.

As already mentioned above, alongside Lemma \ref{mainSortingLemma}, we need the following well-known result.

\begin{lemma}[Frederickson~\cite{Frederickson93}]\label{fredericksonResult}
Given some $k \in \mathbb{N}$ and a min-heap $H$ that allows us to move from a node to its children in constant time, we can find the element of rank $k$ of $H$ in time $\bigO(k)$.
\end{lemma}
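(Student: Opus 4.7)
The plan is to prove Frederickson's $O(k)$ bound for selecting the element of rank $k$ in a constant-degree min-heap $H$ with $O(1)$-time child access (this suffices, since the Eppstein heap from Theorem~\ref{EppsteinDAGTheorem} has degree $4$). The naive algorithm seeds an auxiliary priority queue with the root of $H$, then repeatedly extracts the minimum node $v$ and inserts $v$'s children; after $k$ extractions, the last $v$ is the answer. Correctness follows because the top-$k$ elements of $H$ form a connected subtree containing the root, so only $O(k)$ nodes of $H$ are ever touched. However, the $k$ auxiliary-heap operations cost $O(\log k)$ each, giving a baseline of $O(k \log k)$, and the whole point is to shave this logarithmic factor.

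The strategy is a bootstrap: one builds a chain of four selection algorithms with running times $O(k \log k)$, $O(k \log \log k)$, $O(k \log^{*} k)$, and finally $O(k)$, where each uses the previous as a subroutine. The unifying idea is to partition the relevant portion of $H$ into \emph{clans} of carefully chosen size $g$, and to maintain a secondary min-heap that stores only one representative (the clan minimum) per clan. Since among the top-$k$ candidates there are only $O(k/g)$ clans, the outer heap operations are cheap, while the work inside each clan is handled by the previously obtained, already-faster selection routine applied as a black box. Combined with a linear-time unordered selection primitive (e.g., median-of-medians) to trim the secondary heap whenever it grows too large, a balanced choice of $g$ at each level yields the successive improvements; after four rounds the total cost collapses to $O(k)$.

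The main obstacle is the combinatorial bookkeeping for clans: each clan must be a connected sub-subgraph of $H$ of bounded size, enumerable in time proportional to its size, and equipped with an efficiently computable boundary of external children to propagate when the clan is expanded; the clan-heap must stay correctly ordered, so that inter-clan minima dominate all non-boundary elements; and the recursive subproblems must have the right sizes for the analysis to close. Once these invariants are verified inductively across the four levels of the bootstrap, the bound $O(k)$ follows by summing the geometrically decreasing work per level. Since the resulting procedure only uses child access, it applies directly to any $H$ satisfying the hypothesis of the lemma.
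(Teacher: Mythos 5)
The paper does not actually prove this lemma: it is a black-box citation of Frederickson~\cite{Frederickson93}, used only to support the claim that, in each epoch of Theorem~\ref{PathEnumWithSortingTheorem}, the first $2^{i}n$ heap nodes can be extracted in $\bigO(2^{i}n)$ time (together with the remark following the lemma about turning rank-$k$ selection into listing all ranks $1,\ldots,k$). So there is no ``paper's own proof'' to compare against, and reconstructing one is strictly beyond what the paper asks for.

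With that caveat, your outline is a faithful high-level summary of Frederickson's actual argument: start from the naive $\bigO(k\log k)$ frontier-heap algorithm, and bootstrap through a constant number of refinements in which the frontier is grouped into connected ``clans'' of some size $g$, only a clan representative is kept on the outer heap, clan-internal work is delegated to the previous (slower) selection routine, and a linear-time unordered-selection step periodically prunes the outer heap. Your remark that constant degree (here, $4$) suffices is also correct. Two things keep this from being a proof as written. First, the intermediate bounds in your chain ($\bigO(k\log\log k)$, $\bigO(k\log^{*}k)$) do not match Frederickson's actual sequence of algorithms precisely; this is harmless if one only asserts ``a constant number of rounds reaching $\bigO(k)$,'' but one should not present a specific chain without verifying the recurrences close at each level. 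Second, the final paragraph names exactly the nontrivial part of the argument---clans as connected subgraphs enumerable in $\bigO(g)$ time, their boundary maintenance, the heap-order invariant between clan representatives, and the inductive size control of subproblems---and then defers all of it. Those invariants are the proof; a sketch that lists them as ``the main obstacle'' has not discharged them. For the purposes of this paper, the clean resolution is simply to keep the lemma as a citation, as the authors do.
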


Observe that Lemma~\ref{fredericksonResult} also means that we can get in $\bigo(k)$ time all nodes of $H$ with rank $1, 2, \ldots, k$, as follows: Compute first the element of rank $k$, which has some key $\ell$, and then explore $H$ starting from the root and output all elements with a key not larger than $\ell$ (or, in the case that there are more than one element with key $\ell$, stop as soon as $k$ elements have been produced).

For every directed graph $G = (V, E, \weightName, \labelName)$ with edge weights and edge labels, let $\weightnumber{G} = |\{\weightfunc{e} \mid e \in E\}|$ denote the number of distinct weights that occur in $G$. We can show the following theorem.

\begin{theorem}\label{PathEnumWithSortingTheorem}
There is a ranked enumeration algorithm for $\enumProbGraph^f$ such that, on any input $(G, s, t)$, the preprocessing time is $\bigO(|G| + \poly(\weightnumber{G})|V|)$ and every output element $s$ has a delay of $\bigo(|s| \poly(\weightnumber{G}))$. The guarantees on the preprocessing time and on the delay hold with high probability in the size of the input and the size of the output generated so far.
\end{theorem}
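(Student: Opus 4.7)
The plan is to realise the batched doubling strategy outlined in the introduction. Let $t = \weightnumber{G}$. In preprocessing we extract and sort the $|V|$ lightest $s$-to-$t$-paths; in the enumeration phase, while emitting the current sorted batch of size $B_k = 2^k |V|$, we interleave the preparation of batch $k+1$ of size $B_{k+1} = 2 B_k$. The combinatorial engines are Eppstein's heap (Theorem~\ref{EppsteinDAGTheorem}), Frederickson's selection (Lemma~\ref{fredericksonResult}) together with the DFS trick from Corollary~\ref{simpleEnumGraphCorollary} to list the $k$ lightest heap nodes in $\bigO(k)$ time, and, crucially, Lemma~\ref{mainSortingLemma} to sort each batch in near-linear time. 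Lemma~\ref{mainSortingLemma} requires each path's weight to be presented as a tuple $(\alpha_1, \ldots, \alpha_t)$ over the distinct edge-weights $g_1, \ldots, g_t$ occurring in $G$, which is the main additional ingredient we must supply.

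To obtain the tuples efficiently, I would first fix a shortest-path tree into the target and, by a bottom-up sweep, compute the tuple representation of each distance $d(v, t)$ in total time $\bigO(|V| t)$. Recall that in Eppstein's encoding a child of a heap node corresponds to adding exactly one sidetrack edge $e = (u, v')$ in place of some tree edge $(u, v)$; the resulting change to the path's tuple equals $\mathrm{tuple}(w(e) + d(v', t)) - \mathrm{tuple}(w(u, v) + d(v, t))$, which, given the precomputed distance tuples, is computable in $\bigO(t)$ time on the fly. Consequently, during a DFS of $\mathcal{H}(G)$ we can maintain the tuple of the currently visited heap node by applying this change on descent and subtracting it on backtrack, at cost $\bigO(t)$ per heap-edge traversal. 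Using Frederickson's algorithm plus this augmented DFS, I extract the $|V|$ lightest paths as (heap-node pointer, tuple) pairs in time $\bigO(|V| t)$, and then invoke Lemma~\ref{mainSortingLemma} on the tuples in time $\bigO(\poly(t) |V|)$ with high probability (its precondition $N \geq n$ is met since $N = |V|$ while path weights are $(|V|-1)$-sums). Adding $\bigO(|G|)$ for building $\mathcal{H}(G)$ from Theorem~\ref{EppsteinDAGTheorem}, the preprocessing fits in $\bigO(|G| + \poly(t) |V|)$.

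In the enumeration phase, stage $k$ outputs batch $k$ element by element using Eppstein's label retrieval at cost $\bigO(|s_i|)$ per path, while concurrently spending a constant multiple of $\poly(t)$ steps per output on preparing batch $k+1$ via the same extract--tuple--sort procedure, at total cost $\bigO(\poly(t) B_{k+1}) = \bigO(\poly(t) B_k)$. This ``prep share'' fits within the collective budget $\bigO(B_k \poly(t))$ available across the $B_k$ outputs of stage $k$, so batch $k+1$ is ready by the time stage $k$ ends, and the per-element delay is $\bigO(|s_i| + \poly(t)) = \bigO(|s_i| \poly(t))$. A union bound over the $\bigO(\log f(|I|))$ stages, using that Lemma~\ref{mainSortingLemma} fails at stage $k$ with probability at most $1/B_k^c$, yields total failure probability $\bigO(1/|V|^c)$, i.\,e.\ the high-probability guarantee in terms of $|G|$ and the output produced so far. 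The main obstacle is exactly this tuple-maintenance step: Lemma~\ref{mainSortingLemma} demands tuple input while Eppstein's heap exposes only group-valued weights; the precomputed distance tuples and incremental DFS updates bridge this gap and make $\poly(t)$ amortised sorting cost per path achievable.
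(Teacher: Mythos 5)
Your proof follows essentially the same strategy as the paper's: build Eppstein's heap, use Frederickson to extract successively doubling batches of lightest paths, sort each batch with Lemma~\ref{mainSortingLemma}, and amortize the preparation of batch $k+1$ over the outputs of batch $k$ via a step counter. Your explicit treatment of how to obtain the $n$-sum tuple representation required by Lemma~\ref{mainSortingLemma} (precomputing distance tuples along the shortest-path tree and maintaining the tuple incrementally in $\bigO(t)$ per heap transition during the Frederickson exploration) fills in a point the paper leaves implicit, and the remainder of the argument, including the union bound over epochs giving failure probability $\bigO(1/|V|^c)$, matches the paper's.
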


\begin{proof}
In the preprocessing phase, we construct a data structure that represents a heap representation $\mathcal{H}(G)$ with degree $4$ of the $s$-to-$t$-paths of $G$, such that we can move from a node $q$ of $\mathcal{H}(G)$ to any of its children in constant time, and, for every node $q$ of $\mathcal{H}(G)$ that represents an $s$-to-$t$-path $P_q$, we can retrieve $\weightfunc{P_q}$ in constant time and $\labelfunc{P_q}$ in time $\bigo(|\labelfunc{P_q}|)$. According to Theorem~\ref{EppsteinDAGTheorem}, this can be done in time $\bigO(|G|)$.

We define $n = |V|$ and assume that $g_1, g_2, \ldots, g_t$ are the weights that occur in $G$. We observe that since each edge of $G$ has a weight from $g_1, g_2, \ldots, g_t$ and every $s$-to-$t$-path has at most $n-1$ edges (since $G$ is a DAG), the weight of any $s$-to-$t$-path is an $n$-sum over $g_1, g_2, \ldots, g_t$. According to Lemma~\ref{mainSortingLemma}, we can sort any $N \geq n$ such $n$-sums over $g_1, g_2, \ldots, g_t$ in time $\Oh(\poly(t)N)$ with high probability in $N$.
Still in the preprocessing, we use Lemma~\ref{fredericksonResult} to extract the smallest $n$ elements of $\mathcal{H}(G)$ in time $\bigo(n)$, we sort them in $\Oh(\poly(t)n)$ time with high probability in $n$, using the algorithm of Lemma~\ref{mainSortingLemma}, and store them in the sorted order. This concludes the preprocessing, which clearly can be done in time $\bigO(|G| + \poly(t)n)$ with high probability.\par

The enumeration phase proceeds in several epochs. For every $i = 1, 2, 3, \ldots$, the $i^{\text{th}}$ epoch will enumerate the labels of the $2^{i-1} n$ elements of $\mathcal{H}(G)$ with ranks in $(2^{i-2} n, 2^{i-1} n]$ (or in $[1, n]$ for $i = 1$). As an invariant, we assume that when the $i^{\text{th}}$ epoch starts, we already have at our disposal in sorted order all elements whose labels are to be enumerated in this epoch. This invariant holds for the first epoch, due to the preprocessing mentioned above. 

In epoch $i$, we do the following: we extract the smallest $2^{i} n$ elements of $\mathcal{H}(G)$ with Frederickson's algorithm, sort them with Lemma~\ref{mainSortingLemma}, and store in the sorted order all elements with ranks from $(2^{i-1} n, 2^{i} n]$ (i.\,e., the elements to be enumerated in the next epoch $i+1$). Due Lemmas~\ref{fredericksonResult}~and~\ref{mainSortingLemma}, we can assume that the number of computational steps of this procedure is bounded by $c f(t) 2^{i} n$ with high probability in $2^{i}n$ for some constant $c$ and polynomial $f$ (that do not depend on $i$). At the beginning of epoch $i$, we initialise a counter with $0$ that simply counts the computational steps. Whenever the counter reaches value $\frac{c f(t) 2^{i} n}{2^{i-1} n}$, we output in time $\bigo(|L|)$ the label $L$ that corresponds to the next element of the list of pre-computed elements from $\mathcal{H}(G)$ and we set the counter back to $0$. Since we need at most $c f(t) 2^{i} n$ computational steps in total, and we have $2^{i-1} n$ pre-computed elements in epoch $i$, we will never run out of elements to output.

It can be easily verified that in this way epoch $i$ does in fact enumerate the labels of the $2^{i-1} n$ elements of $\mathcal{H}(G)$ with ranks in $(2^{i-2} n, 2^{i-1} n]$ (or in $[1, n]$ for $i = 1$). Moreover, the delay of any output element $L$ is clearly bounded by $|L| \frac{c f(t) 2^{i} n}{2^{i-1} n} \leq c' f(t) |L|$ with high probability in the size of the input and the size of the output generated so far, for a constant $c'$.
\end{proof}

We can directly use Theorem~\ref{PathEnumWithSortingTheorem} in order to get a ranked enumeration algorithm for $\enumProbTransducer^f$. Let $\doc$ be a document and let $\mathcal{T}$ be a cost transducer.
We compute the DAG representation $G_{\doc, \mathcal{T}}$ in time $\bigo(|\mathcal{T}||\doc|)$, and we observe that $G_{\doc, \mathcal{T}}$ is a DAG with $|\mathcal{T}||\doc|$ nodes and $\weightnumber{G_{\doc, \mathcal{T}}} = \bigo(|\mathcal{T}|)$. Hence, Theorem~\ref{PathEnumWithSortingTheorem} yields the following theorem, which is our main result.

\begin{theorem}\label{advancedEnumTheorem}
There is a ranked enumeration algorithm for $\enumProbTransducer^f$ such that, on any input $(\mathcal{T}, \doc)$, the preprocessing time is $\bigO(\poly(|\mathcal{T}|)|\doc|)$ and every output element $s$ has a delay of $\bigo(|s| \poly(|\mathcal{T}|))$. The guarantees on the preprocessing time and on the delay hold with high probability in the size of the input and the size of the output generated so far.
\end{theorem}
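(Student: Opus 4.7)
The plan is to derive Theorem~\ref{advancedEnumTheorem} as a direct consequence of Theorem~\ref{PathEnumWithSortingTheorem} via the reduction described at the end of Section~\ref{sec:rankedInformationExtraction}. The first step is, given the inputs $\mathcal{T}$ and $\doc$, to construct the DAG-representation $G_{\doc, \mathcal{T}} = (V_{\doc, \mathcal{T}}, E_{\doc, \mathcal{T}}, \weightName, \labelName)$ together with the designated source $s = (q_0, 0)$ and sink $t = (q_f, |\doc|)$, where we first normalise $\mathcal{T}$ to have a single final state as explained earlier. By Observation~\ref{sizeOfDAGObservation}, this construction takes time $\bigO(|\mathcal{T}||\doc|)$ and yields a DAG with $|V_{\doc, \mathcal{T}}| = \bigO(|\mathcal{T}||\doc|)$ nodes and total size $|G_{\doc, \mathcal{T}}| = \bigO(|\mathcal{T}||\doc|)$.

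Next, I would invoke the weight-preserving and label-preserving one-to-one correspondence between accepting runs of $\mathcal{T}$ on $\doc$ and $s$-to-$t$-paths of $G_{\doc, \mathcal{T}}$ established in Section~\ref{sec:rankedInformationExtraction}. Crucially, since $\mathcal{T}$ is unambiguous, distinct output tuples come from distinct accepting runs, which in turn correspond bijectively to distinct $s$-to-$t$-paths; thus, enumerating $\enumProbTransducer^f(\mathcal{T}, \doc)$ in ranked order is equivalent to enumerating $\enumProbGraph^f(G_{\doc, \mathcal{T}}, s, t)$ in ranked order.

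The third step is to bound the parameter $\weightnumber{G_{\doc, \mathcal{T}}}$ that governs the complexity in Theorem~\ref{PathEnumWithSortingTheorem}. Since every edge weight of $G_{\doc, \mathcal{T}}$ is inherited directly from a transition weight of $\mathcal{T}$, we have $\weightnumber{G_{\doc, \mathcal{T}}} = \bigO(|\mathcal{T}|)$. Substituting into Theorem~\ref{PathEnumWithSortingTheorem}, the preprocessing time becomes $\bigO(|G_{\doc, \mathcal{T}}| + \poly(\weightnumber{G_{\doc, \mathcal{T}}})|V_{\doc, \mathcal{T}}|) = \bigO(\poly(|\mathcal{T}|)|\doc|)$, and the delay for each output element $s$ becomes $\bigO(|s|\poly(\weightnumber{G_{\doc, \mathcal{T}}})) = \bigO(|s|\poly(|\mathcal{T}|))$, with the high-probability guarantees transferring verbatim from Theorem~\ref{PathEnumWithSortingTheorem}.

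There is no real obstacle here; the entire argument is a black-box application of Theorem~\ref{PathEnumWithSortingTheorem} together with the reduction. The only point deserving a brief verification is that the reduction preserves both weights and labels (so ranked enumeration of path labels is precisely ranked enumeration of output tuples) and that unambiguity of $\mathcal{T}$ implies the correspondence is one-to-one on the label level, so no duplicate outputs arise from different paths.
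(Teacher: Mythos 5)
Your argument matches the paper's own (very short) derivation exactly: construct $G_{\doc,\mathcal{T}}$ via Observation~\ref{sizeOfDAGObservation}, note $\weightnumber{G_{\doc,\mathcal{T}}} = \bigO(|\mathcal{T}|)$, and invoke Theorem~\ref{PathEnumWithSortingTheorem} through the weight- and label-preserving bijection between accepting runs and $s$-to-$t$-paths. Your additional remark about unambiguity guaranteeing no duplicate labels is a correct and useful clarification that the paper leaves implicit.
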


To completely substantiate our results, we still need to prove Lemma \ref{mainSortingLemma} and, as already announced above, comment on why its result cannot be achieved via some standard approaches to sorting. We will address the latter point below, while the proof of Lemma \ref{mainSortingLemma} is the focus of the next section. 
\begin{remark}\label{sortingComparison} 
Recall that Lemma \ref{mainSortingLemma} states that one can sort any set $X$ of $n$-sums over some $g_{1},g_{2},\ldots,g_{t} \in \group$ with $|X| = N\geq n$
in time $\Oh(\poly(t)N)$ with high probability. To begin with, it is worth noting that, in our setting, $N$ is upper bounded by a polynomial in $n$. Hence, if we restrict the setting of our approach to the case when $\group$ is the set of integers and each weight $g_1,\ldots, g_t$ can be represented in $\Oh(\log n)$ bits (and, as such, every $n$-sum contained in $X$ can also be represented in $O(\log n)$ bits), one can use radix sort \cite{Knuth73} to order the elements of the set $X$ deterministically in linear time. So, in that particular case, computing the value of the elements of $X$ and sorting them can be done deterministically in $\Oh(tN)$ time overall. In conclusion, in the restricted setting, this standard approach can be used instead of Lemma \ref{mainSortingLemma} in the algorithm of Theorem \ref{PathEnumWithSortingTheorem}. Already in a slightly more general setting, where we only assume that the weights are integers which fit in a memory word (of size $\omega \geq \log n$), but we do not generally assume that they consists of only $\Oh(\log n)$ bits, this does not work anymore. In this case, each $n$-sum still fits in a constant number of memory words, but it is not known whether we can sort the respective $N$ $n$-sums in $O(\poly(t)N)$ time (and, in general, any $N\geq n$ integers, about which we only know that they fit in a single memory word of size $\omega\geq \log n$ each, in $\Oh(N)$ time). For a discussion of sorting methods which could be used in this setting, see, e.g., \cite{AnderssonHNR98,HanT02,AnderssonT07} and the references therein; our result is already useful here, as it exhibits an algorithm sorting the elements of the set $X$ in $\Oh(\poly(t)N)$, albeit only with high probability. Finally, the setting in which we develop our algorithm (as defined in \cite{BourhisEtAl2021}) is more general than all those mentioned above, as we do not make any assumption on the structure and elements of $\group$, except that each of these elements fits in one memory-word. None of the aforementioned sorting algorithms seems to work in this framework, so, we needed to develop the novel sorting algorithm from Lemma~\ref{mainSortingLemma}, which specifically addresses sets of $n$-sums of $\group$-elements, and rely on it in the proof of Theorem \ref{PathEnumWithSortingTheorem}. 
\end{remark}

\section{Sorting $n$-Sums in Linear Time With High Probability}\label{sec:rounding}

This section is devoted to proving Lemma~\ref{mainSortingLemma}: There is an algorithm that sorts any set $X = \{x_{1},x_{2},\ldots,x_{N}\}$ of $n$-sums over some $g_{1},g_{2},\ldots,g_{t} \in \group$ with $|X| = N\geq n$ in time $\Oh(\poly(t)N)$ with high probability in $N$. We structure this proof on several paragraphs for readability purposes.

\paragraph{$ \S $ Algebraic preliminaries.} For basic results and notations regarding linear algebra and, respectively, probability theory, we refer to the standard handbooks \cite{lang} and, respectively, \cite{mitzenmacher}. 

Let us begin with a series of preliminary results. Let $x_{i}=\sum_{j=1}^{t}\alpha^{i}_{j}g_{j}$ for every $i \in [N]$. Then, $(\alpha^{i}_{1},\alpha^{i}_{2},\ldots,\alpha^{i}_{t})=(\alpha^{i'}_{1},\alpha^{i'}_{2},\ldots,\alpha^{i'}_{t})$ implies $x_{i}=x_{i'}$, so we can identify identical vectors $(\alpha^{i}_{1},\alpha^{i}_{2},\ldots,\alpha^{i}_{t})$ in $\Oh(t(N+n))=\Oh(tN)$
time by radix sorting and remove elements corresponding to duplicated vectors from $X$. After such preliminary
reduction, $N\leq n^{t}$, so $\log N\leq t\log n$. Next, it will be convenient to assume that the actual $n$-sums
(and not just their vectors) are pairwise distinct, which can be assumed without losing
generality by the following simple reasoning.

\begin{proposition} \label{prop:distinct} 
Let $n \in \mathbb{N}$. Sorting any given $n$-sums over $g_{1},g_{2},\ldots,g_{t} \in \group$ can be reduced in linear time to sorting $N$ \emph{distinct} $(N+n)$-sums over $g'_{1},g'_{2},\ldots,g'_{t}$, where $g'_{1},g'_{2},\ldots,g'_{t}$ are elements of a new ordered group $(\group', \plusgg, \zerogg, \leqgg)$ with an effective representation.
\end{proposition}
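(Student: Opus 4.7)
The plan is to lift each $n$-sum $x_{i}=\sum_{j}\alpha^{i}_{j}g_{j}$ to an element $y_{i}$ of an extended ordered group $\group'$ in which the $y_{i}$ are pairwise distinct and the new order refines the old one, so that sorting the $y_{i}$ in $\group'$ yields a valid sort of the $x_{i}$ in $\group$. The natural candidate is the lexicographic direct sum $\group' = \group\oplus\mathbb{Z}$ with the $\group$ coordinate dominating: it is an ordered abelian group, each element occupies a constant number of machine words, and both $\plusgg$ and $\leqgg$ run in constant time, so $\group'$ has an effective representation.

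I would first apply the preliminary reduction described just above the proposition: a radix sort of the coefficient vectors in $\Oh(t(N+n))$ time discards duplicate vectors, leaving $N$ inputs with pairwise distinct coefficient tuples. A second radix sort assigns each surviving vector a unique rank $r_{i}\in[1,N]$ in the lexicographic order of the vectors. I would then take $g'_{j}=(g_{j},0)$ for $j=1,\ldots,t$, together with an auxiliary generator $g'_{t+1}=(\zerog,1)$, and set
\[
   y_{i} \;=\; \sum_{j=1}^{t}\alpha^{i}_{j}\, g'_{j} \;+\; r_{i}\cdot g'_{t+1} \;=\; (x_{i},\,r_{i})\in\group'.
\]
The total coefficient sum of $y_{i}$ is $\sum_{j}\alpha^{i}_{j}+r_{i}\leq n+N$, so $y_{i}$ is an $(N+n)$-sum. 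The $y_{i}$ are pairwise distinct because the $r_{i}$ are, and because $\leqgg$ is lexicographic with $\group$ primary, $y_{i}\leqgg y_{i'}$ iff $x_{i}\lessg x_{i'}$, or $x_{i}=x_{i'}$ and $r_{i}\leq r_{i'}$. Hence any sort of the $y_{i}$ projects on the first coordinate to a valid sort of the $x_{i}$, and everything runs in $\Oh(t(N+n))=\Oh(tN)$ time.

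The main delicate point will be matching the literal wording of the statement, which lists exactly $t$ new generators while my natural construction uses $t+1$. This is harmless for the downstream Lemma~\ref{mainSortingLemma}, which absorbs the discrepancy in its $\poly(t)$ factor, but if one wishes to keep exactly $t$ generators the tie-breaker should instead be folded into the second coordinates of the existing generators, i.e., $g'_{j}=(g_{j},c_{j})$ with integers $c_{j}$ chosen so that the map $(\alpha_{j})\mapsto\sum_{j}\alpha_{j}c_{j}$ is injective on admissible coefficient vectors. A deterministic choice such as $c_{j}=(n+1)^{j-1}$ always works, at the price of widening the second coordinate to $\Oh(t\log n)$ bits; the representation of $\group'$ is then effective only in the $\poly(t)$-word sense that the downstream algorithm tolerates. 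Either variant realises the proposition, reducing the original task to sorting $N$ pairwise distinct $(N+n)$-sums in an ordered group with an effective representation.
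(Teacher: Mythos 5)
Your construction is essentially the same as the paper's: take $\group' = \group \oplus \mathbb{Z}$ with lexicographic order, set $g'_j = (g_j, 0)$ and add an auxiliary generator $g'_{t+1} = (\zerog, 1)$, then use a distinct integer tie-breaker in the second coordinate. The only cosmetic difference is that you compute the tie-breaker $r_i$ as the rank from a second radix sort, whereas the paper simply uses the input index $k$ directly, which suffices and avoids the extra pass. Your observation about the mismatch between the statement's $t$ generators and the construction's $t+1$ is astute but reflects a genuine bit of looseness in the paper itself, which the paper absorbs by immediately renaming $g'_1, \ldots, g'_{t+1}$ back to $g_1, \ldots, g_t$ after the proposition; your fallback with $c_j = (n+1)^{j-1}$ would indeed break the single-machine-word requirement in the paper's definition of effective representation, so the $t+1$-generator version is the right one.
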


\begin{proof}
The elements of $\group'$ are pairs $(g,x)$, where $g\in \group$ and $x\in \Z$, with $\plusgg$ being the coordinate-wise addition
and $\leqgg$ the lexicographical comparison. It is clear that $\group'$ admits an effective representation.
Next, we define $g'_{i} = (g_{i},0)$, for $i=1,2,\ldots,t$, and $g'_{t+1}= (0,1)$. Now, let the $k$-th given $n$-sum
be $\sum_{i=1}^{t} \alpha_{i}g_{i}$. We convert it to $(\sum_{i=1}^{t}\alpha_{i}g'_{i}) + k \cdot g'_{t+1}$.
This ensures that the obtained $(n+N)$-sums are pairwise distinct, as each of them has distinct second coordinate.
Arranging them in the strictly increasing order gives us the original $n$-sums in the non-decreasing order.
\end{proof}

To avoid clutter, from now on we will assume that the goal is to sort $N$ given $(N+n)$-sums over $g_{1},g_{2},\ldots,g_{t}\in\group$
(instead of talking about $g'_{1},g'_{2},\ldots,g'_{t} \in \group'$).

Further, we need some background on ordered abelian groups.
Hahn~\cite{Hahn1907} showed that
any such group has a certain nice structure (also see a short proof by Clifford~\cite{Clifford1954} that extends
a previous proof by Hausner and Wendel~\cite{HausnerW1952} for the case of ordered vector spaces over the reals).
To precisely describe this structure, we need a few definitions. Two elements $x,y\in \group$ are Archimedean equivalent if there
exist natural numbers $m,n$ such that $|x| \leqg n |y|$ and $|y| \leqg m|x|$. This naturally defines the Archimedean
equivalence classes of $\group$, denoted $\Omega$, and we let $[x]$ denote the Archimedean equivalence class of $x\in \group$. We
define a total order over $\Omega$, denoted $\lessa$, as follows:
for two classes $[x], [y] \in \Omega$, we have $[x] \lessa [y]$ when $m |y| \lessg |x|$ for every natural number $m$
(equivalently, $m y \lessg |x|$ for every integer number $m$).
$\R^{\Omega}$ denotes the set of all functions from $\Omega$ to $\R$ with a well-ordered support (where the support, denoted $\supp(\cdot)$, is the subset of $\Omega$ which is not mapped to $0$ by the respective function),
which allows us to define the lexicographical ordering on $\R^{\Omega}$ in a natural way by saying that, for
$f,g\in \R^{\Omega}$, we have $f\leqlex g$ when $f(x)<g(x)$ for $x=\min \{  y\in \supp(f)\cap \supp(g) | f(y) \neq g(y) \}$,
which is always defined by the assumption of $\supp(f)$ and $\supp(g)$ being well-ordered. 
This gives us a natural way to define an ordered abelian group over $\R^{\Omega}$, where the operation is the addition of functions.
For two ordered abelian groups $(\group, \plusg, \zerog, \leqg)$ and $(\group', \plusg', \zerog', \leqg')$,
an order-preserving embedding is a function $h : \group \rightarrow \group'$ such that, for all $x,y\in \group$, we have:
\begin{enumerate}
\item $h(x \plusg y) = h(x) \plusg' h(y)$ (group homomorphism),
\item $x \leqg y$ if and only if $h(x) \leqg' h(y)$ (order embedding). 
\end{enumerate}

\begin{theorem}[Hahn's embedding theorem~\cite{Hahn1907,Clifford1954}]
\label{thm:hahn}
$(\group, \plusg, \zerog, \leqg)$ has an order-preserving embedding into $\R^{\Omega}$, where $\R^{\Omega}$ has the
lexicographical ordering.
\end{theorem}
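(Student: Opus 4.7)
The plan is to embed $\group$ into $\R^\Omega$ by extracting, for each Archimedean class $\alpha \in \Omega$, a real coefficient that captures the ``component of order $\alpha$'' of an element $x \in \group$. The main tool is H\"older's theorem, which states that every Archimedean ordered abelian group embeds order-preservingly into $(\R, +, 0, \leq)$; this converts each quotient of suitable convex subgroups into real coefficients, and the overall construction stitches these together via a transfinite peeling through $\Omega$.

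First I would set up a family of convex subgroups indexed by $\Omega$. For each $\alpha \in \Omega$, let
\[
B_\alpha = \{x \in \group : x = \zerog \text{ or } \alpha \leqa [x]\}, \qquad B_\alpha^- = \{x \in \group : x = \zerog \text{ or } \alpha \lessa [x]\}.
\]
A direct check using $|x \plusg y| \leqg |x| \plusg |y|$ shows that both are convex subgroups of $\group$ (closed under $\plusg$ and negation, and containing every element sandwiched between $\zerog$ and one of their elements), and $B_\alpha^- \subsetneq B_\alpha$ whenever some element realises class $\alpha$. The quotient $B_\alpha / B_\alpha^-$, equipped with the induced order, is then an Archimedean ordered abelian group, since any two non-zero elements have class $\alpha$ in the quotient and hence bound each other by integer multiples. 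H\"older's theorem therefore yields an order-preserving embedding $\phi_\alpha : B_\alpha / B_\alpha^- \to (\R, +, 0, \leq)$.

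Next I would build $h : \group \to \R^\Omega$ by transfinite recursion on the top class of the remaining element. Set $h(\zerog)$ to be the zero function, and for $x \neq \zerog$ proceed as follows: declare $h(x)(\alpha) = 0$ for every $\alpha$ with $[x] \lessa \alpha$ (this is forced, since any coordinate strictly ``below'' the top class in $\group$-magnitude must vanish); put $h(x)([x]) := \phi_{[x]}(x + B_{[x]}^-)$; then pick any $y \in B_{[x]}$ with $y + B_{[x]}^- = x + B_{[x]}^-$ and recursively let $h(x)(\beta) := h(x \plusg (-y))(\beta)$ for every $\beta$ with $[x] \lessa \beta$. Two subtleties arise: (i) the assignment is independent of the choice of $y$, because any two such choices differ by an element of $B_{[x]}^-$ whose entire image lives at classes strictly larger than $[x]$ in $\leqa$; and (ii) $\supp(h(x))$ is well-ordered in $(\Omega, \leqa)$, because each recursive step strictly increases the current top class, so the support is order-isomorphic to an ordinal. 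To make the recursion rigorous at limit stages I would apply Zorn's lemma to the poset of pairs $(H, h_H)$, where $H$ is a subgroup of $\group$ and $h_H : H \to \R^\Omega$ is an order-preserving embedding compatible with the construction so far, ordered by extension; the peeling step above shows that any maximal element of this poset must already be defined on all of $\group$.

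Finally I would verify the two required properties. The homomorphism identity $h(x \plusg y) = h(x) + h(y)$ follows coordinate by coordinate: at class $\alpha$, both sides are obtained by $\phi_\alpha$ applied to sums in $B_\alpha / B_\alpha^-$, and $\phi_\alpha$ is itself a homomorphism. Order-preservation reduces to showing that $x \greatg \zerog$ implies $h(x)$ is a non-zero function whose leading (i.e., $\leqa$-minimal) non-zero coefficient is strictly positive; applying this to $y \plusg (-x)$ then gives the order embedding via the definition of $\leqlex$. Indeed, for $x \greatg \zerog$ the minimum of $\supp(h(x))$ in $\leqa$ is exactly $[x]$, and $h(x)([x]) = \phi_{[x]}(x + B_{[x]}^-) > 0$ because $\phi_{[x]}$ preserves order. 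The main obstacle I expect is the simultaneous handling of the cross-section choices and the well-ordered support requirement, which is precisely why one packages the construction via Zorn's lemma on partial embeddings rather than via a naive ordinal recursion that would have to justify compatibility at limits by hand.
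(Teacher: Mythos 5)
The paper does not prove this statement at all: Hahn's embedding theorem is invoked as a black box, cited to Hahn and to Clifford's short proof, so your proposal is being measured against the classical proofs rather than anything in the text. Your first half is the standard and correct opening of all of those proofs: the convex subgroups $B_\alpha$ and $B_\alpha^-$, the observation that $B_\alpha/B_\alpha^-$ is Archimedean, and the appeal to H\"older's theorem to get the coefficient maps $\phi_\alpha$ are exactly right.

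The gap is in the construction of $h$ itself. Your claim (i) is false as stated: two representatives $y_1,y_2$ of the coset $x+B_{[x]}^-$ differ by an element of $B_{[x]}^-$, and that difference perturbs precisely the coordinates at classes $\beta$ with $[x]\lessa\beta$ --- i.e.\ exactly the coordinates the recursive call is supposed to define --- so ``pick any $y$'' does not give a well-defined function (note also that $y=x$ is a legal choice, which collapses the recursion entirely). To make the peeling canonical you would need a coherent, additive system of cross-sections of the quotient maps $B_\alpha\to B_\alpha/B_\alpha^-$, and such splittings do not exist in general; this non-splitting is the very reason Hahn's theorem is an embedding into the Hahn product $\R^{\Omega}$ rather than an isomorphism onto a lexicographic direct sum. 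Second, the recursion need not terminate: there are elements whose image is forced to have infinite (well-ordered) support, and after $\omega$ peels there is no ``remainder'' left in $\group$ to recurse on, so the limit stages are not a bookkeeping issue that Zorn's lemma absorbs. Finally, the Zorn argument you invoke is missing its essential ingredient: given a maximal partial order-preserving embedding $(H,h_H)$ and $x\notin H$, you must show $h_H$ extends to the subgroup generated by $H$ and $x$ as an order-preserving homomorphism with well-ordered supports, and your sketch asserts this (``the peeling step above shows\ldots'') rather than proving it. That extension lemma is the actual heart of Hahn's theorem --- it is where Hahn, Clifford, Hausner--Wendel and later authors spend essentially all of their effort, typically after passing to the divisible hull so that $\group$ becomes a $\Q$-vector space and scalar extension arguments become available --- so as it stands the proposal reduces the theorem to its hardest step without addressing it.
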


As we can only access $\group$ by calculating expressions of the form $\sum_{i=1}^{t} \alpha_{i} g_{i}$,
we can assume that $\group$ is finitely generated by $\{g_{1},g_{2},\ldots,g_{t}\}$; to avoid clutter, we will not write
anymore that the sums are for $i$ from $1$ to $t$, and consider this implicit. In this case,
Hahn's embedding theorem implies the following.

\begin{lemma}
\label{lem:embedding}
If $(\group, \plusg, \zerog, \leqg)$ is finitely generated by $\{g_{1},g_{2},\ldots,g_{t}\}$ then there exist $\vec{v_{1}},\vec{v_{2}},\ldots,\vec{v_{t}}\in \R^{t}$
such that for any $\alpha_{1},\alpha_{2},\ldots,\alpha_{t}\in \Z$ we have $\sum_{i}\alpha_{i}g_{i} \leqg 0$ if and only if
$\sum_{i}\alpha_{i}\vec{v_{i}} \leqlex \vec{0}$.
\end{lemma}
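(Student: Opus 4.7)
The plan is to apply Hahn's embedding theorem to the finitely generated subgroup $\group^{*} := \langle g_{1}, g_{2}, \ldots, g_{t} \rangle$ of $\group$ and then exploit the fact that its set $\Omega^{*}$ of non-trivial Archimedean classes has cardinality at most $t$, so that Hahn's embedding naturally lands in $\R^{t}$ (after padding with zeros) in an order-preserving way.

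First, I would observe that $\group^{*}$, endowed with the order inherited from $\group$, is itself an ordered abelian group, and is torsion-free (as every ordered abelian group is), hence a finitely generated abelian group of $\Z$-module rank at most $t$. To bound $|\Omega^{*}|$, I would pick one representative $x_{C}$ from each class $C \in \Omega^{*}$ and show that these representatives are $\Z$-linearly independent in $\group^{*}$. Indeed, suppose $\sum_{j} n_{j} x_{C_{j}} = 0$ with integer coefficients not all zero, and let $C_{j^{\star}}$ be the $\lessa$-smallest (Archimedean-largest) class among those with $n_{j^{\star}} \neq 0$. Combining the Archimedean domination $m \, |x_{C_{j}}| \lessg |x_{C_{j^{\star}}}|$ (for every $m \in \mathbb{N}$ and every $j \neq j^{\star}$ with $n_{j} \neq 0$) with the triangle inequality for the order, the term $n_{j^{\star}} x_{C_{j^{\star}}}$ strictly dominates the sum of the remaining terms, so the total cannot vanish, a contradiction. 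Hence $|\Omega^{*}| \leq \mathrm{rank}(\group^{*}) \leq t$.

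With this bound in hand, I would apply Theorem~\ref{thm:hahn} to $\group^{*}$ to obtain an order-preserving group embedding $h : \group^{*} \to \R^{\Omega^{*}}$. Enumerating the elements of $\Omega^{*}$ as $\omega_{1} \lessa \omega_{2} \lessa \cdots \lessa \omega_{k}$ with $k \leq t$ identifies $\R^{\Omega^{*}}$ with $\R^{k}$ (the first coordinate corresponding to $\omega_{1}$), an identification that preserves the lex order by the very definition of $\leqlex$; padding with $t-k$ trailing zeros then embeds $\R^{k}$ into $\R^{t}$ as an order-preserving group homomorphism (trailing zeros agree on both sides of any comparison and hence do not affect the lex order). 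Setting $\vec{v_{i}} \in \R^{t}$ to be the image of $g_{i}$ under this composite, the homomorphism property yields $\sum_{i} \alpha_{i} \vec{v_{i}} = h(\sum_{i} \alpha_{i} g_{i})$ viewed in $\R^{t}$, and order preservation yields the desired equivalence $\sum_{i} \alpha_{i} g_{i} \leqg 0$ if and only if $\sum_{i} \alpha_{i} \vec{v_{i}} \leqlex \vec{0}$.

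The main obstacle is the rank-based bound $|\Omega^{*}| \leq t$, which hinges on the $\Z$-linear independence of the Archimedean-class representatives above; once that is in place, the remainder is a routine unpacking of Hahn's theorem together with an elementary check that the lex ordering on $\R^{\Omega^{*}}$ transports to the standard lex ordering on $\R^{t}$.
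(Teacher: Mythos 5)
Your proposal is correct and takes essentially the same approach as the paper: bound the number of Archimedean equivalence classes by $t$, then apply Hahn's embedding theorem and pad the image in $\R^{\Omega}$ up to $\R^{t}$, setting $\vec{v_i} = h(g_i)$. The only cosmetic difference is that the paper argues by contradiction from an assumed set of $t+1$ class representatives (which must be $\Q$-linearly dependent by a dimension count, and then Archimedean domination is used to refute the dependence), whereas you prove the contrapositive directly by showing the class representatives are $\Z$-linearly independent via the same Archimedean-domination argument; both reduce to the identical core inequality.
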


\begin{proof}
Any element of $\group$ is of the form $\sum_{i} \gamma_{i} g_{i}$, where $\gamma_{1},\gamma_{2},\ldots,\gamma_{t}\in \Z$,
or $\langle \vec{\gamma}, \vec{g} \rangle$ in short with $\vec{\gamma} = (\gamma_{1},\gamma_{2},\ldots,\gamma_{t})$ and $\vec{g} = (g_{1}, g_{2}, \ldots, g_{t})$.
We claim that there are at most $t$ Archimedean equivalence classes of $\group$.
Assume otherwise, and choose $t+1$ such classes $[x_{1}], [x_{2}], \ldots, [x_{t+1}]$, where
$[x_{t+1}] \lessa [x_{1}], [x_{2}], \ldots, [x_{t}]$ and without losing the generality $0_{\group} \lessg x_{t+1}$.
Write $x_{i} = \langle \vec{\alpha_{i}} ,\vec{g} \rangle$ with $\vec{\alpha_{i}}\in \Z^{t}$, for $i=1,2,\ldots,t+1$.
Then, the $t+1$ vectors $\vec{\alpha_{i}}$ cannot be linearly independent over the rationals, so
(by multiplying by the least common denominator) there are integer coefficients
$\beta_{1},\beta_{2},\ldots,\beta_{t}$ such that $\sum_{i=1}^{t}\beta_{i} \vec{\alpha_{i}} = \vec{\alpha_{t+1}}$. 
This implies $\sum_{i=1}^{t} \beta_{i} \vec{\alpha_{i}} \vec{g}  = \vec{\alpha_{t+1}} \vec{g}$, and, since $x_{i} = \langle \vec{\alpha_{i}} ,\vec{g} \rangle$ for all $i=1,2,\ldots,t+1$, we obtain $\sum_{i=1}^{t} \beta_{i} x_i = x_{t+1}$ and therefore it follows that $\sum_{i=1}^{t}t\beta_{i} x_{i}= t x_{t+1}$.
Next, from $[x_{t+1}] \lessa [x_{i}]$ we have
$t\beta_{i}x_{i} \lessg |x_{t+1}|$, for $i=1,2,\ldots,t+1$. Adding the inequalities we conclude that
$t x_{t+1}=\sum_{i=1}^{t}t\beta_{i} x_{i} \lessg t |x_{t+1}|$, which is a contradiction with $\zerog \lessg x_{t+1}$.
Thus indeed $\Omega$ is a finite set of cardinality at most $t' \leq t$ so there are at most $t' \leq t$ Archimedean equivalence classes of $\group$ as claimed above.
Then $\R^{\Omega}$ consists of length-$t'$ vectors composed of real numbers.
We can pad the vectors with zeroes to make their lengths exactly $t$.
By \Cref{thm:hahn}, there exists a function $h: \group \rightarrow \R^{\Omega}$ such that,
for all $x,y\in \group$, $h(x \plusg y) = h(x) + h(y)$ and $x \leqg y$ if and only if $h(x) \leqlex h(y)$.
We take $\vec{v_{i}} = h(g_{i})$, for every $i=1,2,\ldots,t$.
Then, consider any $\alpha_{1},\alpha_{2},\ldots,\alpha_{t}\in \Z$. By the properties of $h$,
$h(\sum_{i}\alpha_{i}g_{i})=\sum_{i}\alpha_{i}h(g_{i})=\sum_{i}\alpha_{i}\vec{v_{i}}$, and
$\sum_{i}\alpha_{i}g_{i} \leqg \zerog$ if and only if $\sum_{i} \alpha_{i} \vec{v_{i}}= h(\sum_{i}\alpha_{i}g_{i}) \leqlex \vec{0}$.
\end{proof}

Further, we will later verify that our algorithm only accesses $\group$ by performing comparisons of the form
$\sum_{i} \alpha_{i}g_{i} \leqg \zerog$, where $\alpha_{1},\alpha_{2},\ldots,\alpha_{t}\in [-2(N+n),2(N+n)]$.
In such a case, we can actually think that the elements of $\group$ are real numbers (in fact, to this end, it is enough
that every $\alpha_{i}$ comes from a finite set).
We stress that these real numbers are never computed by the algorithm, and this is just
a formal proof that, without losing generality, we can think that $\group$ is a subgroup of $(\R, +, 0, \leq)$,
which is a crucial insight towards understanding both the correctness and complexity of this algorithm.
The following lemma follows from \Cref{lem:embedding}.

\begin{lemma}
\label{lem:real}
There exist $\hat{g}_{1}, \hat{g}_{2}, \ldots, \hat{g}_{t} \in \R$ such that, for every $\alpha_{1},\alpha_{2},\ldots,\alpha_{t} \in [-2(N+n),2(N+n)]$, we have $\sum_{i} \alpha_{i}g_{i} \leqg \zerog$ if and only if $\sum_{i} \alpha_{i}\hat{g}_{i} \leq 0$.
\end{lemma}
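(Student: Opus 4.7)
The plan is to derive \Cref{lem:real} directly from \Cref{lem:embedding} by collapsing the lexicographic comparison on $\R^{t}$ into a single sign comparison in $\R$ via a carefully chosen weighted sum. By \Cref{lem:embedding}, I have vectors $\vec{v_{1}},\ldots,\vec{v_{t}} \in \R^{t}$ such that, for any integers $\alpha_{1},\ldots,\alpha_{t}$, $\sum_{i}\alpha_{i}g_{i} \leqg \zerog$ if and only if $\vec{w}(\alpha) := \sum_{i}\alpha_{i}\vec{v_{i}} \leqlex \vec{0}$. I want to build $\hat{g}_{i} \in \R$ as a real linear combination $\hat{g}_{i} = \sum_{j} B_{j} v_{i}^{(j)}$ of the coordinates of $\vec{v_{i}}$, with positive weights $B_{1} > B_{2} > \cdots > B_{t} > 0$ chosen so large (in the hierarchy $B_{j} \gg B_{j+1}$) that, on the finite set of relevant $\vec{w}(\alpha)$, the sign of $\sum_{j} B_{j} w_{j}(\alpha)$ matches the sign of the first nonzero coordinate of $\vec{w}(\alpha)$ (interpreting $\vec{w}=\vec{0}$ as $0$). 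Once this is achieved, linearity yields $\sum_{i}\alpha_{i}\hat{g}_{i} = \sum_{j} B_{j} w_{j}(\alpha)$, so the two characterizations of $\leqg$ coincide.

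The key observation is that the coefficients $\alpha_{i}$ are restricted to the finite range $[-2(N+n),2(N+n)]$, so the set $W = \{\vec{w}(\alpha) : \alpha_{i} \in [-2(N+n),2(N+n)]\}$ is finite. Hence, the quantities
\begin{equation*}
M := \max_{\vec{w} \in W}\, \max_{j} |w_{j}|, \qquad \epsilon := \min\{\,|w_{j}| : \vec{w} \in W,\ j \in [t],\ w_{j} \neq 0\,\}
\end{equation*}
are both well defined and positive (with $\epsilon = +\infty$ by convention if only $\vec{0}$ appears, a degenerate case one handles trivially). Now I choose the weights inductively: set $B_{t} = 1$ and, for $j = t-1, t-2, \ldots, 1$, pick $B_{j}$ large enough that $B_{j}\cdot \epsilon > 2 t M \sum_{k > j} B_{k}$ (for concreteness, $B_{j} = \lceil 1 + 2tM/\epsilon\rceil \cdot B_{j+1}$ suffices).

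With these weights, consider any $\vec{w} \in W \setminus \{\vec 0\}$ and let $j^{*}$ be its first nonzero coordinate. Then $|B_{j^{*}} w_{j^{*}}| \geq B_{j^{*}}\,\epsilon$, whereas $\bigl|\sum_{j > j^{*}} B_{j} w_{j}\bigr| \leq tM \sum_{j > j^{*}} B_{j} < \tfrac{1}{2} B_{j^{*}} \epsilon$ by the choice of $B_{j^{*}}$. Consequently, $\sum_{j} B_{j} w_{j}$ has the same sign as $w_{j^{*}}$, which by definition is the sign of $\vec{w}$ under $\leqlex$; and for $\vec{w} = \vec{0}$ the weighted sum is obviously $0$. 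Defining $\hat{g}_{i} := \sum_{j} B_{j} v_{i}^{(j)} \in \R$ therefore gives, by linearity,
\begin{equation*}
\sum_{i}\alpha_{i}\hat{g}_{i} \leq 0 \iff \vec{w}(\alpha) \leqlex \vec{0} \iff \sum_{i}\alpha_{i} g_{i} \leqg \zerog,
\end{equation*}
which is exactly the conclusion.

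The only real subtlety — and thus the place to be careful — is guaranteeing $\epsilon > 0$, i.e.\ that among the finitely many vectors $\vec{w}(\alpha)$ none of the nonzero coordinates can be arbitrarily close to zero. This is immediate here because $W$ is finite, but it is what forces the restriction to bounded integer coefficients in the statement of \Cref{lem:real}. No effective computation of $\hat{g}_{i}$ is required: the lemma is a pure existence statement used only to reason about the comparisons performed by the sorting algorithm, so it suffices that such reals exist.
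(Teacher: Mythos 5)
Your proof is correct and follows essentially the same route as the paper: starting from \Cref{lem:embedding}, both arguments collapse the lexicographic comparison on the finite set of relevant vectors into a single real linear functional given by a hierarchy of scalar weights. The paper realizes this with the map $f(y_{1},\ldots,y_{t})=\sum_{j}y_{j}\epsilon^{j}$ for a ``sufficiently small'' $\epsilon>0$, while you construct decreasing weights $B_{1}>\cdots>B_{t}>0$ explicitly with concrete bounds in terms of $M$ and the minimal nonzero coordinate; this is only a presentational difference and, if anything, makes the finiteness argument more transparent.
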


\begin{proof}
By \Cref{lem:embedding}, there are $\vec{v_{1}},\vec{v_{2}},\ldots,\vec{v_{t}}\in \R^{t}$ such that,
for any $\alpha_{1},\alpha_{2},\ldots,\alpha_{t}\in [-2(N+n),2(N+n)]$,
we have $\sum_{i}\alpha_{i}g_{i} \leqg 0$ if and only if $\sum_{i}\alpha_{i}\vec{v_{i}} \leqlex \vec{0}$.
Now consider the following mapping:
\[ f(y_{1},y_{2},\ldots,y_{t}) = \sum_{i} y_{i} \epsilon^{i} \]
and consider applying it on all points of the form $\sum_{i}\alpha_{i}\vec{v_{i}}$, for $\alpha_{1},\alpha_{2},\ldots,\alpha_{t} \in [-2(N+n),2(N+n)]$.
We claim that, for sufficiently small $\epsilon > 0$, for every $\alpha_{1},\alpha_{2},\ldots,\alpha_{t} \in [-2(N+n),2(N+n)]$
we have $\sum_{i}\alpha_{i}\vec{v_{i}} \leqlex \vec{0}$ if and only if $f(\sum_{i}\alpha_{i} \vec{v_{i}}) \leq 0$.
Indeed, it is easy to see that for any point $\vec{y}\in \R^{t}$ and sufficiently small $\epsilon > 0$ we have
$\vec{y} \leqlex \vec{0}$ if and only $f(\vec{y}) \leq 0$, and then the claim holds because we have finitely many points.
Next, $f$ is a multi-linear mapping, thus for any $\alpha_{1},\alpha_{2},\ldots,\alpha_{t} \in [-2(N+n),2(N+n)]$
we have $\sum_{i}\alpha_{i}\vec{v_{i}} \leqlex \vec{0}$ if and only if $\sum_{i}\alpha_{i}f(\vec{v_{i}}) \leq 0$.
We set $\hat{g}_{i} = f(\vec{v_{i}})$ and conclude that, for every $\alpha_{1},\alpha_{2},\ldots,\alpha_{t}\in [-2(N+n),2(N+n)]$,
$\sum_{i}\alpha_{i}g_{i} \leqg \zerog$ if and only if $\sum_{i} \alpha_{i}\hat{g}_{i} \leqg 0$.
\end{proof}

\paragraph{$\S $ The point-location problem and its relation to our problem.}

Before presenting the actual algorithm showing Lemma \ref{mainSortingLemma}, we also need to recap the approach of Kane, Lovett, and Moran~\cite{KaneLM19}, and see its connections to our setting.
For a finite set $H = \{h_1, h_2, \ldots, h_{|H|}\} \subseteq \R^{t}$, let the function $\arrange_{H} : \R^{t} \to \{-,0,+\}^{|H|}$ be defined as
$\arrange_{H}(x) := (\sign(\langle x, h_1 \rangle), \sign(\langle x, h_2 \rangle), \ldots, \sign(\langle x, h_{|H|} \rangle))$
(where $\langle \cdot, \cdot \rangle$ denotes the inner product on the vector space $\R^{t}$, and $\sign$ is the sign-function on $\R$).

Kane, Lovett, and Moran~\cite{KaneLM19} considered the problem of determining $\arrange_{H}(x)$ for any given $x\in \R^{t}$ with few queries of the form $\sign(\langle x,h\rangle)$, for $h\in H$ (label queries)
and $\sign(\langle x, h'-h''\rangle)$, for $h',h''\in H$ (comparison queries); this is \emph{the point-location in an hyperplane-arrangement problem}.

Before proceeding with further definitions, let us comment on how this could be possibly useful in our problem.
Recall that our current goal (according to Proposition \ref{prop:distinct} and the comment made after it) is to sort a set $X$ of $N \geq n$ given $(N+n)$-sums, where the $i$-th given $(N+n)$-sum is $x_i = \sum_{j = 1}^{t}\alpha^{i}_{j} g_{j}$. First, let us consider how many expressions of the form $\sum_{j}\alpha_{j} g_{j} \leqg \zerog$
need to be evaluated in order to uniquely determine the sorting permutation $\pi$ of $X$,
and disregard the computational complexity of actually finding $\pi$ given the results of all the comparisons.
It is clear that it is necessary and sufficient to resolve $\sum_{j} \alpha^{i}_{j} g_{j} \leqg \sum_{j} \alpha^{i'}_{j} g_{j}$,
for every $i,i'$, or in other words $\sum_{j} (\alpha^{i}_{j} - \alpha^{i'}_{j}) g_{j} \leqg \zerog$.
From the point of view of evaluating such expression, \Cref{lem:real} tells us that we can think that
$g_{1},g_{2},\ldots,g_{t}\in \R$.
Thus, we can reformulate the problem by defining
$H = \{ (\alpha^{i}_{1}-\alpha^{i'}_{1}, \alpha^{i}_{2}-\alpha^{i'}_{2},\ldots,\alpha^{i}_{t}-\alpha^{i'}_{t}) : 1\leq i, i' \leq N\}$
and saying that we want to determine $\arrange_{H}(g)$, where $g=(g_{1},g_{2},\ldots,g_{t})$.
This will be done with few label and comparison queries. We note that a label query evaluates
an expression $\sum_{j}\beta_{j}g_{j} \leqg \zerog$, where $\beta_{1},\beta_{2},\ldots,\beta_{t}\in [-(N+n),N+n]$,
while a comparison query evaluates an expression 
$\sum_{j}\beta_{j}g_{j} \leqg \zerog$, where $\beta_{1},\beta_{2},\ldots,\beta_{t}\in [-2(N+n),2(N+n)]$,
thus we can still think that $g_{1},g_{2},\ldots,g_{t} \in \R$ (and this is, indeed, the reason for having $2(N+n)$ in the statement of \Cref{lem:real}). The main result of Kane, Lovett, and Moran~\cite{KaneLM19} shows that it is possible to determine
$\arrange_{H}(g)$ by evaluating only $O(t\log(t(N+n))\log N)$ such expressions corresponding to label and comparison queries. 
However, this is an existential proof, and does not tell us how to actually find which expressions to evaluate, and even if we knew that, it would still be quite problematic, given that we have $N^{2}$ possible expressions, and we cannot explicitly operate on
all of them. Thus, we will need a more refined approach that can be implemented efficiently. This requires
looking inside the approach of Kane, Lovett, and Moran~\cite{KaneLM19}, and some more definitions.

For $S \subseteq H$, and $h, x\in \R^{t}$, we say that $S$ infers $h$ at $x$ if $\sign(\langle h,x\rangle)$
is fully determined by the answers to the label and comparison queries on $S$.
We define $\infer(S,x)$ as the set of all $h\in \R^{t}$ such that $S$ infers $h$ at $x$. 
The inference dimension of $H\subset \R^{t}$ is defined as the smallest $d\geq 1$ such that, for any $S\subset H$
of size at least $d$, and for any $x\in \R^{t}$, there exists $h\in S$ such that $S\setminus \{h\}$ infers $h$ at $x$.
While it might be difficult to precisely determine the inference dimension of a specific $H$, it can be upper bounded
as follows. For $h\in \R^{t}$, where $h=(h_{1},h_{2},\ldots,h_{t})$, we define $||h||_{1}=\sum_{i=1}^{t} |h_{i}|$.
Then, if every $h\in H$ has integer coefficients such that $||h||_{1} \leq w$ then the inference dimension of $H$ is $d=\Oh(t\log w)$.
This follows from the following lemma that upper bounds the inference dimension of the set consisting of all $h\in \Z^{t}$
such that $||h||_{1}\leq w$, and the fact that the inference dimension of a smaller set cannot be larger.

\begin{lemma}[{\cite[Theorem 1.10]{KaneLM19}}]
\label{lem:inference}
The inference dimension of $\{ h\in \Z^{t} : ||h||_{1} \leq w\}$ is $d=\Oh(t\log w)$.
\end{lemma}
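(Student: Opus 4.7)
Since Lemma~\ref{lem:inference} is cited verbatim from Kane, Lovett, and Moran, I will only sketch the natural elementary attack I would attempt before consulting their paper. Fix an arbitrary $S = \{h_1, \ldots, h_d\} \subseteq \{h \in \Z^{t} : ||h||_{1} \leq w\}$ and a point $x \in \R^{t}$, and let the goal be to exhibit, for a suitable $d = \Theta(t \log w)$, some $h_j \in S$ such that $S \setminus \{h_j\}$ infers $h_j$ at $x$. The plan has two ingredients: first, I would use the comparison queries on $S$ to extract combinatorial structure, namely the complete sorted order of the values $\langle h_i, x\rangle$, and then combine this with a pigeonhole argument on signed integer combinations of the $h_i$.

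Concretely, up to relabeling I may assume $\langle h_1, x\rangle \leq \langle h_2, x\rangle \leq \cdots \leq \langle h_d, x\rangle$ and that $\sign(\langle h_i, x\rangle)$ is known for every $i$. Next, I would consider a carefully chosen family $\mathcal{F}$ of signed integer combinations $\sum_i \epsilon_i h_i$ with $\epsilon_i \in \{-1, 0, 1\}$ and bounded support; for example, the indicator vectors of pairs of disjoint index subsets of $[d]$ of size at most $d/2$. Each member of $\mathcal{F}$ is a vector in $\Z^{t}$ with $||\cdot||_{1} \leq dw$, so it takes at most $(2dw+1)^{t} = 2^{\Oh(t \log(dw))}$ distinct values. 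By choosing $d = \Theta(t \log w)$ with a large enough constant, one arranges $|\mathcal{F}| > (2dw+1)^{t}$, forcing two members of $\mathcal{F}$ to coincide and thereby producing a nontrivial integer identity $\sum_i \alpha_i h_i = \zerog$ whose coefficient vector comes from a controlled alphabet, in particular with $(\alpha_i)$ supported on a small and known index set with known sign pattern.

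The delicate step, which I expect to be the main obstacle, is converting the relation $\sum_i \alpha_i \langle h_i, x\rangle = 0$ together with the sorted order of the $\langle h_i, x\rangle$ into an actual inference of some $\sign(\langle h_j, x\rangle)$ from the remaining labels and comparisons on $S \setminus \{h_j\}$. A single generic identity does not suffice: knowing only the signs of each $\langle h_i, x\rangle$ and their sorted order does not in general pin down the sign of a weighted sum of them. The cleanest route I would try is to choose $\mathcal{F}$ to be matching-like, so that a collision gives two disjoint index sets $A, B \subseteq [d]$ with $\sum_{i \in A} h_i = \sum_{j \in B} h_j$; then the monotonicity of the sorted sequence lets one pair terms on the two sides and peel off the extremal index, at which point its sign is forced by the strictly inferable signs of the pairwise differences (which are available as comparison queries on $S \setminus \{h_j\}$). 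Balancing the combinatorial size of the matching family against the lattice-count $(2dw+1)^{t}$ yields the advertised $d = \Oh(t \log w)$ bound; if this elementary matching-inference step cannot be closed cleanly, the fallback is the full KLM framework, which recasts the problem as a duality between inference dimension and the growth function of hyperplane arrangements and obtains the same quantitative guarantee.
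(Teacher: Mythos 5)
This lemma is stated in the paper as a direct citation of \cite[Theorem~1.10]{KaneLM19} with no proof of its own, so there is no in-paper argument to compare your sketch against; I am assessing it on its own merits. Your two structural ingredients are indeed the right skeleton for this result: first, extract the sorted order of the values $\langle h_i, x\rangle$ from comparison queries, and second, pigeonhole over a family of small integer combinations of the $h_i$ against the lattice points $\{h\in\Z^t : \|h\|_1 \leq dw\}$, whose cardinality $(2dw+1)^t$ is dominated once $d = \Theta(t\log w)$ (with the usual assumption $\log w = \Omega(1)$), to manufacture a vanishing relation $\sum_i \alpha_i h_i = 0$ with controlled coefficients. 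You also correctly identify that converting this relation into an actual inference is the crux.

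However, the peeling step you sketch --- ``peel off the extremal index $m = \max(A\cup B)$, at which point its sign is forced by the pairwise differences on $S\setminus\{h_m\}$'' --- is not sound as stated, and the gap is exactly at the point you flag. Take $t=3$, $h_{a_1}=(1,0,0)$, $h_{b_1}=(0,1,0)$, $h_{b_2}=(0,0,1)$, and $h_m = h_{b_1}+h_{b_2}-h_{a_1} = (-1,1,1)$, so $h_{a_1}+h_m = h_{b_1}+h_{b_2}$ is exactly a matching relation with $|A|=|B|=2$. For any $x=(x_1,x_2,x_3)$ with $x_1 < x_2 < x_3 < 0$, the sorted order among all four is $h_{a_1}, h_{b_1}, h_{b_2}, h_m$ (so $m$ is the extremal index), the three labels are all negative, and all three pairwise comparison signs are fixed; yet $\langle h_m, x\rangle = -x_1+x_2+x_3$ evaluates to $+$, $0$, or $-$ for $x=(-10,-2,-1)$, $(-3,-2,-1)$, $(-2,-1.5,-1)$, respectively. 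Hence $S\setminus\{h_m\}$ does not infer $h_m$ at $x$, even though the matching identity is available. This does not, of course, contradict the lemma (the set here is too small to invoke it), but it does show that monotonicity plus a single collision does not sandwich the removed element's sign; the actual argument in~\cite{KaneLM19} has to be considerably more careful about which element it chooses to remove and which relation it extracts. Your fallback of invoking the full KLM inference-dimension machinery is therefore the right resolution; the elementary route you outline does not close on its own.
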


If we are able to bound the inference dimension of $H$ (for example, by the above lemma), then we have the following
lemma.

\begin{lemma}[{Corollary of~\cite[Lemma 4.2]{KaneLM19}}]\label{lem:whp}
Choose any $c \in \mathbb{N}$.
Let $H\subseteq \R^{t}$ be a finite set with inference dimension at most $d$, and $S\subseteq H$ be its uniformly chosen subset of size $s \cdot c \log n$. Then, for every $x\in \R^{t}$:
\[
\Pr_{S}\left[|H \setminus \infer(S,x)| \geq \frac{2d}{s+1}|H|\right] \leq \frac{1}{n^{c}}.
\] 
\end{lemma}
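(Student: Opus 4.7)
The plan is to combine Kane--Lovett--Moran's Lemma 4.2 in its expectation form with a standard probability-amplification argument via independent repetitions. The expectation form of Lemma 4.2 in~\cite{KaneLM19} states that $\mathbb{E}_{S}[|H \setminus \infer(S,x)|] \leq \frac{d}{|S|+1}|H|$ for a uniformly chosen subset $S \subseteq H$. Applying it to a single uniform subset of size $s$ and invoking Markov's inequality immediately yields the one-shot tail bound $\Pr_S[|H \setminus \infer(S,x)| \geq \frac{2d}{s+1}|H|] \leq \tfrac{1}{2}$.

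To amplify this constant failure probability to $1/n^{c}$, I would draw $c \log n$ independent uniform random size-$s$ subsets $S_{1},\ldots,S_{c \log n}$ of $H$ and consider their union $\tilde{S} = S_{1}\cup \cdots \cup S_{c \log n}$. The crucial structural property is monotonicity of $\infer$: whenever $S' \subseteq S''$, one has $\infer(S',x) \subseteq \infer(S'',x)$, hence $|H\setminus \infer(S'',x)| \leq |H\setminus \infer(S',x)|$. Consequently the bad event for $\tilde{S}$ is contained in the intersection of the bad events for the $S_{i}$, and by independence this intersection has probability at most $(1/2)^{c \log n} = 1/n^{c}$.

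The remaining issue is that the lemma asks for a uniformly chosen subset of size exactly $s\cdot c\log n$, whereas the construction above produces $\tilde{S}$ of size at most $s\cdot c\log n$. I would bridge this with a coupling argument: enlarge $\tilde{S}$ up to exact size $s\cdot c\log n$ by adjoining uniformly chosen elements of $H\setminus \tilde{S}$, verify via symmetry/deferred-decisions that the resulting set is uniformly distributed among size-$(s\cdot c\log n)$ subsets, and observe that monotonicity makes padding only decrease $|H\setminus \infer(\cdot,x)|$, so the $1/n^{c}$ bound transfers unchanged. I expect this last bookkeeping step to be the main obstacle: the preceding pieces (Markov applied to the KLM expectation, monotonicity, and the $(1/2)^{c\log n}$ independence bound) are routine, but cleanly reconciling the ``union of independent samples plus padding'' construction with the exact uniform-$(s\cdot c\log n)$ distribution in the statement requires some care.
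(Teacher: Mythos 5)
Your proof is correct and relies on the same ingredients as the paper's proof (the expectation form of KLM Lemma~4.2, Markov's inequality to get a one-shot failure probability of $1/2$, independent amplification over $c\log n$ samples, and monotonicity of $\infer$ under set inclusion). The difference is in the direction of the coupling between the uniform size-$(s\cdot c\log n)$ set $S$ and the $c\log n$ size-$s$ samples $S_i$: the paper samples $S$ first and then samples $S_1, \ldots, S_{c\log n}$ uniformly \emph{inside} $S$, asserting that this induces the same distribution on $(S_1,\ldots,S_{c\log n})$ as drawing them i.i.d.\ uniformly from $H$; you instead draw the $S_i$ i.i.d.\ from $H$ first, take their union $\tilde S$, and pad it up to a set $S$ of size exactly $s\cdot c\log n$.

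The bookkeeping step you flag as the ``main obstacle'' is in fact clean and does not need a deferred-decisions analysis. The entire construction (i.i.d.\ draws of the $S_i$, union, uniform padding of $H\setminus\tilde S$ to reach size $m = s\cdot c\log n$) is invariant under every permutation of $H$, so the distribution of the resulting $S$ is permutation-invariant; since $|S|=m$ holds deterministically (using $|\tilde S| \le sk = m$ and $|H|\ge m$), this forces $S$ to be uniform on size-$m$ subsets. Together with $S_i\subseteq S$, monotonicity of $\infer$, and the genuine independence of the $S_i$, the $(1/2)^{c\log n}=1/n^c$ bound follows exactly as you describe.

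It is worth noting that your direction is the more careful one. In the paper's two-stage construction, the marginal of each $S_i$ is indeed uniform on size-$s$ subsets of $H$, but after integrating out $S$ the $S_i$ are \emph{not} jointly i.i.d.: for instance $\Pr[S_1 = S_2] = 1/\binom{|S|}{s}$ in that scheme, whereas for independent uniform draws from $H$ it is $1/\binom{|H|}{s}$. Thus the assertion that the obtained distribution is ``exactly the same'', followed by the independence claim on the bad events, does not literally hold as written, and one would need to supply essentially your coupling to make the argument airtight. Your construction produces both a uniformly distributed $S$ and genuinely independent $S_i$ simultaneously, which is precisely what the product bound requires.
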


\begin{proof}
We need the following lemma. It was originally stated with $s=2d$, but a direct modification of the proof gives the following
more general bound.

\begin{lemma}[{\cite[Lemma 4.2]{KaneLM19}}]
\label{lem:shrink}
Let $H\subseteq \R^{t}$ be a finite set with inference dimension at most $d$, and $S\subseteq H$ be its uniformly chosen
subset of size $s$. Then, for every $x\in \R^{t}$:
\[
\mathbb{E}_{S}[|\infer(S,x) \cap H|] \geq \frac{s+1-d}{s+1}|H|.
\] 
\end{lemma}

Now, consider the following random process. We first uniformly choose a subset $S\subseteq H$ of size $s\cdot c\log n$, and then
uniformly choose size-$s$ subsets $S_{1},S_{2},\ldots,S_{c\log n} \subseteq S$. 
We observe that $\infer(S_{i},x) \subseteq \infer(S,x)$ when $S_{i}\subseteq S$, so:
\begin{align*}
\Pr_{S}\left[|H \setminus \infer(S,x)| \geq \frac{2d}{s+1}|H|\right] = & \Pr_{S,S_{1},S_{2},\ldots,S_{c\log n}}\left[|H \setminus \infer(S,x)| \geq \frac{2d}{s+1}|H|\right]  \\
\leq & \Pr_{S,S_{1},S_{2},\ldots,S_{c\log n}}\left[\forall_{i=1,2,\ldots,c\log n}|H \setminus \infer(S_{i},x)| \geq \frac{2d}{s+1}|H| \right].
\end{align*}
Next, we observe that the obtained distribution on the subsets $S_{1},S_{2},\ldots,S_{c\log n}$ is exactly the same as if
$S_{1},S_{2},\ldots,S_{c\log n}\subseteq H$ were uniformly chosen subsets of size $s$.
Thus:
\[
\Pr_{S}\left[|H \setminus \infer(S,x)| \geq \frac{2d}{s+1}|H|\right] \leq \Pr_{S_{1},S_{2},\ldots,S_{c\log n}}\left[\forall_{i=1,2,\ldots,c\log n}|H \setminus \infer(S_{i},x)| \geq \frac{2d}{s+1}|H| \right].
\]
If $S_{i}\subseteq H$ is a uniformly chosen subset of size $s$ then, by \Cref{lem:shrink} and Markov's inequality, we conclude:
\begin{equation*}
\Pr_{S_{i}}\left[| H \setminus \infer(S_{i},x) | \geq \frac{2d}{s+1} |H| \right] \leq \frac{\mathbb{E}_{S}[|H \setminus \infer(S,x)|]}{\frac{2d}{s+1} |H|} \leq \frac{(1 - \frac{s+1-d}{s+1})|H|}{\frac{2d}{s+1} |H|} = \frac{1}{2}\,.
\end{equation*}
The first inequality follows from Markov's inequality, while the second inequality uses the fact that $\mathbb{E}_{S}[|H \setminus \infer(S,x)|] \leq (1- \frac{s+1-d}{s+1})|H|$, which is an immediate result from \Cref{lem:shrink}.

The events $|H \setminus \infer(S_{i},x)| \geq \frac{2d}{s+1}|H|$ are independent, so:
\[
\Pr_{S}\left[|H \setminus \infer(S,x)| \geq \frac{2d}{s+1}|H|\right] \leq \left(\frac{1}{2}\right)^{c\log n} = \frac{1}{n^{c}}
\]
as required.
\end{proof}

\paragraph{$\S $ High-level overview of the algorithm.} After presenting the main technical ingredients of our algorithm, we now provide a high-level description of the algorithm. 

Recall that we want to sort a set $X = \{x_{1},x_{2},\ldots,x_{N}\}$ of $N \geq n$ given $(N+n)$-sums in $\Oh(\poly(t)N)$ time, and recall that, for every $i \in [N]$, $x_{i} = \sum_{j} \alpha^{i}_{j}g_{j}$ for some $\alpha^{i}_{1},\alpha^{i}_{2},\ldots,\alpha^{i}_{t}\in [0,N+n]$ such that $\sum_{j=1}^{t}\alpha^{i}_{j}\leq N+n$. We apply the preprocessing from \Cref{obs:compare}, so that after $\Oh(tN)$ preprocessing we can evaluate any such expression in $\Oh(t)$ time.
We will proceed in two phases. First, we will partition $X$ into groups of size $\log^{d} N$ in $\Oh(\poly(t)N)$ time, for some $d\in \mathbb{N}$
to be determined later.
The elements in each group will be in any order, but all elements in the $i$-th group will be smaller than the elements
in the $(i+1)$-th group. Second, we will sort the elements in every group in $\Oh(\poly(t)N)$ total time.
In both steps, the main idea is to ``approximate'' real numbers $\hat{g_{1}},\hat{g_{2}},\ldots,\hat{g_{t}}$ from Lemma~\ref{lem:real} with rational numbers $\tilde{g_{1}},\tilde{g_{2}},\ldots,\tilde{g_{t}}$.
This allows us to compute a rational approximation $\tilde{x}_{i}=\sum_{j}\alpha^{j}_{i}\tilde{g}_{j}$ of every $x_{i}=\sum_{j}\alpha^{i}_{j}g_{j}$.
Then, assuming that the rational numbers consist of only $\Oh(\poly(t)\log n)$ bits (which will be the case), we can sort
$\tilde{x}_{1},\tilde{x}_{2},\ldots,\tilde{x}_{N}$ in $\Oh(\poly(t)N)$ time in the Word RAM model.
To find the approximations, we apply \Cref{lem:whp}, and then use some tools from linear programming.

\paragraph{$\S $ Detailed description of the algorithm.} We now move to describing both phases of the algorithm in details. For each phase, we describe first what is done by our algorithm, and then the reasoning (and technical details) supporting the correctness and complexity of each step.

\paragraph{The first phase.}
We begin with choosing a subset $Y \subseteq X$ of size $m+1$ by always including the smallest and the largest element of $X$,
and independently including every other element of $X$ with probability $p=1/\log^{c}N$,
where $c\in \mathbb{N}$ will be fixed later so that $X$ is partitioned into groups of size $\Theta(\log^{d}n)$ with $d=2c+3$.
By a standard application of the multiplicative Chernoff bound, $m=\Theta(N/\log^{c}N)$ holds with high probability in $N$,
and from now we condition on this being the case.
We sort $Y$ with $\Oh(m\log m)$ comparisons, which takes $\Oh(tm\log m)=\Oh(tN)$ time,
as we can compare any two elements in $\Oh(t)$ time.
Let $x_{y_{0}} \lessg x_{y_{1}}\lessg \ldots \lessg x_{y_{m}}$ be the elements of $Y$ in the sorted order, and to avoid clutter assume that
$m$ is a multiple of $\log^{c+3}N$. 

Once $Y$ is defined and sorted, for $j=0,1,\ldots,m/\log^{c+3} N-1$, we define the sets:
\[ Z_{j} := \{ i \in [1,N] : x_{y_{j\cdot \log^{c+3} N}} \leqg x_{i} \text{ and } x_{i} \lessg x_{y_{(j+1)\log^{c+3} N}} \} . \]
As we have included the smallest and the largest element of $X$ in $Y$, $Z_{j}$s form a partition of $X$. Next, we analyse the size of every $Z_{j}$. The size of a single $Z_{j}$ can be bounded as follows.
Recall that every element of $X$ is included in $Y$ with probability $p$. Then, $|Z_{j}|$ is the number of trials
until obtaining $\log^{3}N$ successes in independent Bernoulli trials with probability $p$. A straightforward application of the
multiplicative Chernoff bound implies that among $p\cdot t/p$ trials we will have at least $t/2$ successes with probability at least
$1-\exp(-t/8)$. Thus, by setting $t=\Theta(\log^{3}N)$, we obtain that $|Z_{j}| = \Oh(\log^{3+c}N)$ holds with high probability in $N$.
From now on, we condition on $|Z_{j}|=\Oh(\log^{2c+3}N)$ holding
for all $j$, which happens with high probability in $N$.

We now focus on how to construct these sets $Z_j$. In order to effectively allocate the elements of $X$ into the sets $Z_{j}$ defined above, we will assign a rational number $\tilde{x}_{i}=p_{i}/q_{i}$ to every $x_{i}$, with the nominator and denominator consisting of $\Oh(\poly(t)\log n)$ bits, by first assigning such a rational number $\tilde{g}_{i}$ to every $g_{i}$, and then calculating $\tilde{x}_{i}=\sum_{j}\alpha^{i}_{j}\tilde{g_{j}}$.
In our model, arithmetical operations on numbers consisting of $\Oh(\poly(t)\log n)$ bits can be performed in
$\Oh(\poly(t))$ time, which allows us to compute the $\tilde{x}_{i}$s in $\Oh(\poly(t)N)$ time.
Then, we sort the $\tilde{x}_{i}$s in $\Oh(\poly(t)N)$ time using radix sort using the following lemma. We stress that sorting the $\tilde{x}_{i}$ will only approximate sorting their corresponding $x_{i}$s, and there will possibly
be elements $x_{i}$ and $x_{j}$ such that $x_{i} \lessg x_{j}$ but $\tilde{x_{i}} \geq \tilde{x_{j}}$. Still, we will be able to bound for how many
pairs of elements this can possibly happen.

\begin{lemma}
\label{lem:sorting}
Sorting $N$ rational numbers $p_{i}/q_{i}$, with each nominator and denominator consisting of $\Oh(\poly(t)\log n)$ bits,
can be done in $\Oh(\poly(t)N)$ time.
\end{lemma}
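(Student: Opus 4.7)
My plan is to reduce rational-number sorting to integer sorting via an explicit order-preserving map into $\mathbb{Z}$, and then apply Word-RAM radix sort to obtain the claimed linear running time. First I would normalise the representations so that every denominator is positive, flipping signs in both $p_i$ and $q_i$ when $q_i<0$ (an $\Oh(\poly(t))$ per-item cost). Next I fix an exponent $B = \Theta(\poly(t)\log n)$ chosen so that $2^{B} \geq 8 \cdot q_i q_j$ holds simultaneously for every pair $i,j$; such a $B$ exists and has $\Oh(\poly(t)\log n)$ bits, because each $q_i q_j$ has $\Oh(\poly(t)\log n)$ bits. Then I define the integer key $a_i := \lfloor p_i \cdot 2^B / q_i \rfloor$. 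Each $a_i$ occupies $\Oh(\poly(t)\log n)$ bits, hence $\Oh(\poly(t))$ machine words (using the standing assumption $\omega \geq \log n$), and it can be computed from $p_i$ and $q_i$ by one bit-shift plus one big-integer division in $\Oh(\poly(t))$ time.

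The key claim is that $i\mapsto a_i$ preserves the linear order. If $p_i/q_i < p_j/q_j$, then $p_j q_i - p_i q_j$ is a positive integer, so $p_j/q_j - p_i/q_i \geq 1/(q_i q_j)$, and multiplying by $2^B$ gives $p_j \cdot 2^B / q_j - p_i \cdot 2^B / q_i \geq 8$ by the choice of $B$. Since each floor operation loses strictly less than one, we obtain $a_j - a_i \geq 7 > 0$. Equal rationals trivially produce equal keys, and tied keys are harmless for a sort by value.

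Having produced the keys, I sort $a_1,\ldots,a_N$ by radix sort. With base $2^{\omega}$, radix sort on $N$ integers each occupying $k = \Oh(\poly(t))$ machine words runs in $\Oh(kN) = \Oh(\poly(t)\, N)$ time; negative keys are handled either by a standard bijective shift into $\mathbb{N}$ before sorting, or by sorting positives and negatives separately and concatenating. The sorted order of the keys coincides with the sorted order of the rationals, yielding the claimed bound.

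The main delicate point, and where I would be most careful in writing the full proof, is balancing two opposing constraints on $B$: it must be large enough that the floor cannot merge two distinct rationals (separation at least $1/(q_i q_j)$), yet small enough that every key still fits in $\Oh(\poly(t))$ machine words. The uniform bit-bound $\log(q_i q_j) = \Oh(\poly(t)\log n)$ matches the available key budget $B = \Theta(\poly(t)\log n)$ exactly, so the two constraints are compatible and the radix-sort stage delivers the target linear running time in the Word RAM model with $\omega \geq \log n$.
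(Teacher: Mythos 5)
Your proof is correct and follows essentially the same route as the paper: scale each rational by a fixed power of two chosen large enough that, via the minimum separation $1/(q_i q_j)$ between distinct bounded rationals, taking floors cannot invert the order, and then radix-sort the resulting $\Oh(\poly(t)\log n)$-bit integer keys. One small caution on the final step: base-$2^\omega$ radix sort requires $\Theta(2^\omega)$ buckets per pass, which may far exceed $\Oh(N)$; use a radix of $\Theta(N)$ instead (noting $\log N \geq \log n$), which still yields $\Oh(\poly(t))$ passes of $\Oh(N)$ time each, exactly as in the paper.
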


\begin{proof}
Firstly, we convert all rational numbers to integer numbers by replacing $p_{i}/q_{i}$ with $\lfloor p_{i}4^{b+1}/q_{i} \rfloor$.
The new numbers still consist of $\Oh(\poly(t)\log n)$ bits each, and each of them can be obtained in $\Oh(\poly(t))$ time
in our model.

Further, we argue that such a conversion does not change the relative order.
First, we observe that for two rational numbers $p/q$ and $p'/q'$ with the nominators and denominators consisting of $b$ bits,
where $p/q < p'/q'$, we have:
\[ \frac{p'}{q'} - \frac{p}{q} = \frac{p'\cdot q - p\cdot q}{q\cdot q'}  \geq \frac{1}{q'\cdot q'} \geq \frac{1} {2^{b}\cdot 2^{b}} = \frac{1}{4^{b}} , \]
so the difference between any two such numbers is either $0$ or at least $1/4^{b}$. 
Thus, if the numbers are different then increasing/decreasing both of them by at most $1/4^{b+1}$ does not change their relative order. Finally, we have the following inequalities:
\[  \frac{\lfloor p_{i}4^{b+1}/q_{i}\rfloor}{4^{b+1}} \leq \underbrace{\frac{p_{i} 4^{b+1}/ q_{i} } { 4^{b+1}}}_{= \: p_{i}/q_{i}} < \frac{\lfloor p_{i}4^{b+1}/q_{i}\rfloor+1}{4^{b+1}} \]
meaning that replacing $p_{i}/q_{i}$ with $\frac{\lfloor p_{i} 4^{b+1}/ q_{i}\rfloor } { 4^{b+1}}$ possibly increases it
by less than $1/4^{b+1}$, so replacing $p_{i}/q_{i}$ with $\lfloor p_{i} 4^{b+1} / q_{i}\rfloor$ indeed does not change the
relative order between the numbers. 

Finally, radix sorting proceeds to sort these integers in $\Oh(\poly(t))$ iterations, each taking $\Oh(N+n)$ time,
so $\Oh(\poly(t)N)$ overall. 
\end{proof}

Once the rational numbers $\tilde{x}_{i}$ are sorted, we will ultimately allocate the elements $x_i$ of $X$ to the sets $Z_j$, by partitioning the sorted list of rational numbers $\tilde{x}_i$ w.r.t. the rational numbers corresponding to the elements of $Y$ (which was already sorted). We will explain how this is exactly done in detail also below, once we clarify how the rational numbers $\tilde{x}_{i}$ are obtained. The rest of the description of the first phase of our algorithm describes how to implement these steps.

So, let us now explain how to assign a rational number $\tilde{x}_{i}=p_{i}/q_{i}$ to every $x_{i}$. We need to keep in mind our goal: namely, to ensure that
the rational numbers can be used instead of the original $x_{i}$s to partition $X$ into $Z_{j}$s. 
First, for every $i$ we define a point $h_{i}=(\alpha^{i}_{1},\alpha^{i}_{2},\ldots,\alpha^{i}_{t}) \in [0,N+n]^{t}$.
As mentioned earlier, by \Cref{lem:real} we can think that $g_{1},g_{2},\ldots,g_{t}\in \R$.
Then, we observe that $x_{i} \lessg x_{i'}$ is equivalent to $\sign(\langle g,h_{i}-h_{i'}\rangle) = -1$.
Thus, sorting the $x_{i}$s could be done by determining $\arrange_{H}(g)$, where $H=\{ h_{i}-h_{i'} : 1\leq i, i' \leq N\}$.
We note that clearly $|H| = N^{2}$, and by \Cref{lem:inference}, the inference dimension of $H$ is $\Oh(t\log (N+n))=\Oh(t\log N)$.
We choose a uniform subset $H'\subseteq H$ of size $s=tN/\log^{c} N$.
By \Cref{lem:whp}, $|H\setminus \infer(H',g)| = \frac{\Oh(t\log^{2} N)}{s} |H| = \Oh(N\log^{c+2} N)$
with high probability
in $N$, and from now we condition on this being the case.
Next, we want to resolve all label and comparison queries on $H'$. This is because, as $H'$ has been chosen uniformly at random, by \Cref{lem:whp} this will actually allow
us to resolve many comparison queries on the whole $H'$, which brings us closer to sorting the whole $X$.
Label queries can be resolved directly by computing, for every $h'\in H'$, $\sign(\langle g, h'\rangle)$.
To efficiently resolve all comparison queries, we sort the elements of $H'$ by declaring $h' \leq h''$ if and only if
$\langle g, h'\rangle \leqg \langle g,h''\rangle$, where $g=(g_{1},g_{2},\ldots,g_{t})$.
Sorting can be done with $\Oh(s\log s)=\Oh(tN)$ comparisons, so $\Oh(t^{2}N)$ time.
We denote the elements of $H'$ arranged according to this order by $h'_{1},h'_{2},\ldots,h'_{s}$, so that
$\langle g,h'_{1}\rangle \leqg \langle g,h'_{2}\rangle \leqg \cdots \leqg \langle g,h'_{s}\rangle$.
For every $i=1,2,\ldots,s-1$, we compute $\sign(\langle g, h'_{i}-h'_{i+1}\rangle)$ (observe that this is either $-1$ or $0$).
This takes $\Oh(ts)=\Oh(tN)$ time, and in fact implicitly resolves all comparison queries on $H'$.
Indeed, $\sign(\langle g, h'_{i}-h'_{j}\rangle)$ for $i<j$ is either $-1$ or $0$ due to sorting, and it is $0$
if and only if $\sign(\langle g, h'_{k}-h'_{k+1}\rangle) = 0$ for every $k=i,i+1,\ldots,j-1$.

Next, we create a system of $\Oh(tN/\log^{c} N)$ linear inequalities in $t$ variables $\tilde{g}_{1},\tilde{g}_{2},\ldots,\tilde{g}_{t}$.
Let $\tilde{g}=(\tilde{g}_{1},\tilde{g}_{2},\ldots,\tilde{g}_{t})$.
The goal of the inequalities is to encode the current information about $g$. In other words, for any $\tilde{g}$ that respects
all the inequalities, the behaviour of the algorithm should be the same when run with $g$ replaced by $\tilde{g}$.
First, for every $i=1,2,\ldots,m-1$, we add an inequality of the form $\langle \tilde{g}, h_{x_{y_{i}}}-h_{x_{y_{i+1}}}\rangle \leq -1$.
Second, for $i=1,2,\ldots,s$, if $\sign(\langle g, h'_{i}\rangle)=-1$ we add an inequality $\langle \tilde{g}, h'_{i}\rangle \leq -1$,
if $\sign(\langle g, h'_{i}\rangle) = 0$ we add inequalities $\langle \tilde{g}, h'_{i}\rangle \leq 0$ and $\langle \tilde{g}, h'_{i}\rangle \geq 0$,
and if $\sign(\langle g, h'_{i}\rangle)= 1$ we add an inequality $\langle \tilde{g}, h'_{i}\rangle \geq 1$.
Third, for every $i=1,2,\ldots,s-1$ if $\sign(\langle g, h'_{i}-h'_{i+1}\rangle) = -1$ we add an inequality $\langle \tilde{g}, h'_{i}-h'_{i+1}\rangle \leq -1$,
and if $\sign(\langle g, h'_{i}-h'_{i+1}\rangle) = 0$ we add inequalities $\langle \tilde{g}, h'_{i}-h'_{i+1}\rangle \leq 0$ and
$\langle \tilde{g}, h'_{i}-h'_{i+1}\rangle \geq 0$.
We now want to solve this system of inequalities.

We need a few (standard) definitions from combinatorial optimisation. We recommend the reader to consult the excellent
book by Gr{\"{o}}tschel, Lov{\'{a}}sz, and Schrijver~\cite{GLS1988} for the details.
A polyhedron $K$ is a set of solutions of a system of inequalities, or $K=\{ \vec{x} \in \R^{t} : A\vec{x} \leq b \}$.
The facet-complexity of $K$ is at most $\phi$ if the encoding length of each inequality is at most $\phi$, meaning
that each inequality has rational coefficients and the total length of the binary encodings of all rational numbers appearing
in the same inequality is at most $\phi$.
The strong non-emptiness problem for such a polyhedra $K$ amounts to finding some
$x\in K$ (where the coordinates of $x$ are described as rational numbers, so in fact $x\in \Q^{t}$),
or detecting that $K$ is empty. A strong separation oracle takes $y\in \Q^{t}$, checks if $y\in K$,
and if not returns a hyperplane that separates it from $K$, i.e. $c\in\R^{t}$ such that $\langle c,y\rangle > \max \{\langle c,x\rangle : x\in K \}$.
The following theorem is known, and determines our $c\in \mathbb{N}$.

\begin{theorem}[\cite{KHACHIYAN198053}]
\label{thm:emptiness}
For some $c\in \mathbb{N}$,
the strong non-emptiness problem for a $t$-dimensional polyhedra with facet-complexity $\phi$
given by a strong separation oracle can be solved in $\Oh(\poly(t)\phi^{c})$ time and calls to the oracle.
The length of the binary encoding of the found $x\in K$ (if any) is $\Oh(\poly(t)\phi)$.
~\footnote{The formulation of the first part of the theorem is from~\cite[Theorem 6.4.1]{GLS1988}.
The bound on the length of the binary encoding of $x$ follows by either carefully analysing the algorithm
or a black-box rounding procedure from~\cite[Theorem 6.2.13]{GLS1988}.}
\end{theorem}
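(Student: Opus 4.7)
The plan is to invoke the classical \emph{ellipsoid method}, which is the technique behind Khachiyan's original proof~\cite{KHACHIYAN198053} (and the textbook treatment in Gr{\"o}tschel, Lov{\'a}sz, and Schrijver~\cite{GLS1988}). The idea is to maintain a sequence of ellipsoids $E_0 \supseteq E_1 \supseteq \cdots$ in $\R^{t}$, each guaranteed to contain the polyhedron $K$ (if $K$ is non-empty), whose volumes shrink geometrically. A standard calculation based on the facet-complexity bound $\phi$ shows that any non-empty $K$ must intersect a ball of radius $R = 2^{\Oh(\poly(t)\phi)}$ around the origin, so I would initialise $E_0$ to this ball; its volume is at most $R^{t}$.

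At each iteration, I would query the separation oracle on the center $c_i$ of $E_i$. If the oracle reports $c_i \in K$, we are (almost) done. Otherwise, it returns a vector $a \in \R^{t}$ separating $c_i$ from $K$, and I would take $E_{i+1}$ to be the L{\"o}wner--John ellipsoid (the minimum-volume ellipsoid) of the half-ellipsoid $E_i \cap \{x : \langle a, x\rangle \leq \langle a, c_i\rangle\}$. The textbook computation gives $\mathrm{vol}(E_{i+1}) \leq e^{-1/(2(t+1))} \mathrm{vol}(E_i)$, so after $\Oh(\poly(t)\phi)$ iterations the volume drops below $2^{-\Omega(\poly(t)\phi)}$, at which point one may argue via a volume lower bound that a non-empty $K$ of facet-complexity $\phi$ could not fit inside $E_i$, and hence conclude $K = \emptyset$. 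This yields the stated bound on the number of oracle calls, once the exponent $c$ is chosen large enough in the volume estimates.

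The main obstacles are twofold. First, exact arithmetic on the symmetric positive-definite matrix defining $E_i$ would blow up the bit complexity of the entries linearly per iteration, so after each update one must round the matrix while provably preserving the invariants (containment of $K$ and geometric volume shrinkage); this rounding analysis is the technical core of Khachiyan's argument and is what ultimately makes each iteration run in $\Oh(\poly(t)\phi^{c})$ time. Second, to obtain a feasible point $x \in K$ with binary encoding length $\Oh(\poly(t)\phi)$ as stated, one cannot simply return the final center $c_i$, whose coordinates are rationals with potentially excessive bit-length; instead, following~\cite[Theorem~6.2.13]{GLS1988}, I would apply a simultaneous-Diophantine rounding procedure (e.g., continued-fraction based) to round $c_i$ to a nearby rational point of small bit complexity, and then use the separation oracle to certify membership, iterating the rounding scheme if necessary. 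I expect this bit-complexity management (both during the ellipsoid updates and when extracting the final solution) to be the most delicate part of the argument, since the correctness of the whole method hinges on the rounding errors never overwhelming the geometric progress made by each iteration.
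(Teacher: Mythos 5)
The paper itself does not prove this theorem; it imports it by citation, with the footnote pointing to~\cite[Theorem 6.4.1]{GLS1988} for the oracle-based ellipsoid method and to~\cite[Theorem 6.2.13]{GLS1988} for the simultaneous-Diophantine rounding that bounds the encoding length of $x$. Your sketch correctly reconstructs exactly that argument -- ellipsoid iterations with volume shrinkage, careful rounding of the ellipsoid matrix to control bit complexity per iteration, and continued-fraction rounding of the final center to obtain a small-encoding feasible point -- so it is essentially the same route as the cited source. One small caveat worth noting: the step ``a non-empty $K$ of facet-complexity $\phi$ could not fit inside $E_i$'' is not literally true for lower-dimensional or unbounded polyhedra; GLS handle this by perturbing/blowing up $K$ (or working with $K$ intersected with a suitable box and relaxed to a full-dimensional body) before applying the volume lower bound, and then rounding a point of the relaxed body back into $K$. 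That detail is needed for the \emph{strong} non-emptiness problem as stated, but omitting it is reasonable in a sketch that is otherwise aligned with the standard treatment.
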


Our system of linear inequalities clearly forms a $t$-dimensional polyhedron $K$. Further, its facet-complexity is
$\Oh(t\log (N+n))=\Oh(t\log N)$. We need to establish that $K$ is non-empty. 
Recall that, by \Cref{lem:real}, we can think that $g\in \R^{t}$.
It is not necessarily the case that $g\in K$, as we have encoded every constraint of the form $\sign(\langle g,h \rangle) = -1$ as
$\langle \tilde{g},h \rangle \leq -1$. However, for any $\alpha > 0$, $\sign(\langle g,h \rangle) = \sign(\langle \alpha g,h \rangle)$.
Thus, we can choose sufficiently large $\alpha > 0 $ so that in fact for each of the finitely many relevant constraints of the form
$\sign(\langle g,h \rangle) = -1$ we actually have $\langle g,h \rangle \leq -1$, and so $\alpha g \in K$, hence $K$ non-empty.
Thus, \Cref{thm:emptiness} allows us to find $\tilde{g}\in K$ with every coordinate
being a rational number with nominator and denominator consisting of $\Oh(\poly(t)\log N)=\Oh(\poly(t)\log n)$ bits,
assuming that we can implement an efficient strong separation oracle. A strong separation oracle can be implemented
by simply iterating over all the inequalities for a given $y$, checking if $\langle y, h\rangle \leq 0$ and returning
$h$ as the hyperplane that separates $y$ from $K$ otherwise. This takes $\Oh(t^{2} N/\log^{c} N)$ time per call,
making the overall time $\Oh(\poly(t)(t\log n)^{c}t^{2} N/\log^{c} N)=\Oh(\poly(t)N)$.

Having found $\tilde{g}=(\tilde{g}_{1},\tilde{g}_{2},\ldots,\tilde{g}_{t})$, we calculate $\tilde{x}_{i} = \langle \tilde{g},h_{i}\rangle$, for $i=1,2,\ldots,N$.
Every $\tilde{x}_{i}$ is a sum of $t$ rational numbers with nominators and denominators consisting of $\Oh(\poly(t)\log n)$,
thus can be represented as a rational number with nominator and denominator consisting also of $\Oh(\poly(t)\log n)$ bits.
This takes $\Oh(\poly(t)N)$ time in our model. Next, we sort $\tilde{x}_{1}, \tilde{x}_{2},\ldots,\tilde{x}_{N}$ with
radix sort in $\Oh(\poly(t)N)$ time as described earlier to obtain a sorted sequence
$\tilde{x}_{i_{1}} \leq \tilde{x}_{i_{2}} \leq \ldots \leq \tilde{x}_{i_{N}}$. Because of the inequalities of the form
$\langle \tilde{g}, h_{x_{y_{i+1}}}-h_{x_{y_{i}}}\rangle \leq -1$, for $i=1,2,\ldots,m-1$,
we have $\tilde{x}_{y_{1}} < \tilde{x}_{y_{2}} < \ldots < \tilde{x}_{y_{m}}$. That is, the order on the elements of $Y$
is the same when sorted using the original $x_{i}$s and their approximations $\tilde{x}_{i}$.
We partition all elements of $X$ into sets $\tilde{Z}_{j}$ as follows:
\[ \tilde{Z}_{j} := \{ i \in [1,N] : \tilde{x}_{y_{j\cdot \log^{c+3} N}} \leq \tilde{x}_{i} \text{ and } \tilde{x}_{i} < \tilde{x}_{y_{(j+1)\log^{c+3} N}} \} . \]
This takes only $\Oh(N)$ time as the elements are already sorted by the $\tilde{x}_{i}$s.
We claim that $\tilde{Z}_{j}$ is a reasonably good approximation of $Z_{j}$ in the following sense.
Consider $i \in \tilde{Z}_{j}$. We can check in constant time if $i\in Z_{j-1}, Z_{j}, Z_{j+1}$, and, if so, insert $i$
into the appropriate $Z_{j'}$, for $j'\in\{j-1,j,j+1\}$. Assume otherwise, by symmetry it is enough to consider the case where $x_{i} \lessg x_{y_{(j-1)\log^{c+3} N}}$.
In such a case, for every $i'\in [(j-1)\log^{c+3}N,j\cdot \log^{c+3}N)$ we have $\langle \tilde{g}, h_{y_{i'}}-h_{i}\rangle < 0$ but $\langle g, h_{y_{i'}}-h_{i}\rangle \greatg \zerog$.
Hence, such an $i$ contributes $\Omega(\log^{c+3} N)$ elements to the set $\{ h \in H : h \notin \infer(H', g) \}$.
However, we have conditioned on the size of this set being $\Oh(N\log^{c+2}N)$, so such a situation
can happen only for $\Oh(N/\log N)$ elements $x_{i}$. For each such $x_{i}$,
we binary search over the elements $x_{j\cdot \log^{c+3}N}$, for $j=1,2,\ldots,m/\log^{c+3} N$, to assign $i$ to the appropriate $Z_{j}$ in $\Oh(tN)$ total time. 

To summarise the first phase of the algorithm, in $\Oh(\poly(t)N)$ time we have partitioned $X$ into $Z_{j}$s such that
$|Z_{j}| = \Oh(\log^{2c+3}N)$ holds for every $j$ and the elements assigned to $Z_{j}$ are all strictly smaller than the elements assigned
to $Z_{j+1}$. Hence, it remains to sort every $Z_{j}$.

\paragraph{The second phase.}
The second phase proceeds similarly as the first phase.
We first observe that sorting every $Z_{j}$ could be done by determining
$\arrange_{H_{1}}(g)$, where $H_{1} = \{ h_{i}-h_{i'} : i,i' \in Z_{j} \text { for some } j\}$. We note that $|H_{1}| = \Oh(N\log^{2c+3}N)$, and again by \Cref{lem:inference} the inference dimension of $H_{1}$ is $\Oh(t\log (N+n))=\Oh(t\log N)$. 

We proceed as in the first phase, namely, we choose a uniform subset $H'_{1}\subseteq H_{1}$ of size $s'=tN/\log^{c} N$, and sort its elements by declaring $h' \leq h''$ if and only if $\langle g,h' \rangle \leqg \langle g,h''\rangle$.
We denote the elements of $H'_1$ arranged according to this order by $h''_{1},h''_{2},\ldots,h''_{s'}$. 

Then, we create a system of $\Oh(tN/\log^{c} N)$ linear inequalities in $t$ variables $\tilde{g}'_{1},\tilde{g}'_{2},\ldots,\tilde{g}'_{t}$ as follows. Let $\tilde{g}'=(\tilde{g}'_{1},\tilde{g}'_{2},\ldots,\tilde{g}'_{t})$.
For every $i=1,2,\ldots,s-1$, if $\sign(\langle g, h''_{i}-h''_{i+1}\rangle) = -1$, then we add an inequality $\langle \tilde{g}', h''_{i}-h''_{i+1}\rangle \leq -1$,
and if $\sign(\langle g, h''_{i}-h''_{i+1}\rangle) = 0$, then we add inequalities $\langle \tilde{g}', h''_{i}-h''_{i+1}\rangle \leq 0$ and $\langle \tilde{g}', h''_{i}-h''_{i+1}\rangle \geq 0$.
Note that, in contrast to the first phase, we now have only one type of inequalities. We solve this system of inequalities in time $\Oh(\poly(t)N)$ as in the first phase. 

Next, we calculate $\tilde{x}'_{i}=\langle \tilde{g}',h_{i}\rangle$, for $i=1,2,\ldots,N$ in $\Oh(\poly(t)N)$ time. 
Then, for every $Z_{j}$, we sort its elements $i\in Z_j$ by their corresponding $\tilde{x}'_{i}$. Very importantly, to do this step efficiently, we will do this sorting together for all $j$ by radix sort, i.e. we actually sort triples $(j,\tilde{x}'_{i},i)$ and so this takes $\Oh(\poly(t)N)$ time.

Now, consider a single $Z_{j} = \{ i_{1},i_{2},\ldots,i_k \}$, where $\langle \tilde{g}',h_{i_{1}}\rangle \leq \langle \tilde{g}',h_{i_{2}}\rangle \leq \ldots \leq \langle \tilde{g}', h_{i_{k}}\rangle$ because of the sorting we have just done. We claim that the number of inversions in the corresponding sequence $\langle g,h_{i_{1}}\rangle, \langle g,h_{i_{2}}\rangle, \ldots, \langle g,h_{i_{k}}\rangle$ is small. Indeed, every $a,b$ such that $\langle \tilde{g}', h_{i_{a}} \rangle \leq \langle \tilde{g}', h_{i_{b}}\rangle$ but
$\langle g, h_{i_{a}} \rangle \greatg \langle g, h_{i_{b}}\rangle$ corresponds to $h\in H_{1}$ such that $h\notin \infer(H'_{1}, g)$. Thus, by \Cref{lem:whp} the total number of inversions in all such sequences is only $\frac{\Oh(t\log^{2}N)}{s'}|H_{1}| = \Oh(\log^{3c+5}N)$ with high probability. This means that we can finish sorting all the sequences by e.g. insertion sort in $\Oh(t(N+\log^{3c+5}N))=\Oh(tN)$ time.

\paragraph{$ \S $ Summary.} 
After the second phase, we have correctly sorted the set $X$. Each phase works in $\Oh(tN)$ time, conditioned on some events
($m=\Theta(N/\log^{c}N)$, $|Z_{j}|=\Oh(\log^{2c+3}N)$ for every $j$,
$|H\setminus \infer(H',g)| = \Oh(N\log^{c+2} N)$,
 $|H_{1}\setminus \infer(H_{1}',g)| = \Oh(N\log^{3c+5} N)$)
 that happen with high probability in $N$.
 To obtain the statement of Lemma \ref{mainSortingLemma}, we observe
 that we can verify the correctness of the output (by checking if $x_{1}\leqg x_{2}\leqg \ldots \leqg x_{N}$) in $\Oh(tN)$
 time, and terminate the procedure if its running time exceeds $\Oh(tN)$ time. Then, with high probability in $N$ we obtain
 the correct output, and otherwise we restart the procedure.

This concludes the proof of Lemma \ref{mainSortingLemma}, and, as such, the main results of our paper, Theorems \ref{PathEnumWithSortingTheorem} and \ref{advancedEnumTheorem}, are now completely proven.

\section{Conclusions}
In this paper, we have approached the following ranked enumeration problem, called $\enumProbTransducer^f$, which was considered in Bourhis et al.~\cite{BourhisEtAl2021}: given a text document $\doc$ and an unambiguous cost transducer $\mathcal{T}$, and for $f$ being a polynomial function, enumerate the first $f(|\doc|)$ output tuples (obtained by running  $\mathcal{T}$ on $\doc$) in increasing order w.r.t. their weights. We have shown how this problem can be solved by reducing it to shortest path enumeration problem, for a weighted graph. Then, as a direct application of Eppstein's classical enumeration algorithm~\cite{Eppstein1998}, we obtained a ranked enumeration algorithm for $\enumProbTransducer^f$ such that, on input $\mathcal{T}$ and $\doc$, the preprocessing time is $\bigO(|\mathcal{T}| |\doc|)$ and the delay of outputting the $i^{th}$ element of the output, namely $s_i$, is $\bigO(|s_i| + \min\{ \log i, \log(|\mathcal{T}| |\doc|)\})$, for every $i \in [k]$.
Next, we designed a ranked enumeration algorithm for $\enumProbTransducer^f$ such that, on any input $(\mathcal{T}, \doc)$, the preprocessing time is still $\bigO(\poly(|\mathcal{T}|)|\doc|)$ and every output element $s$ has a delay of $\bigo(|s| \poly(|\mathcal{T}|))$, where the guarantees on the preprocessing time and of the delay hold with high probability in the size of the input and the size of the output generated so far. The main ingredient in obtaining this result, and also the main technical contribution of this paper, is a novel sorting method, allowing us to order efficiently the weights of the outputs produced by running $\mathcal{T}$ on $\doc$. 

\section*{Acknowledgments}

Pawe\l{} Gawrychowski is partially supported by the Polish National Science Centre (under grant number 2023/51/B/ST6/01505). 

Florin Manea is supported by the German Research Foundation (Deutsche Forschungsgemeinschaft, DFG) in the framework of the Heisenberg Programme – project number 466789228 (gef\"{o}rdert durch die Deutsche Forschungsgemeinschaft (DFG) – Projektnummer 466789228). 

Markus L. Schmid is supported by the German Research Foundation (Deutsche Forschungsgemeinschaft, DFG) – project number 522576760 (gef\"{o}rdert durch die Deutsche Forschungsgemeinschaft (DFG) – Projektnummer 522576760).

\bibliographystyle{plain}
\bibliography{biblio}

\appendix

\section{More Details About Eppstein's DAG of Shortest Paths}\label{sec:AppendixEppstein}

Let $G = (V, E, \weightName, \labelName)$ be a graph with edge weights from $\group$ and edge labels from $\monoid$. For any $u \in V$, a $u$-path is any path that starts in $u$. For nodes $u, v \in V$ such that there is a path from $u$ to $v$, we denote by $d(u, v)$ the \emph{distance} from $u$ to $v$, i.\,e., the length of a shortest path from $u$ to $v$. 

\medskip
\noindent\textbf{Eppstein's DAG of Shortest Paths.} Let $G = (V, E, \weightName, \labelName)$ be a graph with edge weights from $\group$ and edge labels from $\monoid$. Let $s, t \in V$ with $s \neq t$ be distinguished nodes of $G$. Since we are interested in all $s$-to-$t$ paths, we assume that every node is reachable from $s$, and that $t$ can be reached from every node (this is without loss of generality, since we can simply remove in linear time all nodes that violate this property). Let $T$ be a tree of the shortest paths from any node to $t$, i.\,e., $T$ is a spanning tree of $G$ with $t$ as its root (all edges are directed towards $t$) and with the property that for every $v \in V$, the $v$-to-$t$ path in $T$ is a shortest $v$-to-$t$ path in $G$. In particular, for every $v \in V$, we have now defined a unique shortest $v$-to-$t$-path in $G$, namely the $v$-to-$t$ path in $T$. Thus, in the following, we will talk about \emph{the shortest} $v$-to-$t$-path (in $G$) for any $v \in V$.

Every edge of $G$ that is not an edge of $T$ is called a \emph{sidetrack edge}. We observe that every possible $s$-to-$t$-path of $G$ is uniquely represented by its subsequence of sidetrack edges (i.\,e., the parts in between are uniquely defined by paths in $T$). Analogously, every sequence $(u_1, v_1), (u_2, v_2), \ldots, (u_k, v_k)$ of sidetrack edges such that, for every $i \in [k-1]$, $u_{i + 1}$ is reachable from $v_i$ in $T$, uniquely represents an $s$-to-$t$-path of $G$ (recall that, by definition, $u_1$ is reachable from $s$ and $t$ is reachable from $v_k$). We call sequences of sidetrack edges with this property \emph{valid}. This means that in order to represent all $s$-to-$t$-paths, it is sufficient to represent all sequences of sidetrack edges with this property.

For every edge $e = (u, v) \in E$, let $\delta(e)$ be the weight-difference between the shortest $u$-to-$t$-path and the $u$-to-$t$-path where we first take the edge $e$ and then follow the shortest $v$-to-$t$-path. More formally, $\delta(e) = \weightfunc{e} + d(v, t) - d(u, t)$. Obviously, $\delta(e)$ is non-negative. Intuitively speaking, $\delta(e)$ measures the additional distance caused by using the edge $e$ instead of just following the shortest path from $u$ (note that if $e$ is not a sidetrack edge, then these two paths are the same and therefore $\delta(e) = 0$).

We will now present a data structure that represents all $s$-to-$t$-paths of $G$ (along with their weights and labels) as a DAG. As mentioned above, we will represent $s$-to-$t$-paths by their subsequences of sidetrack edges.

An \emph{$(s, t)$-Eppstein-DAG} (\emph{of $G$}) is a directed acyclic graph $D_G$ with the following properties. Every node $x$ of $D_G$ represents a sidetrack edge $e(x) \in E$. For simplicity, we write $\delta(x) = \delta(e(x))$ for nodes $x$ of $D_G$. With every node $v \in V$, we identify a unique node $v^*$ of $D_G$ (note that, like all nodes of $D_G$, also $v^*$ represents some sidetrack edge of $G$, which, however, is not necessarily adjacent to $v$). There are two different kinds of edges in $D_G$: \emph{heap edges} and \emph{cross edges}. The heap edges are such that, for every $v \in V$, the nodes of $D_G$ reachable by heap edges from $v^*$ form a min-heap $H(v)$ of nodes representing the sidetrack edges of $G$ with their source on the shortest $v$-to-$t$-path; this min-heap $H(v)$ is ordered by the $\delta$-values and has $v^*$ as its root. In addition, every node $x$ of $D_G$ with $e(x) = (u, v)$ for some $u \in V$ has a cross edge from $x$ to $v^*$ (which is the root of the heap $H(v)$). Finally, we also add a distinguished node $r$ with a cross edge to $s^*$ (the root of the heap $H(s)$).

For convenience, let us denote $s$ by $v_1$ for the sake of the following explanation. An $r$-path in the DAG $D_G$ starts with the cross edge to $v_1^*$ (i.\,e., the root of the heap $H(v_1)$), then it traverses heap edges (possibly zero) in $H(v_1)$ until it jumps to the root $v^*_2$ of some other heap $H(v_2)$ by a cross edge, then it traverses some heap edges (possibly zero) in $H(v_2)$ until again it jumps to the root $v^*_3$ of another heap $H(v_3)$, and so on. Now let $x_1, \ldots, x_k$ be the nodes that are last visited in each of the traversed heaps, i.\,e., $x_1$ is last visited in $H(v_1)$, $x_2$ is last visited in $H(v_2)$, and so on. Or, equivalently, for every $i \in [k-1]$, $x_i$ is the source of the cross edge that leads to $H(v_{i+1})$ (i.\,e., with target $v^*_{i+1}$), and $x_k$ is the final node of the path. Note that if the path does not traverse any heap edges of some $H(v_i)$, then $x_i$ is the root $v^*_i$ of $H(v_i)$. We say that $x_1, x_2, \ldots, x_k$ are the nodes \emph{induced} by the $r$-path. By definition of $D_G$, this means that, for every $i \in [k-1]$, $e(x_i) = (u_i, v_{i+1})$ for some $u_i \in V$, and $e(x_k) = (u_k, v_{k+1})$ for some $u_k, v_{k+1} \in V$. Since each heap $H(v_i)$ stores exactly the sidetrack edges with their source on the shortest $v_i$-to-$t$-path, we can conclude that, for every $i \in \{2, 3, \ldots, k\}$, $u_i$ is reachable from $v_i$ in $T$. Consequently, the sequence $(u_1, v_2), (u_2, v_3), \ldots, (u_k, v_{k+1})$ is a valid sequence of sidetrack edges, and therefore represents an $s$-to-$t$-path. Finally, if we change the $r$-path, then we also change the induced nodes, which means that we represent a different $s$-to-$t$-path. 

On the other hand, let $P$ be an arbitrary $s$-to-$t$-path of $G$ with sidetrack edges $(u_1, v_2)$, $(u_2, v_3)$, $\ldots$, $(u_k, v_{k+1})$. Since $u_1$ is reachable from $v_1 = s$ in $T$, the edge $(u_1, v_2)$ is a sidetrack edge with its source on the shortest $v_1$-to-$t$-path; thus, $H(v_1)$ stores a node $x_1$ with $e(x_1) = (u_1, v_2)$. For every $i \in \{2, 3, \ldots, k\}$, since $u_i$ is reachable from $v_i$ in $T$, the edge $(u_{i}, v_{i+1})$ is a sidetrack edge with its source on the shortest $v_i$-to-$t$-path; thus, $H(v_i)$ stores a node $x_i$ with $e(x_i) = (u_i, v_{i+1})$. By induction, this means that there is an $r$-path in $D_G$ that induces the nodes $x_1, x_2, \ldots, x_k$ with $e(x_i) = (u_i, v_{i+1})$ for every $i \in [k]$. Moreover, if we change the $s$-to-$t$-path, then we also change the sequence of sidetrack edges, which means that we also change the corresponding $r$-path.

We conclude that there is a one-to-one correspondence between $s$-to-$t$-paths of $G$ and $r$-paths of $D_G$. Next, we add weights and labels to $D_G$ such that the $r$-paths of $D_G$ also describe the weights and labels of the corresponding $s$-to-$t$-paths of $G$.

We first add weights to the edges of $D_G$. For every heap edge $(x, y)$, we set $\weightfunc{(x, y)} = \delta(y) - \delta(x)$ (recall that nodes of $D_G$ represent edges of $G$ and therefore have $\delta$-values). Since $(x, y)$ is a heap edge, $\delta(x) \leqg \delta(y)$, which means that $\weightfunc{(x, y)}$ is non-negative. For every cross edge $(x, y)$, we set $\weightfunc{(x, y)} = \delta(y)$ (since $\delta$-values are non-negative, also in this case $\weightfunc{(x, y)}$ is non-negative). 

\begin{observation}\label{weightObservation}
For any $s$-to-$t$-path $P$ of $G$ and its corresponding $r$-path $P'$ of $D_G$, we have that $\weightfunc{P} = d(s, t) + \weightfunc{P'}$ (i.\,e., the $\delta$-values on the edges of $P'$ add up to the weight difference of the $s$-to-$t$ path $P$ and the shortest $s$-to-$t$-path). 
\end{observation}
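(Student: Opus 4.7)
My plan is to compute $\weightfunc{P'}$ and $\weightfunc{P}$ separately and show that both reduce to $d(s,t) + \sum_{i=1}^{k} \delta(e(x_i))$, where $x_1, x_2, \ldots, x_k$ are the nodes of $D_G$ induced by $P'$ and $e(x_i) = (u_i, v_{i+1})$ are the sidetrack edges of $P$ (with $v_1 = s$). The point is that the weights on $D_G$ were tailored precisely so that summing along an $r$-path collapses to the sum of $\delta$-values over the induced nodes, and the weight of $P$ itself differs from $d(s,t)$ by exactly this quantity because the intermediate tree pieces telescope.

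For $\weightfunc{P'}$, I would split the path into blocks: a cross edge $r \to v_1^*$ of weight $\delta(v_1^*)$, then for each $i$ a (possibly empty) chain of heap edges from $v_i^*$ down to $x_i$, followed (for $i < k$) by a cross edge $x_i \to v_{i+1}^*$ of weight $\delta(v_{i+1}^*)$. Because the heap-edge weights were defined as $\delta(y) - \delta(x)$, the sum within $H(v_i)$ telescopes to $\delta(x_i) - \delta(v_i^*)$. Adding everything, the terms $\delta(v_i^*)$ introduced by the cross edges cancel exactly with the $-\delta(v_i^*)$ introduced by the telescoping, leaving $\weightfunc{P'} = \sum_{i=1}^{k}\delta(x_i)$.

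For $\weightfunc{P}$, I would write $P$ explicitly as the concatenation of $T$-segments from $v_i$ to $u_i$ (for $i=1,\ldots,k$), the sidetrack edges $(u_i, v_{i+1})$, and a final $T$-segment from $v_{k+1}$ to $t$. Since $T$ consists of shortest paths to $t$, each $T$-segment from $a$ to $b$ (where $b$ lies on the shortest $a$-to-$t$ path) has length $d(a,t) - d(b,t)$. Substituting this and using $\delta((u_i, v_{i+1})) = \weightfunc{(u_i, v_{i+1})} + d(v_{i+1}, t) - d(u_i, t)$, the terms $d(v_i, t)$ and $-d(u_i, t)$ regroup so that everything telescopes except a leading $d(v_1, t) = d(s, t)$ and the sum $\sum_{i=1}^{k}\delta((u_i, v_{i+1}))$. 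Combining the two computations yields $\weightfunc{P} = d(s,t) + \weightfunc{P'}$.

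The only genuinely delicate point is bookkeeping the first/last cross edges correctly (in particular the initial edge from $r$ to $s^*$ contributing $\delta(s^*)$, and the fact that the final $x_k$ may coincide with $v_k^*$ if no heap edges in $H(v_k)$ are traversed), and making sure both telescoping arguments are indexed consistently so that the cancellation is exact; the rest is mechanical.
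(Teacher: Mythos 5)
Your proof is correct. The paper states Observation~\ref{weightObservation} without proof, so there is no "paper's argument" to compare against, but your telescoping computation is the standard (and essentially the only natural) way to establish it: you show $\weightfunc{P'} = \sum_{i=1}^k \delta(e(x_i))$ by cancelling the $\delta(v_i^*)$ terms between consecutive cross and heap-chain blocks, and you show $\weightfunc{P} = d(s,t) + \sum_{i=1}^k \delta(e(x_i))$ by expanding each sidetrack weight via $\weightfunc{(u_i,v_{i+1})} = \delta(e(x_i)) - d(v_{i+1},t) + d(u_i,t)$ and observing that the $d(u_i,t)$ terms cancel outright while the $d(v_i,t)$ terms telescope against the final tree segment, leaving only $d(v_1,t)=d(s,t)$. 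The edge case $k=0$ (the $r$-path consisting of $r$ alone, corresponding to the shortest $s$-to-$t$ path) is trivially consistent, and your bookkeeping of the initial cross edge $r \to s^*$ and the possibility $x_i = v_i^*$ (empty heap chain) is handled correctly by the telescoping.
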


Next, for every node $x$ of some heap $H(v)$ with $e(x) = (u, w)$, we define $\labelfunc{x}$ to be the label of the $v$-to-$u$-path of $T$. For every $w \in V$, let $\beta_w$ be the label of the shortest $w$-to-$t$-path. All the elements $\beta_w$ can be computed in linear time by traversing the tree $T$ starting in its root $t$. 

\begin{observation}\label{labelObservation}
Let $P$ be some $s$-to-$t$-path of $G$, let $P'$ be its corresponding $r$-path of $D_G$, and let $x_1, x_2, \ldots, x_k$ be the nodes induced by $P'$. Then $\labelfunc{P} = \labelfunc{x_1} \concatm \labelfunc{e(x_1)} \concatm \ldots \concatm \labelfunc{x_k} \concatm \labelfunc{e(x_k)} \concatm \beta_w$, where $w$ is the target of the edge $e(x_k)$.
\end{observation}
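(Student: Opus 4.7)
The plan is to leverage the one-to-one correspondence between $s$-to-$t$-paths of $G$ and $r$-paths of $D_G$ that was established in the discussion preceding the observation, together with the explicit structural decomposition of a path $P$ in terms of its sidetrack edges. Write $v_1 = s$ and let $(u_1, v_2), (u_2, v_3), \ldots, (u_k, v_{k+1})$ be the sidetrack edges of $P$ in order, so that $e(x_i) = (u_i, v_{i+1})$ for every $i \in [k]$, and set $w = v_{k+1}$ as in the statement.

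The first step is to describe $P$ explicitly as a concatenation of $T$-paths and sidetrack edges: $P$ starts at $v_1 = s$, follows the unique $v_1$-to-$u_1$ path in $T$ (call it $\pi_1$), then takes the sidetrack edge $e(x_1)$ to $v_2$, then follows the unique $v_2$-to-$u_2$ path in $T$ (call it $\pi_2$), takes $e(x_2)$, and so on until it traverses $e(x_k)$ arriving at $v_{k+1} = w$, and finally follows the shortest $w$-to-$t$ path in $T$ (call it $\pi_{k+1}$). This is exactly the decomposition implicit in the definition of ``valid sequence of sidetrack edges'' from the preceding text, so it is justified by invoking that discussion. Since $\labelName$ is a monoid homomorphism from paths to $\monoid$ (i.e., $\labelfunc{\cdot}$ of a concatenation is the $\concatm$-product of the labels of the parts), we obtain
\begin{equation*}
\labelfunc{P} \;=\; \labelfunc{\pi_1} \concatm \labelfunc{e(x_1)} \concatm \labelfunc{\pi_2} \concatm \labelfunc{e(x_2)} \concatm \cdots \concatm \labelfunc{\pi_k} \concatm \labelfunc{e(x_k)} \concatm \labelfunc{\pi_{k+1}}.
\end{equation*}

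The second step is to identify the factors $\labelfunc{\pi_i}$ with the quantities $\labelfunc{x_i}$ and $\beta_w$ used in the observation. For $i \in [k]$, the node $x_i$ lies in the heap $H(v_i)$ and represents the sidetrack edge $e(x_i) = (u_i, v_{i+1})$; by the very definition of $\labelfunc{x_i}$, this value equals the label of the $v_i$-to-$u_i$ path of $T$, i.e., $\labelfunc{\pi_i}$. Likewise, $\pi_{k+1}$ is the $w$-to-$t$ path in $T$, which by construction of $T$ is a shortest $w$-to-$t$ path in $G$, and $\beta_w$ was defined to be the label of this path. Substituting $\labelfunc{\pi_i} = \labelfunc{x_i}$ for $i \in [k]$ and $\labelfunc{\pi_{k+1}} = \beta_w$ into the displayed equation yields the claimed identity.

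No genuine obstacle is expected; the only care required is to verify that the induced nodes $x_1, \ldots, x_k$ really do correspond to the sidetrack edges of $P$ in the correct order and in the correct heaps (so that $\labelfunc{x_i}$ measures the $v_i$-to-$u_i$ segment rather than some other $T$-segment). This is precisely what the one-to-one correspondence established before the observation guarantees, so the proof reduces to a careful bookkeeping of indices together with one application of the homomorphism property of $\labelName$.
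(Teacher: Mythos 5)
Your proof is correct and fills in exactly the reasoning the paper leaves implicit (the claim is stated as an Observation without an accompanying proof): you decompose $P$ into the $T$-segments interleaved with the sidetrack edges $e(x_1),\ldots,e(x_k)$, use the fact that $\labelName$ is multiplicative over path concatenation (immediate from the definition $\labelfunc{P}=\prod_i\beta_i$), and identify each $\labelfunc{\pi_i}$ with $\labelfunc{x_i}$ and the final segment's label with $\beta_w$ directly from the definitions of $\labelfunc{x}$ and $\beta_w$ in the appendix. This is the same argument the paper intends; no different route is taken.
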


Eppstein showed in~\cite{Eppstein1998} that we can compute an $(s, t)$-Eppstein-DAG of linear size and with constant out-degree:

\begin{theorem}[Eppstein~\cite{Eppstein1998}]
Given a directed graph $G = (V, E)$ with edge weights from $\group$ and edge labels from $\monoid$, a tree $T$ of the shortest paths from any node of $G$ to $t$, and nodes $s, t \in V$, we can compute an $(s, t)$-Eppstein-DAG $D_G$ in time and space $\bigo(|G|)$ with the following properties. The DAG $D_G$ has maximum outdegree $4$, for every edge $e$ of $D_G$ the value $\weightfunc{e}$ is given as a pointer, and for every node $x$ of $D_G$ (except the node $r$) the value $\labelfunc{x}$ is given as a pointer.
\end{theorem}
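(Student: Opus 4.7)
The plan is to follow Eppstein's classical construction~\cite{Eppstein1998}, adapted to carry $\monoid$-labels by pointer. In the first stage, I would traverse $T$ from its root $t$ to compute $d(v,t)$ for all $v\in V$ in $O(|V|)$ time; in the same pass, concatenate the $T$-edge labels along an Euler tour into a single global string, and record for each $v$ a pointer-with-length into this string that represents $\beta_v$ from Observation~\ref{labelObservation}. With $d(\cdot,t)$ available, each $\delta(e)$ is computed in $O(1)$, and sidetrack edges are those not in $T$. In the second stage, for every $v\in V$ I would build a binary min-heap $H_{\text{out}}(v)$ over the sidetrack edges leaving $v$, ordered by $\delta$, via bottom-up heapification, spending $O(|\text{Out}(v)|)$ per vertex and $O(|E|)$ overall.

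In the third stage, for every $v\in V$ I would build a heap $H_T(v)$ storing the roots of $H_{\text{out}}(u)$ for every ancestor $u$ of $v$ in $T$, ordered by $\delta$. Processing $T$ top-down, $H_T(v)$ is $H_T(\text{parent}(v))$ with one new element inserted; to keep the total size of all $H_T(v)$'s in $O(|V|)$, I would use a persistent heap that shares structure across versions (see the discussion of the main obstacle below). Each node $y$ of $H_T(v)$ coming from an ancestor $u$ is given a label pointer computed in amortized $O(1)$ by combining the $\beta_u$ and $\beta_v$ pointers of the first stage to extract the label of the $v$-to-$u$ path of $T$ from the Euler-tour string. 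In the fourth and final stage, the nodes of $D_G$ are the nodes of all $H_{\text{out}}(v)$ and $H_T(v)$; for every node $x$ representing a sidetrack edge $e(x)=(u,w)$ I add a cross edge from $x$ to the root of $H_T(w)$, and add a distinguished root $r$ with a cross edge to the root of $H_T(s)$. Each node has at most three heap-children (two from the binary-heap structure, plus the one linking a node of $H_T(v)$ to the root of the $H_{\text{out}}(u)$ sub-heap it represents) and at most one outgoing cross edge, giving maximum out-degree $4$; heap-edge weights are $\delta$-differences between endpoints and cross-edge weights are $\delta$-values of targets, each retrievable in $O(1)$ using the precomputed $\delta$ values.

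The main obstacle is the third stage: keeping the total space of all $H_T(v)$'s linear in $|V|$. A naive path-copying persistent binary heap allocates $\Theta(\log|V|)$ new nodes per insertion and would thus use $\Theta(|V|\log|V|)$ space. I would replace it by Eppstein's persistent heap construction, where a potential-function amortization absorbs the $O(\log|V|)$ worst-case rebalancing cost so that each insertion charges only $O(1)$ newly allocated persistent nodes against the amortization. Once this is in place, the routine computations in the other stages yield the claimed $O(|G|)$ preprocessing time and space, maximum out-degree~$4$, and $O(1)$ pointer-based access to weights and labels.
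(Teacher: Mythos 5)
Your outline follows the shape of Eppstein's construction, which the paper quotes from~\cite{Eppstein1998} without reproving, but as written two steps would not go through. First, $H_{\mathrm{out}}(v)$ should \emph{not} be a plain binary heap over all sidetrack edges leaving $v$. In Eppstein's construction the minimum-$\delta$ sidetrack edge $\mathrm{outroot}(v)$ is separated out, $H_{\mathrm{out}}(v)$ is a $2$-heap over the \emph{remaining} sidetrack edges, and the node that represents $\mathrm{outroot}(v)$ inside $H_T(w)$ is given a \emph{single} heap child, namely the root of this residual $2$-heap. In your version, the node of $H_T(w)$ that ``stores the root of $H_{\mathrm{out}}(v)$'' and the actual root node of $H_{\mathrm{out}}(v)$ are two distinct nodes of $D_G$ that represent the same sidetrack edge $\mathrm{outroot}(v)$, and both receive cross edges. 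That destroys the bijection between $r$-paths of $D_G$ and $s$-to-$t$-paths of $G$ (an $s$-to-$t$-path whose last sidetrack edge is $\mathrm{outroot}(v)$ now corresponds to two $r$-paths, one ending at each copy); if you instead identify the two nodes, the out-degree becomes $2+2+1=5$. It is precisely Eppstein's peeling off of $\mathrm{outroot}(v)$ with out-degree one that simultaneously fixes both problems and yields the bound $2+1+1=4$.

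Second, and more substantively, your linear-space argument for the persistent $H_T$ heaps is not correct. Potential-function amortization can absorb the \emph{time} cost of rebalancing, but it cannot reduce the number of \emph{nodes} that must be allocated: when $H_T(v)$ is produced from $H_T(\mathrm{parent}(v))$ by a single insertion followed by a sift-up, path copying must allocate a fresh copy of every node on the sift-up path so that the parent's version stays intact, and that path has $\Theta(\log|V|)$ nodes in the worst case regardless of any charging scheme. Summed over $T$ this is $\Theta(|V|\log|V|)$, not $O(|V|)$. Eppstein obtains the $O(|G|)$ bound through a genuinely different, more specialized heap whose balance invariant is relaxed so that a new version differs from its parent in only $O(1)$ nodes (Section~4 of~\cite{Eppstein1998}); that construction is the technically nontrivial core of the theorem, and invoking amortization on a standard persistent balanced binary heap does not recover it.
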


Let $D_G$ be the $(s, t)$-Eppstein-DAG $D_G$ given by the above result.
We note that in our case, the input graph $G$ is a DAG, thus we can compute the shortest paths tree $T$ in $\bigo(|G|)$ time.
We next see that $D_G$ implicitly describes a min-heap $\mathcal{H}(G)$ of all $s$-to-$t$-paths of $G$ ordered by their weights. The root of the heap corresponds to the $r$-path of $D_G$ that consists in the single node $r$ (it therefore corresponds to the shortest $s$-to-$t$-path of $G$). For any node $q$ that represents some $r$-path $P$ of $D_G$, its children represent all $r$-paths obtained from $P$ by adding one edge. Since $D_G$ has maximum degree $4$, the degree of $\mathcal{H}(G)$ is at most $4$ as well. Since the edge weights of $D_G$ are non-negative, the weight of the path corresponding to any node of $\mathcal{H}(G)$ is never larger than the weight of the paths corresponding to its children. Thus, $\mathcal{H}(G)$ satisfies the heap-property. Next, we will see that we can navigate in this heap in constant time per step, and that we can do this in such a way that we can always access the weight and the label of the $s$-to-$t$-path corresponding to the current node of $\mathcal{H}(G)$.\par

Let us assume that we are at a heap node $q$ of $\mathcal{H}(G)$ that represents an $r$-path $P$ of $D_G$. Let us further assume that we have explicitly stored the nodes $x_1, x_2, \ldots, x_\ell$ induced by $P$, so that, for every $i \in [k]$, we have a pointer to the label $L_i =  \labelfunc{x_1} \concatm \labelfunc{e(x_1)} \concatm \ldots \concatm \labelfunc{x_i} \concatm \labelfunc{e(x_i)}$, and the weight $\weightfunc{P}$. In particular, this means that if $P$ corresponds to the $s$-to-$t$-path $P'$ of $G$, then, according to Observations~\ref{weightObservation}~and~\ref{labelObservation}, $\weightfunc{P'} = \weightfunc{P} + d(s, t)$ and $\labelfunc{P'} = L_k \concatm \beta_{w}$, where $w$ is the target of $e(x_k)$ (recall that $\beta_{w}$ is the label of the shortest $w$-to-$t$-path, which we have already precomputed). Consequently, we can access the weight and label of the $s$-to-$t$-path represented by $q$.\par
Appending a single edge to $P$ corresponds to moving from $q$ to one of its children. Removing the last edge of $P$ corresponds to moving from $q$ to its parent node. We have to explain how we can move down or up in $\mathcal{H}(G)$ while maintaining the list of induced nodes, the labels $L_i$ and the weight of  $P$. \par
If we move down by appending a heap edge $e = (y, z)$ to $P$, then we just have to replace $x_k$ with $z$, update $L_k$ to $L_{k-1} \concatm \labelfunc{z} \concatm \labelfunc{e(z)}$, and add $\weightfunc{e}$ to the current weight. If instead we move down by appending a cross edge $e = (y, z)$, then we have to append $x_{k+1} = z$ to the list of induced nodes, we have to set $L_{k + 1}$ to $L_k \concatm \labelfunc{z} \concatm \labelfunc{e(z)}$, and again we add $\weightfunc{e}$ to the current weight. On the other hand, when we move up in $\mathcal{H}(G)$ from $q$ to its parent $p$, then this corresponds to removing the last edge $e = (y, z)$ of $P$. If $e$ is a heap edge, then we have to set $x_k$ to $y$ and we have to change $L_k$ to $L_{k-1} \concatm \labelfunc{y} \concatm \labelfunc{e(y)}$. If $e$ is a cross edge, then we just have to remove $x_k$ and $L_k$. Moreover, in both case we have to subtract $\weightfunc{e}$ from the current weight. Obviously, each of these operation can be done in constant time.

\end{document}